%% file: zaibatsu_main.tex
\documentclass[12pt,a4paper,notitlepage,fleqn]{article}

\usepackage{setspace}
\usepackage[margin=1in]{geometry}
\usepackage[symbol]{footmisc}
\usepackage[T1]{fontenc}
\usepackage[utf8]{inputenc}
\usepackage{authblk}



\title{Wartime Controls, Political Connections, and the Pricing of Zaibatsu Rents in Japan, 1930--1943}%

\author{Keiichi Morimoto$^{a}$ \quad Akihiko Noda$^{b}$\thanks{\scriptsize Corresponding Author. E-mail: anoda@meiji.ac.jp, Tel/Fax. +81-3-3296-2265.} \quad Takenobu Yuki$^{c,d}$

{\scriptsize ${}^{a}$ \it School of Political Science and Economics, Meiji University, 1-1 Kanda-Surugadai, Chiyoda-ku, Tokyo 101-8301, Japan} 

{\scriptsize ${}^{b}$ \it School of Commerce, Meiji University, 1-1 Kanda-Surugadai, Chiyoda-ku, Tokyo 101-8301, Japan} 

{\scriptsize ${}^{c}$ \it School of Commerce, Waseda University, 1-6-1 Nishi-Waseda, Shinjuku-ku, Tokyo 169-0051, Japan}

{\scriptsize ${}^{d}$ \it Graduate School of Economics \& Management, Tohoku University, 27-1 Kawauchi, Aoba-ku, Sendai, Miyagi 980-8576, Japan}}

\date{This Version: \today}

\renewcommand\thefootnote{\arabic{footnote}}

\usepackage{graphicx}

\usepackage[sort]{natbib}%
\usepackage{amsmath,amssymb,amsthm}%
\usepackage{ascmac}%
\usepackage{multirow}%
\usepackage{pdflscape}%
\usepackage{subfigmat}

\usepackage{pifont}%
\usepackage{arydshln}%
\usepackage[format=hang]{caption}
\usepackage[all]{xy}
\usepackage{url}

\PassOptionsToPackage{hyphens}{url}
\usepackage{color}
\usepackage[
            bookmarkstype=toc,
	    colorlinks=true, 
	    pdfborder={0 0 1}, 
	    bookmarks=true, 
	    bookmarksnumbered=true, 
	    citecolor=blue]{hyperref}
\bibpunct{(}{)}{;}{a}{}{,}
\usepackage{here}
\usepackage{booktabs}

\usepackage{mathrsfs}
\usepackage{mathtools}

\usepackage{tikz}
\usetikzlibrary{positioning}

\bibpunct{(}{)}{;}{a}{}{,}

\def\hsymbu#1{\smash{\lower1.7ex\hbox{\huge$#1$}}}

\def\ve #1{{\mbox{\boldmath $#1$}}}

\newtheoremstyle{nopunct}
  {3pt}{3pt}
  {\itshape}
  {}
  {\bfseries}
  {.}
  {0.5em}
  {\thmname{#1}\thmnumber{ #2}\ifx&#3&\else\ \textnormal{(#3)}\fi}

\theoremstyle{nopunct}

\newtheorem{proposition}{Proposition}

\newcommand{\citetapos}[1]{\citeauthor{#1}'s \citeyearpar{#1}}

\newcommand{\ex}{{\mathbb{E}}}
\newcommand{\var}{{\mathbb{V}}{\rm{ar}}}
\newcommand{\cov}{{\mathbb{C}}{\rm{ov}}}

\def\ve #1{{\mbox{\boldmath $#1$}}}

\begin{document}

\begin{titlepage}

\renewcommand{\thepage}{}
\renewcommand{\thefootnote}{\fnsymbol{footnote}}

\maketitle

\vspace{-10mm}

\noindent
\hrulefill

\noindent
{\bfseries Abstract:} This paper examines how wartime economic controls shaped stock-price formation in Japan from 1930 to 1943. We develop a four-portfolio asset-pricing model in which zaibatsu affiliation affects expected payoffs and the translation of valuations into economic scale through lower financing wedges. We then construct daily capitalization-weighted indices and four benchmark portfolios based on a $2\times 2$ sort by zaibatsu affiliation and military orientation. Using a CAPM-AR($p$)-SV event-study framework that allows for serial correlation and stochastic volatility, we show that the model rationalizes capitalization concentration, segmented abnormal returns, delayed cumulative adjustment, regime-risk insulation of zaibatsu portfolios, and zaibatsu-concentrated responses to embedded-rent or group-continuation shocks. The evidence is consistent not with a collapse of semi-strong efficiency, but with institutionally contingent efficiency: stock prices continued to respond to news while capitalizing uneven access to credit, materials, and procurement.\\

\noindent
{\bfseries Keywords:} Zaibatsu; Political Connection; Wartime Control; Event Study; Stochastic Volatility.\\

\noindent
{\bfseries JEL Classification Numbers:} C22; G12; G14; N25.

\noindent
\hrulefill

\end{titlepage}

\bibliographystyle{asa}


\input{zaibatsu_intro}

\input{zaibatsu_hist}

\input{zaibatsu_model}

\input{zaibatsu_frame}

\input{zaibatsu_data}

\input{zaibatsu_empirical}

\input{zaibatsu_conclusion}

\clearpage

\input{zaibatsu_ack}

\input{zaibatsu_main.bbl}

\clearpage

\input{zaibatsu_table}

\end{document}

%% file: zaibatsu_intro.tex
\section{Introduction}\label{zaibatsu_sec1}

In most modern economies, equity markets are understood to be central institutions in long-run economic development because they mobilize savings, facilitate risk sharing, and impose governance discipline on firms and managers. Classic contributions such as \citet{goldsmith1969fsd} and \citet{mckinnon1973mce}, building on \citetapos{schumpeter1911ted} insight that finance and innovation are jointly implicated in economic transformation, emphasize that financial development lowers the cost of capital, relaxes financing constraints, and thereby supports sustained growth. In this view, stock prices do not merely record ex post assessments of firm value. They also summarize dispersed information about future profitability and macroeconomic conditions, thereby helping to direct capital toward more productive uses. More broadly, securities markets matter not only because they finance investment, but also because they generate signals that shape the allocation of resources across firms and sectors.

Yet this finance-and-growth narrative is not institutionally universal. Under wartime mobilization or, more generally, under state-capitalist arrangements, rationing, capital controls, directed credit, procurement priorities, and administrative allocation become binding features of the economic environment, and the informational and allocative roles of equity prices are no longer straightforward. As \citet{musacchio2015nvs} emphasizes, the coexistence of private ownership and extensive state intervention gives rise to settings in which market signals and administrative commands interact in complex ways. More broadly, political-economy research has stressed that the development and functioning of financial markets depend not only on economic fundamentals but also on legal institutions, political coalitions, and distributional conflict (\citet{bordo2012hef,rajan2003tgr}). Once price formation is conditioned by policy rules and political hierarchy, it becomes unclear whether observed market valuations continue to reflect fundamentals in a manner comparable to that in liberal peacetime settings or instead embody a distinct logic of valuation shaped by state priorities and unequal access to policy-induced advantages. This raises a fundamental question: to what extent can equity prices continue to aggregate economically meaningful information when governments intervene pervasively in the allocation of finance and resources?

Japan's wartime economic system offers an especially revealing setting in which to study this question. It constitutes a rare historical ``laboratory'' in which an active stock exchange continued to operate even as state controls over the economy became progressively more stringent. Although many belligerent economies moved toward highly centralized command systems during the 1930s and 1940s (\citet{hara1995nsk}), Japan did not fully abolish market-based allocation mechanisms. Profit incentives remained operative, corporate organization continued to matter, and a formal stock exchange persisted throughout much of the 1930--43 period (\citet{okazaki1994jwe}). At the same time, the state deepened its control over credit, foreign exchange, and strategic materials, culminating in legislation such as the Temporary Funds Adjustment Law, under which industries were classified into priority categories and finance was increasingly directed in accordance with wartime objectives. The resulting regime was therefore neither a liberal market economy nor a pure command economy, but rather a hybrid system in which administrative guidance and market pricing coexisted. For investors, this meant that the valuation of firms depended not only on conventional market fundamentals but also on how state policy shaped access to credit, inputs, and procurement. Japan thus provides a particularly suitable setting for examining whether, and how, equity prices continued to aggregate information about risks and rents under extensive wartime intervention.

Within this hybrid regime, the large family-based business groups known as zaibatsu occupied a particularly consequential position. Their networks of banks, trading companies, and industrial affiliates supported internal capital markets and coordination across sectors. Evidence for Japan indicates that group finance attenuated investment--liquidity sensitivity (\citet{hoshi1991csl}), while theory clarifies the conditions under which internal reallocation can substitute for distorted or incomplete external finance (\citet{almeida2006sbg}). Yet, in wartime Japan, zaibatsu were not simply diversified conglomerates. Because business-group organization was coupled with political access, internal finance could interact with procurement privileges, regulatory protection, and preferential access to scarce resources, thereby generating group-specific rents and incentives for rent seeking (\citet{fisman2001evp}). The major zaibatsu were therefore natural intermediaries of mobilization, able to shift resources across affiliates and to exploit informational advantages arising from their proximity to wartime policymaking.

From a corporate-finance perspective, zaibatsu affiliation therefore had a dual meaning. On the one hand, it denoted an organizational characteristic, namely access to internal capital markets, centralized monitoring, and group-governance structures that shaped firms' financing conditions, operating policies, and risk exposures. On the other hand, under the Temporary Funds Adjustment Law and related wartime controls, it also served as a proxy for privileged political access and priority status in the allocation of credit, materials, and munitions contracts. Historical work emphasizes that the state both relied on and protected the major zaibatsu as key agents of mobilization, extending preferential access to finance, inputs, and procurement while leaving non-zaibatsu firms relatively disadvantaged (\citet{morikawa1970osm,yamamura1964zpz}). Evidence from prewar Japan further suggests that political ties were capitalized in stock prices: firms that established new political connections through elections subsequently experienced significantly higher stock returns (\citet{okazaki2017mei}). Under wartime controls, then, group affiliation may have been priced through more than one channel, not only through compensation for differential exposure to risk, but also through the capitalization of expected rents arising from protected access to strategically allocated resources.

This observation points to a gap in the literature. Studies of wartime or heavily controlled financial markets have generally focused on whether prices responded to military news, political shocks, or policy announcements. By contrast, the literature on business groups has focused on how ownership, control, internal finance, and governance shape firms' cash flows, investment, and valuation. Yet these literatures have, for the most part, developed in parallel. The former rarely differentiates firms according to their access to internal capital markets or political privilege, while the latter seldom examines environments in which state allocation and mobilization are themselves first-order determinants of corporate prospects. Consequently, we still know relatively little about whether and how markets operating under heavy intervention priced the distinctive risks and rents associated with powerful business groups.

This paper addresses that question in the context of wartime Japan. We ask whether Japan's wartime stock market satisfied a form of semi-strong efficiency once heterogeneity in group affiliation and political access is taken into account, whether zaibatsu affiliation carried a wartime return premium or discount beyond standard risk factors, consistent with either compensation for group-specific risk or the capitalization of political rents, and how major wartime policy interventions altered these pricing relations. Our central claim is that wartime markets need not be characterized by a simple collapse of efficiency. Rather, prices may continue to respond systematically to news while \emph{efficiently pricing prevailing institutional distortions} created by policy constraints, mobilization priorities, and political hierarchy.

To organize this argument, we combine a simple theoretical model with a high-frequency empirical design. On the theoretical side, we develop a four-portfolio asset-pricing model in which zaibatsu affiliation affects both expected payoffs and the translation of valuations into economic scale through lower financing wedges. The model is designed to rationalize both three core patterns and two additional sources of event-study heterogeneity. The core patterns are the divergence between relative price performance and market-capitalization dominance, segmented sign patterns in event-time abnormal returns across the four portfolios, and significant cumulative abnormal returns even when day-0 abnormal returns are weak. The additional patterns are cases in which broad regime-risk shocks move non-zaibatsu portfolios while zaibatsu portfolios remain muted, and cases in which cumulative responses are concentrated in zaibatsu portfolios because events revise state-embedded rents or group-level continuation values.

Our empirical framework is based on a capital asset pricing model with $p$-th-order autoregressive residuals and stochastic volatility, or CAPM-AR($p$)-SV. This specification is designed to relax assumptions embedded in conventional constant-parameter event studies that are especially fragile in wartime daily data, particularly homoskedastic innovations and serially independent residuals. In our setting, stochastic volatility captures volatility clustering and thick-tailed uncertainty, while AR($p$) residuals absorb persistence arising from market frictions and non-synchronous trading. Combined with portfolio heterogeneity and event windows tied to institutional interventions, this design permits an assessment of whether security returns during 1930--43 incorporated both the additional risks and the expected rents associated with zaibatsu membership in a manner consistent with institutionally contingent efficiency.

The paper makes four related contributions. First, it contributes to the literature on wartime market efficiency by showing that inferences of ``inefficiency'' derived from constant-parameter linear models may be highly sensitive to assumptions that are unlikely to hold in historical wartime daily data, including parameter stability, homoskedasticity, and serial independence of disturbances. Second, it contributes to the business-group literature by distinguishing two valuation channels that are often conflated: internal capital markets and group governance, on the one hand, and politically mediated access to priority credit, materials, and procurement, on the other. Third, it contributes theoretically by developing a four-portfolio asset-pricing model tailored to wartime Japan. The model links zaibatsu affiliation to expected payoffs, financing wedges, and the translation of valuations into economic scale, while also decomposing wartime policy news into regime-risk and embedded-rent components. This structure rationalizes not only capitalization concentration, segmented event-time responses, and delayed cumulative adjustment, but also muted zaibatsu reactions to broad regime shocks and zaibatsu-concentrated cumulative responses to events that revise state-embedded rents or group-level continuation values. Fourth, it contributes methodologically by combining this theoretical structure with an event-study design explicitly tailored to historical daily data in which volatility is time-varying and residual dependence is a first-order feature.

The remainder of the paper is organized as follows. Section \ref{zaibatsu_sec2} situates our analysis historically by describing Japan's interwar market microstructure, the wartime control regime, and the role of zaibatsu networks in finance, procurement, and information transmission. Section~\ref{zaibatsu_sec3} presents the model and derives its implications for capitalization dynamics, segmented event-time responses, delayed cumulative adjustment, regime-risk insulation, and zaibatsu-concentrated responses to embedded-rent shocks. Section \ref{zaibatsu_sec4} introduces the CAPM-AR($p$)-SV event-study framework and presents the abnormal-return measures used to assess day-0 and cumulative event responses. Section \ref{zaibatsu_sec5} constructs daily capitalization-weighted market indices and forms four benchmark portfolios based on a $2 \times 2$ sort by zaibatsu affiliation and military orientation. Section \ref{zaibatsu_sec6} reports the empirical results. Section \ref{zaibatsu_sec7} concludes and considers the broader implications of controlled markets for the reproduction of institutional asymmetries.

\subsection*{Related Literature}

A substantial literature in finance and economic history argues that the development of securities markets is closely associated with long-run economic growth. Early cross-country studies, including \citet{atje1993smd}, \citet{king1993fgs}, and \citet{levine1996smd,levine1998smb}, show that measures of market depth and liquidity, such as capitalization, turnover, and the breadth of tradable claims, are positively associated with subsequent per capita income growth even after controlling for banking-sector development. The mechanism emphasized in this literature is not merely that financial markets expand the volume of available funds, but that they improve the efficiency with which capital is allocated across firms and sectors. In this spirit, \citet{beck2000fsg} and \citet{wurgler2000fma} argue that developed financial systems enhance productivity by directing resources toward higher-return uses. Historical research broadly supports this view in relatively liberal settings. \citet{neal1991rfc} and \citet{goetzmann2001nhd} document how securities markets in Europe and North America financed canals, railways, and heavy industry while widening opportunities for diversification and risk sharing, and \citet{rousseau2003fse,rousseau2005efm}, together with \citet{schularick2010fii}, emphasize the importance of securities-market deepening in the emergence of modern growth regimes. Because most of this evidence is drawn from settings in which governments only partially displaced market allocation, however, it provides incomplete guidance for understanding how financial markets function when states intervene extensively in the distribution of finance, materials, and investment opportunities.

This limitation has motivated a growing literature on war, intervention, and financial markets. One strand examines whether asset prices continued to aggregate information under extreme political and military stress. Event studies of U.S.\ Civil War-era greenback and gold markets and of British securities show that prices reacted promptly to news about military fortunes and political developments even under severe uncertainty (\citet{brown2000tpu,willard1996tpc}). Related work on Nazi Germany, Vichy France, and the United States during World War I\hspace{-1.2pt}I indicates that politically connected firms earned excess returns and that bond and equity prices incorporated changing beliefs about regime durability, military prospects, and the distribution of wartime advantage (\citet{choudhry2010ww2,ferguson2008bhv,oosterlinck2003bml}). These studies suggest that wartime markets need not cease to be informative. Rather, what is being priced in such settings is often inseparable from political power, state strategy, and the expected consequences of military events.

A second strand asks more directly whether asset prices remained efficient in the conventional time-series sense under war or heavy intervention. For the United States, weak-form tests generally find that return dynamics are broadly consistent with random-walk behavior, albeit with temporary deterioration around major crises (\citet{ito2016esm,kim2011srp}). For Japan, by contrast, the evidence is more divided. Using a 1930s sample and standard linear time-series methods, \citet{bassino2015iet} argue that prewar Tokyo stock prices departed from weak-form efficiency. By contrast, using a longer 1924--1943 sample together with time-varying econometric methods, \citet{hirayama2025mtv} conclude that although efficiency deteriorated episodically under the increasingly militarized economy of the late 1930s, return dynamics on the Tokyo Stock Exchange were, on average, broadly consistent with informational efficiency. Semi-strong evidence points in a similar direction. \citet{kataoka2004b} and \citet{suzuki2022gke} show that prewar Japanese equity prices responded systematically to firm-specific and natural-disaster news, while evidence from Japanese rice-futures markets reported by \citet{ito2016meg,ito2018fpe} suggests that the degree of efficiency depended closely on the extent and design of government intervention. Taken together, this body of work suggests that market efficiency in controlled economies is best understood as institutionally contingent: prices may react to news even when the mapping from fundamentals to valuations is mediated by intervention, regulation, and political hierarchy.

Running alongside these literatures is a distinct body of research on business groups, internal capital markets, and political connections. A central concern in this literature is how group organization alters firms' financing conditions, investment behavior, governance, and valuation. For Japan, \citet{hoshi1991csl} show that group finance attenuated investment--liquidity sensitivity, indicating that affiliation could soften external financing constraints. More generally, \citet{almeida2006sbg} clarify the theoretical conditions under which internal capital markets may substitute for distorted or incomplete external finance. Yet the corporate-finance implications of business groups are not exhausted by the mitigation of frictions. When group organization is combined with concentrated control and political access, it may facilitate not only efficient internal reallocation but also the extraction and capitalization of political rents, as well as minority-shareholder expropriation through intra-group transactions (\citet{baek2006bgt,claessens2002die,fisman2001evp}). For the prewar Japanese case, \citet{okazaki2001rhc} show that zaibatsu-affiliated firms achieved higher profitability than comparable non-zaibatsu firms and that holding-company structures strengthened centralized monitoring and restructuring capacity. Historical scholarship further emphasizes that the wartime Japanese state both relied upon and protected the major zaibatsu as agents of mobilization, extending preferential access to finance, materials, and procurement while leaving non-zaibatsu firms at a relative disadvantage (\citet{morikawa1970osm,yamamura1964zpz}). Evidence from prewar Japan also indicates that political connections were capitalized in stock prices: firms that acquired new political ties through elections subsequently experienced significantly higher returns (\citet{okazaki2017mei}).

Recent work in political science also reinforces the broader view that wartime intervention generated highly uneven consequences across sectors and elites. Analyzing Japanese legislators between 1936 and 1942, \citet{fukumoto2026cms} shows that legislators connected to sanction-exposed sectors became more likely to align with military-backed authoritarian policies after the sanction shock, whereas those tied to procurement-linked sectors did not exhibit a comparable shift. Although the outcome of interest is legislative alignment rather than asset pricing, this evidence is complementary to the present analysis because it indicates that wartime controls and external shocks redistributed economic vulnerability, political leverage, and access to state favor in systematically heterogeneous ways rather than uniformly across the economy.

On the theoretical side, our framework is also related to asset-pricing models in which prices reflect not only fundamentals but also higher-order beliefs about how other traders react to public signals. In particular, it is in the spirit of beauty-contest models such as \citet{allen2006bci}, but applied to a historically specific environment in which wartime demand, administrative controls, internal capital markets, and staggered processing of public news jointly shape asset prices. Our objective is not to develop a general theory of information aggregation, but rather to build a tractable structure that matches the empirical four-portfolio design and the event-study patterns documented below.

Despite their evident conceptual proximity, these strands of research have rarely been brought together in a unified empirical framework. Studies of wartime or heavily controlled markets have generally asked whether asset prices responded to military news, policy announcements, or macroeconomic disturbances, but they have rarely distinguished firms according to their access to internal capital markets or politically mediated privilege. Conversely, the business-group literature has emphasized how ownership, control, and internal finance shape corporate outcomes, but has seldom examined settings in which state allocation and wartime mobilization are themselves first-order determinants of firm performance. The same is true of adjacent political-economy work, which highlights the uneven political consequences of sanctions and procurement but does not examine how such asymmetries were capitalized in security prices. The result is a gap in our understanding of how markets under heavy intervention priced the risks and rents associated with powerful business groups.

This paper is positioned at precisely that intersection. Relative to the literature on wartime market efficiency, it introduces heterogeneity in group affiliation and political access directly into the empirical design. Relative to the literature on business groups, it examines an environment in which state allocation, administrative control, and mobilization priorities are central determinants of firm-level prospects. Relative to adjacent political-economy research, it shifts attention from legislative alignment to market valuation and abnormal returns. Relative to the theoretical asset-pricing literature, it develops a wartime four-portfolio framework tailored to the institutional structure of prewar and wartime Japan. Wartime Japan therefore provides a setting in which zaibatsu affiliation can be interpreted simultaneously as an organizational characteristic and as a proxy for privileged access to scarce, state-controlled resources. By analyzing how these features were capitalized in equity returns during 1930--43, the paper seeks to clarify how internal finance, political access, wartime intervention, and the diffusion of public information jointly shaped the operation of Japanese equity markets. The model below formalizes this intersection by distinguishing baseline wartime shocks from regime-risk and embedded-rent shocks, allowing the same market to generate non-zaibatsu-dominated responses to broad political, diplomatic, or monetary shocks and zaibatsu-concentrated responses to revisions in state-embedded rents.

%% file: zaibatsu_hist.tex
\section{Historical Background}\label{zaibatsu_sec2}

To rigorously evaluate the informational efficiency of the Japanese stock market during World War I\hspace{-1.2pt}I, we must first dismantle the historical misconception that the ``wartime economy'' was a command economy in which market mechanisms were entirely suspended. If capital allocation had been determined solely by bureaucratic fiat, asset pricing models such as the CAPM-AR($p$)-SV would be theoretically misspecified, and stock prices would function merely as administrative shadow prices rather than as signals of future cash flows. However, the historical reality was far more complex and amenable to econometric analysis. Between 1930 and 1943, Japan's political economy operated as a hybrid system, that is, a controlled market economy. In this regime, extensive state intervention via capital adjustment laws coexisted with profit-maximizing corporate behavior and active trading in the secondary market. This hybridity introduced specific informational frictions into the market mechanism.

In a frictionless market, stock prices aggregate dispersed information regarding firm fundamentals. In the context of wartime Japan, however, ``fundamentals'' were determined not only by consumer demand or technological advantage, but increasingly by state policy, specifically raw material quotas, credit prioritization, and military procurement contracts. Consequently, the nature of the information that the market had to aggregate shifted from the commercial to the political. A firm's revenue-generating capacity became, to a significant extent, a function of its regulatory classification and its political connectivity, rather than of its market competitiveness alone. This section reconstructs the institutional architecture of the period to provide the theoretical micro-foundations for the empirical analysis that follows. We argue that the observed heterogeneity in abnormal returns, across zaibatsu and non-zaibatsu firms and across military and non-military firms, was not a random anomaly, but reflected the market's differential pricing of institutional privilege, policy exposure, and organizational capacity. By detailing the sophistication of the prewar market, the mechanics of the 1937 capital controls, and the internal capital markets of the zaibatsu, we show that the wartime stock market did not simply malfunction; rather, it continued to process the rents and constraints generated by the military-industrial complex.

\subsection{Historical Development: Market Microstructure and Sophistication}

The validity of testing the Efficient Market Hypothesis (EMH) in 1930s Japan depends on whether the market possessed sufficient depth and liquidity to facilitate arbitrage. Contrary to the view of prewar Japan as a financial backwater, quantitative evidence indicates that by the mid-1930s the Tokyo Stock Exchange (TSE) and Osaka Stock Exchange (OSE) were among the most sophisticated in the world, often surpassing the U.S.\ market in depth and turnover. By 1936, the Japanese stock market had achieved a scale that challenges the stereotype of a bank-centered financial system. \citet{okazaki2005gdp} show that prewar Japan compared favorably even with the United States on standard indicators of stock-market development, and that before the outbreak of the Sino-Japanese War total stock trading value across Japanese exchanges amounted to roughly 2.0 to 2.5 times GNP. Such liquidity is a prerequisite for semi-strong efficiency, implying that the market possessed the mechanical capacity to incorporate new information into prices rapidly. The structural pillar of this liquidity was the short-term clearing transaction system (tanki seisan torihiki), institutionalized after the 1922 Exchange Act reform (\citet{kobayashi2012hjs}). Functioning in some respects like a modern futures market, this system did not require immediate physical delivery of certificates. In the Tokyo Stock Exchange's short-term clearing market, settlement could be deferred within one month and, through offsetting cross-transactions, positions could in practice be rolled over for longer periods. \citet{hirayama2022smf} shows that, from June 1924 to August 1943, the monthly average non-deliverable settlement ratio in short-term clearing transactions was 82.0\%, indicating that most trades were settled by differences rather than by physical delivery. This microstructure had two important implications for information processing.

First, by allowing speculation on price movements without the capital drag of physical delivery, the system amplified the market's sensitivity to news. Speculators could expand positions quickly in response to military reports or policy shifts, facilitating faster price discovery than would have been possible in a pure spot market. Second, the equity market was deeply intertwined with the banking sector. Loans secured by shares were common, constituting roughly 40\% of ordinary bank lending by the late Taisho period. This linkage created a feedback loop in which stock market volatility immediately affected bank balance sheets and credit availability.

The sophistication of the market was also evident in listing behavior. Regression analyses suggest that firms rationally selected listing on the TSE based on cost-benefit calculations related to firm size and maturity. The ``home bias'' observed in the Meiji era, whereby eastern firms were more likely to list in Tokyo and western firms in Osaka, had weakened substantially by the 1930s, especially after the 1918 reform of the Tokyo Stock Exchange, indicating the growing national integration of the market (\citet{okazaki2005gdp}). Therefore, the period we analyze (1930--1943) represents not the chaos of an immature market, but the stress testing of a highly liquid and sophisticated financial system under exogenous state intervention. The structural breaks observed after 1937 were not failures of market capacity, but the forced rewriting of the market's objective function by the state. The market remained an active information-processing mechanism, but the signals it processed increasingly reflected administratively imposed priorities alongside conventional market fundamentals.

\subsection{Wartime Financial Controls: The Bureaucratization of Price Signals}

The transition to a wartime economy was defined by the enactment of the Temporary Funds Adjustment Act (Rinji Shikin Chosei Ho) in September 1937 (\citet{mof1957hfm}). This legislation serves as the institutional pivot of our analysis. From the perspective of information economics, the Act substituted market price signals, such as interest rates and expected returns, with bureaucratic quantity signals, namely allocation priority. The Act mandated government approval for all major corporate financing activities, including stock issuance, bond flotation, and long-term borrowing. To administer this rationing, the Ministry of Finance and the Bank of Japan established the Standard for Adjustment of Industrial Funds, which classified industries into three tiers based on strategic utility, as shown in Table \ref{age_war_tab1}:
\begin{center}
(Table \ref{age_war_tab1} around here)
\end{center}
Category A ({\it{Ko}}-shu) comprised strategic war industries, including metal mining, heavy machinery, aircraft, and shipbuilding. Their regulatory status was approval in principle.'' These firms enjoyed prioritized access to capital. New share issuance was streamlined, and debt was often underwritten by the Industrial Bank of Japan. The economic implication was clear: for Category A firms, the cost of capital was artificially suppressed and funding volume was effectively guaranteed. Insolvency risk was, to a considerable extent, socialized by the state. Category B ({\it{Otsu}}-shu) comprised intermediate industries essential for civilian subsistence or indirect military support, such as textiles for uniforms and agriculture. Their regulatory status was case-by-case screening.'' Financing depended on project-specific approval, which introduced substantial bureaucratic friction and uncertainty. Category C ({\it{Hei}}-shu) comprised non-essential civilian industries, including luxury goods, entertainment, and services. Their regulatory status was ``refusal in principle.'' These firms were effectively excluded from official capital markets. Capital for capacity expansion was prohibited, and their future cash flows were structurally capped by severe financing constraints.

Standard EMH assumes that capital flows to firms with the highest risk-adjusted returns. The Temporary Funds Adjustment Act severed this link. A highly profitable textile firm in a low-priority sector could be unable to raise capital, whereas an inefficient munitions plant in a high-priority sector could receive abundant low-cost funding. This created a structural asymmetry in information processing. For Category A firms, rising stock prices reflected not only profitability but also state endorsement. Military campaigns translated more directly into sales growth when the state guaranteed the financing needed to meet demand. For low-priority civilian firms, by contrast, positive macroeconomic news could paradoxically become negative news, insofar as it implied prolonged controls and a further squeeze on the civilian sector. Recent evidence on legislators’ corporate ties in late interwar Japan points to the same kind of asymmetry: sectors exposed to sanctions and sectors linked to wartime procurement did not respond in a uniform way to external and state-induced shocks, underscoring that wartime intervention did not create a single, homogeneous hierarchy of winners and losers even among politically salient industries (\citet{fukumoto2026cms}).

Market mechanisms were further distorted by the April 1939 Ordinance on Corporate Profit Distribution and Finance, which capped dividends. This effectively weakened the link between earnings growth and cash payouts, turning equities closer to quasi-fixed-income claims (see \href{https://at-noda.com/appendix/zaibatsu_appendix.pdf}{Online Appendix} A.1). With upside payouts constrained, stock prices became less sensitive to ordinary positive earnings news and more sensitive to regulatory news. However, zaibatsu firms could preserve and reallocate group-level value through internal capital markets, holding-company structures, intercorporate shareholding, and retained earnings in ways that were less available to many independent firms. At the same time, the market consequences of these controls were not uniform even within the military sector, as the empirical results below indicate substantial variation across organizational types and portfolio definitions.

The model below summarizes these institutional channels through several state variables. The baseline policy-control state captures the intensity of administrative intervention. The regime-risk state captures broad political, monetary, diplomatic, and sanctions-related shocks that changed firms' exposure to external dependence and reallocation risk. The embedded-rent state captures news that revised the value of being closely inserted into the state-led strategic order.

\subsection{The Rise of the Zaibatsu: Information Hubs and Rent Extraction}

In this hybrid system, the zaibatsu, especially Mitsui, Mitsubishi, and Sumitomo, were more than mere conglomerates. They were institutional solutions to the frictions of the wartime economy. We conceptualize the zaibatsu as privileged information intermediaries that internalized two markets that were increasingly impaired externally: the capital market and the market for political information. Theoretical literature suggests that business groups create value by reallocating capital among member firms when external markets are imperfect (\citet{almeida2006sbg,hoshi1991csl}). Wartime capital rationing created an extreme form of such imperfection.

While independent Category B firms faced liquidity shortages because of bureaucratic rejection, zaibatsu affiliates could access capital through their group's main bank or holding company (honsha). Zaibatsu banks acted as liquidity buffers, absorbing deposits and underwriting government bonds. The holding company could transfer excess cash from cash-rich subsidiaries, such as mature mining operations, to high-priority war industries, such as aircraft manufacturing, thereby bypassing the regulatory bottlenecks created by the Temporary Funds Adjustment Act. This internal capital market allowed zaibatsu firms to maintain investment levels closer to the optimum than their credit-constrained independent rivals. This organizational advantage plausibly contributed to the relative resilience of some zaibatsu portfolios, although the empirical results also show that wartime abnormal performance cannot be reduced to a simple and uniform zaibatsu premium. Beyond capital, the zaibatsu also possessed privileged access to information (see \href{https://at-noda.com/appendix/zaibatsu_appendix.pdf}{Online Appendix} A.2). The wartime economy was characterized by extreme information asymmetry, as the state held private information regarding strategy, procurement, and allocation. Zaibatsu executives were often embedded in the relevant decision-making apparatus.

Personnel networks illustrate this fusion of zaibatsu and state. Figures such as Shigeaki Ikeda of Mitsui, who served as Finance Minister and Governor of the Bank of Japan, exemplify this overlap. Zaibatsu leaders often headed the control associations (toseikai) that governed their own industries. In a censored and tightly controlled market, zaibatsu affiliation functioned as a credible signal of state backing. Investors could reasonably infer that a Mitsubishi factory would continue to receive raw materials even when competitors were idled. This proximity also facilitated rent seeking. \citet{okazaki2017mei} show that prewar firms with political connections earned significantly higher returns. In the total war economy, where the state's allocation power was greatly expanded, this political premium could be expected to widen, although not necessarily in a mechanically identical way across all affiliated firms.

The relationship between the zaibatsu and the military was symbiotic. The military required the large industrial scale that only the major zaibatsu could provide, for example in aircraft production or shipbuilding. In return, the state created a policy environment that reduced downside risk for selected strategic producers. As \citet{okazaki1994jwe} shows, wartime authorities increasingly accepted the need to preserve profit incentives, revised controlled prices, introduced subsidies or compensation in some priority sectors, and, under the Military Company Law, backed designated firms with privileged finance. Major military suppliers therefore faced a more protected demand and finance environment than ordinary civilian firms. Firms such as Mitsubishi Heavy Industries recorded dramatic profit growth during the early war years. The market therefore valued such firms partly on the basis of the present value of this more protected demand. As a result, zaibatsu shares could appear relatively defensive in selected wartime episodes. They offered a bundle of attributes unavailable to many independent firms: priority access to capital via internal capital markets and high-priority classification, priority access to raw materials via political connections, and protected demand via military procurement.

This structural position is precisely why our CAPM-AR($p$)-SV analysis distinguishes between zaibatsu and non-zaibatsu firms. To treat the market as a homogeneous pool would obscure the fundamental reality that zaibatsu firms operated under a distinct set of economic constraints and opportunities.

In summary, the Japanese stock market between 1930 and 1943 was not an undeveloped frontier, but a sophisticated mechanism operating under the tightening grip of state control. The hybrid nature of the economy meant that market signals were not extinguished, but redirected. The Temporary Funds Adjustment Act bifurcated the cost of capital according to strategic utility. Within this tiered system, the zaibatsu emerged as privileged institutional intermediaries, using internal capital markets, holding-company structures, and political connectivity to mitigate regulatory constraints and preserve group-level value. The event study that follows therefore does not merely trace price movements. It examines how the market processed a complex web of institutional rents, financing advantages, and regulatory risks in real time. The abnormal returns documented below should thus be interpreted not as evidence of a uniform collapse of market rationality, but as evidence that wartime investors priced state-sponsored asymmetries in heterogeneous ways across zaibatsu and non-zaibatsu firms and across military and non-military firms.

This historical structure also motivates the optional group-continuation state used in the \href{https://at-noda.com/appendix/zaibatsu_appendix.pdf}{Online Appendix} A.2. Zaibatsu rents were not confined to contemporaneous operating profits or to military subsidiaries. They could also be capitalized through holding-company dividends, intercorporate shareholding, retained earnings, and group-level financial claims.

%% file: zaibatsu_model.tex
\section{A Model of Wartime Controls and Zaibatsu Rents}
\label{zaibatsu_sec3}

This section develops a partial-equilibrium asset-pricing model designed to match the four empirical portfolios used in our analysis: zaibatsu military, zaibatsu non-military, non-zaibatsu military, and non-zaibatsu non-military firms. The purpose is not to provide a complete macroeconomic theory of wartime Japan. Rather, the model formalizes the mechanisms needed to interpret five empirical features of the data. The first three are the divergence between relative price performance and market-capitalization dominance, segmented sign patterns in event-time abnormal returns, and significant cumulative abnormal returns even when day-0 abnormal returns are weak. The remaining two are events in which non-zaibatsu portfolios react while zaibatsu portfolios remain close to zero, and events in which cumulative responses are concentrated in zaibatsu portfolios, especially the zaibatsu military portfolio.

The model therefore has two layers. The baseline layer contains wartime policy controls, war demand, civilian substitution, finance and issuance, and staged public diffusion. This layer explains the three core empirical patterns. The second layer decomposes wartime policy news into regime-risk shocks and embedded-rent shocks. It explains why some events mainly move non-zaibatsu firms, while others mainly revise the value of being embedded in the state-led strategic order.

\subsection{Trading Environment and Portfolio Claims}

Let $\mathcal G=\{\mathrm{ZM},\mathrm{ZN},\mathrm{NM},\mathrm{NN}\}$, where $\mathrm{ZM}$ denotes zaibatsu military firms, $\mathrm{ZN}$ denotes zaibatsu non-military firms, $\mathrm{NM}$ denotes non-zaibatsu military firms, and $\mathrm{NN}$ denotes non-zaibatsu non-military firms. For each $g\in\mathcal G$, let $Z_g\in\{0,1\}$ indicate zaibatsu affiliation, and let $M_g\in\{0,1\}$ indicate military orientation.

Time is discrete, $t=1,2,\ldots,T,T+1$. There is one risk-free asset with gross return $R>1$ and four risky assets. Risky asset $g$ is a representative traded claim to one unit of installed capital in group $g$. Let $p_t=(p_{g,t})_{g\in\mathcal G}\in\mathbb{R}^4$ denote the vector of ex-dividend market values per unit of installed capital, and let $d_t=(d_{g,t})_{g\in\mathcal G}\in\mathbb{R}^4$ denote the vector of per-unit dividends.

A unit-mass continuum of traders is born in every period, trades when young, and consumes when old. A young trader $i$ at date $t$ chooses risky holdings $q_{i,t}\in\mathbb{R}^4$ and a risk-free position $b_{i,t}\in\mathbb{R}$. If initial wealth is $W_{i,t}$, the budget equations are
\[
W_{i,t}=q_{i,t}^{\top}p_t+b_{i,t}, \qquad
C_{i,t+1}=q_{i,t}^{\top}(p_{t+1}+d_{t+1})+Rb_{i,t}.
\]
Substituting the first equation into the second gives
\[
C_{i,t+1}=RW_{i,t}+q_{i,t}^{\top}\bigl(p_{t+1}+d_{t+1}-Rp_t\bigr).
\]

Preferences are CARA,
\[
U(C)=-\exp\!\left(-\frac{C}{\gamma}\right), \qquad \gamma>0,
\]
where $\gamma$ is the common risk-tolerance parameter. Portfolio positions are unrestricted.

The risky assets are in noisy net supply, $s_t\sim N(0,\Sigma_{s,t})$, where $\Sigma_{s,t}$ is symmetric positive definite. The risk-free asset is in perfectly elastic supply, so only the risky-asset markets have to clear. The variable $s_t$ is a noise-trader imbalance in the representative-claim market. It is distinct from installed capital: noisy supply prevents prices from becoming fully revealing, whereas installed capital determines firm scale and therefore market capitalization. In this normalization, $p_{g,t}$ is a Tobin's-$q$-type valuation per unit of installed capital, so installed capital affects capitalization without entering the short-run representative-claim supply term.

Within each period, the timing is as follows. First, traders observe the information relevant for pricing and form beliefs about future payoffs. Second, they choose portfolio positions and risky-asset markets clear. Third, managers observe current equity prices and choose investment and issuance, which determine next period's installed capital. Finally, next-period dividends, prices, and state realizations occur.

With CARA utility and conditionally Gaussian payoffs, risky-asset demand is linear in expected excess payoffs. Let $\bar E_t[\cdot]$ denote the date-$t$ cohort-average expectation, and let $\Sigma_t$ denote the common conditional covariance matrix of next-period total payoffs. Then optimal risky demand is
\[
q_{i,t}=\gamma\Sigma_t^{-1}\bigl(E_{i,t}[p_{t+1}+d_{t+1}]-Rp_t\bigr),
\]
and market clearing yields the pricing recursion
\begin{equation}
p_t=\frac{1}{R}\bar E_t[p_{t+1}+d_{t+1}]
-\frac{1}{R\gamma}\Sigma_t s_t.
\label{eq:theory_price_recursion}
\end{equation}

Equation \eqref{eq:theory_price_recursion} is the central asset-pricing relation. Current prices depend on cohort-average expectations of next-period prices and dividends. Repeated substitution therefore loads current prices on the expected path of future payoffs. As in standard CARA-normal linear equilibria, sufficiently large noise-trader imbalances can push the affine price formula into economically irrelevant regions with negative prices. The \href{https://at-noda.com/appendix/zaibatsu_appendix.pdf}{Online Appendix} provides sufficient restrictions on coefficient magnitudes and on a compact event-window domain for states and noisy supply that guarantee strictly positive prices and strictly positive installed capital on that domain.

\subsection{Wartime State Variables and Group-level Payoffs}

The baseline model has three aggregate state variables. The policy-control state $a_t$ measures the intensity of wartime administrative intervention, including tighter capital controls, directed credit, dividend regulation, and other policy measures that affect groups differently. The war-demand state $w_t$ measures expected procurement, military expenditure, and the value of state-backed demand for military production. The civilian-substitution state $c_t$ measures the relative gains enjoyed by civilian firms that benefit from scarcity, substitution, or reallocation away from the most war-exposed sectors.

Each state follows an AR(1) process,
\[
a_{t+1}=\rho_a a_t+\zeta^a_{t+1}, \qquad
w_{t+1}=\rho_w w_t+\zeta^w_{t+1}, \qquad
c_{t+1}=\rho_c c_t+\zeta^c_{t+1},
\]
where $0\le \rho_a,\rho_w,\rho_c<1$ and the innovations are mean-zero Gaussian shocks.

Because the model is focused on event windows and medium-run shifts in capitalization rather than on long-run balanced growth, permanent group heterogeneity is absorbed into group constants rather than modeled as a separate latent fundamental. Thus $\bar d_g$ and $\bar p_g$ should be read as reduced-form group-specific average cash flows and average continuation values.

Per-unit dividends satisfy
\[
d_{g,t+1}=\bar d_g+\lambda_g a_t+\psi_g w_t+\nu_g c_t+u_{g,t+1},
\]
where $u_{g,t+1}$ is a mean-zero idiosyncratic dividend shock. Terminal liquidation values satisfy
\[
p_{g,T+1}=\bar p_g+\chi_g a_T+\varphi_g w_T+\omega_g c_T.
\]
The coefficients $\lambda_g$ and $\chi_g$ measure exposure to policy controls, $\psi_g$ and $\varphi_g$ measure exposure to war demand, and $\nu_g$ and $\omega_g$ measure exposure to civilian substitution.

To keep the mapping to the empirical $2\times 2$ design transparent, the war-demand loadings are parameterized as
\[
\psi_g=\psi^M M_g+\psi^{ZM}Z_gM_g, \qquad
\varphi_g=\varphi^M M_g+\varphi^{ZM}Z_gM_g,
\]
with $\psi^M>0$, $\varphi^M\ge 0$, and $\psi^{ZM},\varphi^{ZM}\ge 0$. This parameterization says that military firms benefit from war demand and that zaibatsu military firms may enjoy an additional rent.

The civilian-substitution loadings are parameterized as
\[
\nu_g=\nu^N(1-M_g)+\nu^{ZN}Z_g(1-M_g), \qquad
\omega_g=\omega^N(1-M_g)+\omega^{ZN}Z_g(1-M_g).
\]
This parameterization makes the substitution state relevant only for civilian portfolios and allows civilian zaibatsu and civilian independent firms to respond differently to the same event.

The policy-control loadings $(\lambda_g,\chi_g)$ are left flexible, because tighter controls could raise expected rents for some groups but tighten effective financing or material constraints for others. In particular, the model allows independent military firms to benefit from war demand and still be hurt by tighter administrative controls.

\subsection{Information, Beliefs, and Price Formation}

At each date, all traders observe noisy public signals about the three aggregate states,
\[
z_t^a=a_t+\xi_t^a, \qquad
z_t^w=w_t+\xi_t^w, \qquad
z_t^c=c_t+\xi_t^c,
\]
where the signal noises are mean-zero Gaussian and mutually independent. Let $\hat a_t$, $\hat w_t$, and $\hat c_t$ denote the cohort-average posterior means that summarize the information relevant for pricing at date $t$. These posterior means may combine the public signals with information revealed through the equilibrium price.

To keep the model close to the empirical four-portfolio design, we work with an affine equilibrium under restricted information aggregation. The working assumption is that $(\hat a_t,\hat w_t,\hat c_t,s_t)$ are sufficient pricing statistics and that any additional cross-portfolio inference that prices might reveal is absorbed into the affine coefficients. This is the same sense in which the model is partial equilibrium: the theory is designed to generate disciplined pricing implications for the four observed portfolios rather than to solve the full information aggregation problem across all Japanese firms.

The model can also accommodate an optional leakage channel. If one wants to study pre-event drift, traders connected to zaibatsu military firms may observe an additional lead signal before a major war announcement,
\[
x^{\mathrm{lead}}_{i,\tau-1}=w_{\tau}+\nu_{i,\tau-1}, \qquad
\eta^{\mathrm{lead}}_{\mathrm{ZM}}>0,
\]
where $\eta^{\mathrm{lead}}_{\mathrm{ZM}}$ denotes the precision of the lead signal and no other group receives such a signal. This optional extension is not needed for the main propositions, but it is useful when the empirical application turns to pre-event drift and possible information leakage.

Finally, the linear-demand aggregation step requires a common conditional covariance matrix of next-period total payoffs at date $t$. In the affine equilibrium considered below, this matrix is not chosen independently of prices. It is the covariance induced by the same affine price process, the Gaussian state innovations, dividend shocks, and next-period noisy supply. This is the covariance-consistency restriction used in the \href{https://at-noda.com/appendix/zaibatsu_appendix.pdf}{Online Appendix}.

\subsection{Financing Wedges, Issuance, and Capitalization}

The key ingredient behind the divergence between price performance and market capitalization is the distinction between the per-unit stock price and the total quantity of installed capital. Let $k_{g,t}>0$ denote installed capital, or equivalently firm scale measured in units of productive capacity, in group $g$ at date $t$. The representative traded claim is normalized to one unit of installed capital, so $p_{g,t}$ is the market value of one unit of productive capacity and market capitalization equals that valuation times the installed capital stock. The market capitalization of group $g$ is
\[
\mathrm{mc}_{g,t}=p_{g,t}k_{g,t}, \qquad
\sigma_{g,t}=
\frac{\mathrm{mc}_{g,t}}{\sum_{h\in\mathcal G}\mathrm{mc}_{h,t}}.
\]
Thus a group's market-cap share depends on both valuation and scale. Current market capitalization uses the installed capital stock $k_{g,t}$ in place before current-period issuance, while new issuance changes $k_{g,t+1}$ and therefore the next period's capitalization.

The representative manager of group $g$ chooses an investment rate $i_{g,t}$ that determines next period's installed capital according to
\[
k_{g,t+1}=k_{g,t}(1-\delta+i_{g,t}), \qquad 0<\delta<1.
\]
Managers take current stock prices as given when choosing investment. Over the short horizon emphasized in the event study, per-unit dividends are treated as independent of current scale, so capital accumulation changes total capitalization through firm scale rather than through immediate feedback into per-unit payoffs.

Investment is financed by equity issuance and internal funds. Wartime controls create a financing wedge $\mu_{g,t}$ between the market value of an additional unit of equity and the effective replacement cost of an additional unit of capacity. Zaibatsu affiliation lowers this wedge because internal capital markets soften external financing frictions. Military priority lowers it because directed credit and procurement priorities improve access to funds. Zaibatsu military firms enjoy both advantages.

Formally, the manager solves
\[
\max_{i\in\mathbb{R}}
\left\{
p_{g,t}i-(1+\mu_{g,t})i-\frac{i^2}{2\Gamma_g}
\right\},
\qquad \Gamma_g>0,
\]
which yields the investment rule
\[
i_{g,t}=\Gamma_g(p_{g,t}-1-\mu_{g,t}).
\]
The parameter $\Gamma_g$ governs how aggressively a high stock price translates into new capital formation. A larger $\Gamma_g$ means that a given valuation difference creates a larger capitalization response. The control variable $i_{g,t}$ is a net expansion rate, so negative values represent contraction or incomplete replacement of depreciated capital rather than literal negative gross investment.

In the baseline model, the financing wedge is
\[
\mu_{g,t}=\bar\mu-\mu^ZZ_g-\mu^M M_g a_t-\mu^{ZM}Z_gM_ga_t,
\]
where $\mu^Z,\mu^M,\mu^{ZM}>0$. The term $\mu^Z$ captures the internal capital-market advantage of all zaibatsu firms. The term $\mu^M M_g a_t$ captures the fact that wartime administrative controls channel finance toward military production. The interaction term $\mu^{ZM}Z_gM_ga_t$ captures the extra financing advantage of politically connected zaibatsu military firms.

The capital accumulation equation, the investment rule, and the financing wedge allow capitalization to move differently from per-unit prices. This mechanism rationalizes why zaibatsu military firms can gain market-capitalization share even when non-zaibatsu military firms appear strong in relative price indices.

\subsection{Event Timing and Staged Public Diffusion}

The paper studies short event windows around public wartime announcements. Let $\tau$ denote an event date and let
\[
\Delta \hat a_\tau:=\hat a_\tau-\hat a_{\tau^-}, \qquad
\Delta \hat w_\tau:=\hat w_\tau-\hat w_{\tau^-}, \qquad
\Delta \hat c_\tau:=\hat c_\tau-\hat c_{\tau^-},
\]
be the announcement-induced revisions in the pricing states, where $\tau^-$ denotes beliefs immediately before the event.

To explain delayed capitalization, consider a two-day event window consisting of dates $\tau$ and $\tau+1$. A fraction $\pi\in(0,1]$ of traders are early processors who understand the announcement on date $\tau$, while the remaining fraction $1-\pi$ are late processors who fully process it only on date $\tau+1$. No additional shocks arrive inside the two-day event window. The announcement is public from the outset. What differs across traders is the speed with which they translate that public information into payoff-relevant beliefs.

This staged diffusion mechanism is local to the event window. It does not replace the underlying noisy rational expectations asset market. Its purpose is narrower: it provides a transparent explanation for the empirical cases in which day-0 abnormal returns are weak but cumulative abnormal returns over a short window are statistically important.

\subsection{Affine Pricing and Event-study Decomposition}

We focus on affine equilibria in which per-unit prices are linear in the pricing states and in noisy supply. Under the sufficient conditions stated in the \href{https://at-noda.com/appendix/zaibatsu_appendix.pdf}{Online Appendix}, there exists a unique affine price function of the form
\[
p_t=A_t+B_t\hat a_t+C_t\hat w_t+D_t\hat c_t
-\frac{1}{R\gamma}\Sigma_t s_t,
\]
where $A_t$, $B_t$, $C_t$, and $D_t$ are date-specific coefficient vectors in $\mathbb{R}^4$.

The baseline event-study decomposition follows immediately from repeated substitution in the pricing recursion \eqref{eq:theory_price_recursion}. Let $H=T+1-\tau$ denote the remaining horizon at the event date. Then the local event-time decomposition is
\begin{equation}
AR_{g,\tau}
\approx
\beta^a_{g,\tau}\Delta \hat a_\tau
+\beta^w_{g,\tau}\Delta \hat w_\tau
+\beta^c_{g,\tau}\Delta \hat c_\tau.
\label{eq:theory_ar_map}
\end{equation}
The associated exposure coefficients are
\begin{equation}
\beta^x_{g,\tau}
=
\frac{1}{p_{g,\tau^-}}
\left[
A_H(\rho_x)\ell_g^x+B_H(\rho_x)m_g^x
\right],
\qquad x\in\{a,w,c\},
\label{eq:theory_beta}
\end{equation}
where
\[
\ell_g^a=\lambda_g,\quad \ell_g^w=\psi_g,\quad \ell_g^c=\nu_g,
\qquad
m_g^a=\chi_g,\quad m_g^w=\varphi_g,\quad m_g^c=\omega_g,
\]
and
\[
A_H(\rho)=\sum_{j=1}^H R^{-j}\rho^{j-1},
\qquad
B_H(\rho)=R^{-H}\rho^{H-1}.
\]
Thus the empirical abnormal return for each portfolio can be read as the weighted sum of its exposures to policy controls, war demand, and civilian substitution.

\subsection{Decomposing Wartime Policy News}

The baseline model is designed for the three main empirical targets: capitalization concentration, segmented four-way sign patterns, and delayed capitalization. The event study, however, also contains two additional patterns. In some events, broad political, diplomatic, monetary, or sanctions-related shocks move the non-zaibatsu portfolios while the zaibatsu portfolios remain close to zero. In other events, cumulative responses are concentrated in zaibatsu portfolios, especially the zaibatsu military portfolio. To capture these patterns, we decompose wartime policy news into two additional components.

The first component, $a^R_t$, is a regime-risk state. It summarizes broad political, diplomatic, sanctions, and monetary-regime shocks that alter external dependence, access to outside finance, and vulnerability to administrative reallocation. The second component, $a^E_t$, is an embedded-rent state. It summarizes news that changes the continuation value of firms whose profits depend on close insertion into the state-led strategic order. These states follow
\[
a^R_{t+1}=\rho_R a^R_t+\zeta^R_{t+1}, \qquad
a^E_{t+1}=\rho_E a^E_t+\zeta^E_{t+1},
\qquad 0\le \rho_R,\rho_E<1.
\]

The extended per-unit dividend process is
\[
d^{\mathrm{ext}}_{g,t+1}
=
\bar d_g+\lambda^R_g a^R_t+\lambda^E_g a^E_t
+\psi_g w_t+\nu_g c_t+u_{g,t+1},
\]
and the extended terminal value is
\[
p^{\mathrm{ext}}_{g,T+1}
=
\bar p_g+\chi^R_g a^R_T+\chi^E_g a^E_T
+\varphi_g w_T+\omega_g c_T.
\]
The coefficients $(\lambda^R_g,\chi^R_g)$ measure exposure to regime risk, while $(\lambda^E_g,\chi^E_g)$ measure exposure to embedded rents.

The financing wedge may also load on regime risk:
\[
\mu^{\mathrm{ext}}_{g,t}
=
\bar\mu-\mu^ZZ_g-\mu^W M_gw_t-\mu^{ZW}Z_gM_gw_t+\tau^R_g a^R_t,
\]
where $\mu^W,\mu^{ZW}>0$. The term $\tau^R_g$ is a reduced-form measure of external dependence. The natural historical case is
\[
|\tau^R_{\mathrm{ZM}}|<|\tau^R_{\mathrm{NM}}|,
\qquad
|\tau^R_{\mathrm{ZN}}|<|\tau^R_{\mathrm{NN}}|,
\]
so that regime shocks move financing conditions and continuation values more strongly for non-zaibatsu firms than for zaibatsu firms. Relative to the baseline financing wedge above, the extended wedge uses $w_t$ rather than $a_t$ in the military-priority terms in order to tie financing advantages more directly to the war-demand environment. The regime-risk term captures a distinct channel: exposure to external dependence and vulnerability to broad political, diplomatic, and monetary-regime shocks.

The corresponding extended event-study decomposition is
\begin{equation}
AR^{\mathrm{ext}}_{g,\tau}
\approx
\beta^R_{g,\tau}\Delta \hat a^R_\tau
+\beta^E_{g,\tau}\Delta \hat a^E_\tau
+\beta^w_{g,\tau}\Delta \hat w_\tau
+\beta^c_{g,\tau}\Delta \hat c_\tau.
\label{eq:theory_ext_ar_map}
\end{equation}
The additional exposure coefficients are
\[
\beta^R_{g,\tau}
=
\frac{
A_H(\rho_R)\lambda^R_g+B_H(\rho_R)\chi^R_g
}{p_{g,\tau^-}},
\qquad
\beta^E_{g,\tau}
=
\frac{
A_H(\rho_E)\lambda^E_g+B_H(\rho_E)\chi^E_g
}{p_{g,\tau^-}},
\]
while $\beta^w_{g,\tau}$ and $\beta^c_{g,\tau}$ are defined as in \eqref{eq:theory_beta}.
The affine-equilibrium argument carries over once the sufficient statistic is enlarged to include
$(\hat a^R_t,\hat a^E_t)$.

An extension of the model allows for an optional zaibatsu group continuation state, denoted by $G_t$. This state captures revisions in the value of internal group claims, holding-company dividends, cross-shareholdings, and group-level continuation values. It is most useful for cases in which the zaibatsu non-military portfolio responds in cumulative terms even when a direct military-demand interpretation is weak. In this case, the extended event-study decomposition is augmented by one additional term, $\beta^G_{g,\tau}\Delta \hat G_\tau$, with loadings concentrated in zaibatsu portfolios.

The model remains partial equilibrium. It endogenizes capitalization and short-run issuance, but it does not solve a full macroeconomic wartime general equilibrium. This restriction is deliberate. The purpose is to match the empirical four-portfolio design and to isolate the channels that matter for the financial-history evidence: wartime demand, administrative controls, civilian substitution, internal capital markets, regime-risk insulation, embedded rents, and slow diffusion of public news.

\subsection{Theoretical Implications for the Empirical Patterns}

The following propositions correspond directly to the empirical patterns documented below.

\vspace{3mm}

\begin{proposition}
\label{prop:cap_concentration}
Suppose that prices are strictly positive, that installed capital is strictly positive, and that the capital accumulation equation and investment rule above hold on the event window under consideration. Define the gross price growth factor
\[
\mathcal G^p_{g,t+1}:=\frac{p_{g,t+1}}{p_{g,t}}.
\]
If, for every $h\neq \mathrm{ZM}$,
\[
\frac{\mathcal G^p_{\mathrm{ZM},t+1}}{\mathcal G^p_{h,t+1}}
\cdot
\frac{
1-\delta+\Gamma_{\mathrm{ZM}}
\bigl(p_{\mathrm{ZM},t}-1-\mu_{\mathrm{ZM},t}\bigr)
}{
1-\delta+\Gamma_h
\bigl(p_{h,t}-1-\mu_{h,t}\bigr)
}
>1,
\]
then the zaibatsu military market-cap share rises from $t$ to $t+1$:
\[
\sigma_{\mathrm{ZM},t+1}>\sigma_{\mathrm{ZM},t}.
\]
In particular, it is possible that non-zaibatsu military firms outperform zaibatsu military firms in per-unit prices,
\[
\mathcal G^p_{\mathrm{NM},t+1}>\mathcal G^p_{\mathrm{ZM},t+1},
\]
while the market-cap share of zaibatsu military firms still rises, provided the finance and issuance term in the condition above is sufficiently favorable to $\mathrm{ZM}$.
\end{proposition}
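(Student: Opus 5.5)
The plan is to write market-capitalization growth for each group as the product of a price growth factor and a scale growth factor, and then compare these products across groups. Strict positivity of prices and capital makes all the ratios below well defined and positive.

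\emph{Step 1: capitalization growth.} Substituting the investment rule $i_{g,t}=\Gamma_g(p_{g,t}-1-\mu_{g,t})$ into the accumulation equation gives
\[
k_{g,t+1}=k_{g,t}\,\mathcal G^k_{g,t+1}, \qquad \mathcal G^k_{g,t+1}:=1-\delta+\Gamma_g\bigl(p_{g,t}-1-\mu_{g,t}\bigr).
\]
Since $k_{g,t}>0$ and $k_{g,t+1}>0$ on the window, $\mathcal G^k_{g,t+1}>0$. Hence
\[
\mathrm{mc}_{g,t+1}=\mathrm{mc}_{g,t}\,\mathcal G_{g,t+1}, \qquad \mathcal G_{g,t+1}:=\mathcal G^p_{g,t+1}\,\mathcal G^k_{g,t+1}>0.
\]
The hypothesis of the proposition is then exactly $\mathcal G_{\mathrm{ZM},t+1}>\mathcal G_{h,t+1}$ for every $h\neq\mathrm{ZM}$.

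\emph{Step 2: share comparison.} Write $\sigma_{\mathrm{ZM},t+1}=\mathrm{mc}_{\mathrm{ZM},t}\mathcal G_{\mathrm{ZM},t+1}\big/\sum_h \mathrm{mc}_{h,t}\mathcal G_{h,t+1}$. The denominator is a weighted sum of capitalization growth factors. It satisfies
\[
\sum_h \mathrm{mc}_{h,t}\mathcal G_{h,t+1}\le \mathcal G_{\mathrm{ZM},t+1}\sum_h \mathrm{mc}_{h,t},
\]
and the inequality is strict because at least one $h\neq\mathrm{ZM}$ has $\mathrm{mc}_{h,t}>0$. Equivalently, $\sigma_{\mathrm{ZM},t+1}/\sigma_{\mathrm{ZM},t}=\mathcal G_{\mathrm{ZM},t+1}/\sum_h\sigma_{h,t}\mathcal G_{h,t+1}$. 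The denominator here is a strict convex combination of the $\mathcal G_{h,t+1}$, with positive weight on some $h$ whose factor is strictly below $\mathcal G_{\mathrm{ZM},t+1}$. So the ratio exceeds one.

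\emph{Step 3: the ``in particular'' claim.} This is a possibility statement, so it suffices to exhibit parameter values. Take $\mathcal G^p_{\mathrm{NM},t+1}>\mathcal G^p_{\mathrm{ZM},t+1}$. Then make the scale factor of $\mathrm{ZM}$ large relative to the others, using $\mu^Z,\mu^{ZM}>0$ and $a_t>0$ so that $\mu_{\mathrm{ZM},t}<\mu_{h,t}$, possibly together with a larger $\Gamma_{\mathrm{ZM}}$. Choose these so that
\[
\frac{\mathcal G^k_{\mathrm{ZM},t+1}}{\mathcal G^k_{h,t+1}}>\frac{\mathcal G^p_{h,t+1}}{\mathcal G^p_{\mathrm{ZM},t+1}}\quad\text{for all } h\neq\mathrm{ZM}.
\]
The hypothesis then holds, and Step 2 applies.

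The main obstacle is mild: checking that the chosen wedge values keep every $\mathcal G^k_{h,t+1}$ positive, consistent with positive capital on the window, while still delivering the required ratios. I would handle this by fixing the prices, taking $a_t>0$ within the appendix domain, and scaling $\mu^{ZM}$ or $\Gamma_{\mathrm{ZM}}$ upward. This raises only $\mathcal G^k_{\mathrm{ZM},t+1}$ and leaves the other groups' scale factors positive.
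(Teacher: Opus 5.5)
Your proposal is correct and takes essentially the route the paper's appendix argument should follow (the appendix proof itself is not included in the text above). You factor each group's capitalization growth as $\mathcal G^p_{g,t+1}\bigl(1-\delta+\Gamma_g(p_{g,t}-1-\mu_{g,t})\bigr)$, use positivity of capital to read the hypothesis as $\mathrm{ZM}$ having the strictly largest capitalization growth factor, and conclude that its share rises because aggregate capitalization growth is a strict convex combination of the group factors; your constructive handling of the ``in particular'' clause is also sound, since $\mu_{g,t}$ and $\Gamma_g$ enter only the investment rule and not the affine price function, so raising $\mu^{ZM}$ (with $a_t>0$) or $\Gamma_{\mathrm{ZM}}$ enlarges only the $\mathrm{ZM}$ scale factor while leaving prices and the other groups' scale factors unchanged.
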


\begin{proof}
See the \href{https://at-noda.com/appendix/zaibatsu_appendix.pdf}{Online Appendix} A.3.2.
\end{proof}

Proposition \ref{prop:cap_concentration} formalizes the logic behind the divergence between relative price performance and market-capitalization dominance. Per-unit prices summarize the value of one unit of installed capital. Market capitalization depends on that valuation and on the scale of installed capital. Once wartime controls lower the financing wedge most strongly for zaibatsu military firms, they can translate valuations into scale more effectively than independent rivals.

\vspace{3mm}

\begin{proposition}
\label{prop:sign_segmentation}
Suppose the local event-study decomposition \eqref{eq:theory_ar_map} applies. For notational simplicity, write $\beta_g^x$ for the event-specific coefficient $\beta^x_{g,\tau}$. Assume the state exposures satisfy
\begin{align*}
&\beta_{\mathrm{ZM}}^w>0,\qquad \beta_{\mathrm{ZM}}^a\ge 0,\qquad \beta_{\mathrm{ZM}}^c=0,\\
&\beta_{\mathrm{ZN}}^a<0,\qquad \beta_{\mathrm{ZN}}^w=0,\qquad \beta_{\mathrm{ZN}}^c\le 0,\\
&\beta_{\mathrm{NM}}^a<0<\beta_{\mathrm{NM}}^w,\qquad \beta_{\mathrm{NM}}^c=0,\\
&\beta_{\mathrm{NN}}^c>0,\qquad \beta_{\mathrm{NN}}^w=0,\qquad \beta_{\mathrm{NN}}^a\le 0.
\end{align*}
Then the following hold.

\begin{itemize}
\item[(i)] If an escalation event satisfies
\[
\Delta \hat a_\tau>0,\qquad
\Delta \hat w_\tau>0,\qquad
\Delta \hat c_\tau>0,
\]
and
\begin{align*}
&\beta_{\mathrm{NN}}^c\Delta \hat c_\tau>
|\beta_{\mathrm{NN}}^a|\Delta \hat a_\tau,\\
&|\beta_{\mathrm{NM}}^a|\Delta \hat a_\tau>
\beta_{\mathrm{NM}}^w\Delta \hat w_\tau,
\end{align*}
then
\[
AR_{\mathrm{ZM},\tau}>0,\qquad
AR_{\mathrm{ZN},\tau}<0,\qquad
AR_{\mathrm{NM},\tau}<0,\qquad
AR_{\mathrm{NN},\tau}>0.
\]
This is the 1939-style segmented sign pattern.

\item[(ii)] If a setback event satisfies
\[
\Delta \hat a_\tau<0,\qquad
\Delta \hat w_\tau<0,\qquad
\Delta \hat c_\tau>0,
\]
and
\begin{align*}
&|\beta_{\mathrm{ZN}}^a\Delta \hat a_\tau|>
|\beta_{\mathrm{ZN}}^c\Delta \hat c_\tau|,\\
&\beta_{\mathrm{NN}}^c\Delta \hat c_\tau>
|\beta_{\mathrm{NN}}^a\Delta \hat a_\tau|,
\end{align*}
then
\[
AR_{\mathrm{ZM},\tau}<0,\qquad
AR_{\mathrm{ZN},\tau}>0,\qquad
AR_{\mathrm{NN},\tau}>0.
\]
If, in addition,
\[
|\beta_{\mathrm{NM}}^a\Delta \hat a_\tau|
\le
|\beta_{\mathrm{NM}}^w\Delta \hat w_\tau|,
\]
then the non-zaibatsu military reaction is nonpositive.
\end{itemize}
\end{proposition}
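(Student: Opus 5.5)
The plan is to treat the local decomposition \eqref{eq:theory_ar_map} as an exact identity on the event window. Each $AR_{g,\tau}$ is then the sum of three terms $\beta^x_g\Delta\hat x_\tau$ with $x\in\{a,w,c\}$. For each part and each portfolio $g\in\mathcal G$, I would sign every term from the sign restrictions on the $\beta$'s and the stated signs of the belief revisions. A portfolio then falls into one of two cases. In the first, all nonzero terms share a sign and at least one is strict, so the conclusion is immediate. In the second, two terms have opposite signs, and the stated dominance inequality decides the sign of the sum. No earlier proposition is needed; Proposition \ref{prop:cap_concentration} concerns capitalization and plays no role here.

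For part (i), take $\Delta\hat a_\tau,\Delta\hat w_\tau,\Delta\hat c_\tau>0$.
\begin{itemize}
\item[] ZM: $AR_{\mathrm{ZM},\tau}=\beta^w_{\mathrm{ZM}}\Delta\hat w_\tau+\beta^a_{\mathrm{ZM}}\Delta\hat a_\tau$, where the first term is strictly positive and the second nonnegative, so the sum is positive.
\item[] ZN: $\beta^a_{\mathrm{ZN}}\Delta\hat a_\tau<0$ and $\beta^c_{\mathrm{ZN}}\Delta\hat c_\tau\le 0$, so the sum is negative.
\item[] NM: $AR_{\mathrm{NM},\tau}=\beta^w_{\mathrm{NM}}\Delta\hat w_\tau-|\beta^a_{\mathrm{NM}}|\Delta\hat a_\tau$, which is negative by the second stated inequality.
\item[] NN: $AR_{\mathrm{NN},\tau}=\beta^c_{\mathrm{NN}}\Delta\hat c_\tau-|\beta^a_{\mathrm{NN}}|\Delta\hat a_\tau$, which is positive by the first stated inequality.
\end{itemize}

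For part (ii), take $\Delta\hat a_\tau<0$, $\Delta\hat w_\tau<0$ and $\Delta\hat c_\tau>0$.
\begin{itemize}
\item[] ZM: $\beta^w_{\mathrm{ZM}}\Delta\hat w_\tau<0$ and $\beta^a_{\mathrm{ZM}}\Delta\hat a_\tau\le 0$, so the sum is negative.
\item[] ZN: $\beta^a_{\mathrm{ZN}}\Delta\hat a_\tau>0$ because both factors are negative, while $\beta^c_{\mathrm{ZN}}\Delta\hat c_\tau\le0$. The first stated inequality says the positive term dominates in absolute value, so the sum is positive.
\item[] NN: $\beta^c_{\mathrm{NN}}\Delta\hat c_\tau>0$ and $\beta^a_{\mathrm{NN}}\Delta\hat a_\tau\ge 0$, so positivity follows even without the stated inequality. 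I would remark that the inequality is then sufficient but redundant.
\item[] NM: $AR_{\mathrm{NM},\tau}=|\beta^a_{\mathrm{NM}}\Delta\hat a_\tau|-|\beta^w_{\mathrm{NM}}\Delta\hat w_\tau|$, since the first term is positive and the second negative. The additional hypothesis makes this nonpositive.
\end{itemize}

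The only genuine obstacle is that \eqref{eq:theory_ar_map} holds only up to the symbol $\approx$, with higher-order terms from the $1/p_{g,\tau^-}$ normalization and the noisy-supply term omitted. The cleanest course is to state explicitly that the proposition is about the first-order (linearized) abnormal return, taken as exact as the proposition's hypothesis "the local decomposition applies" permits. If a remainder must be handled, I would add the requirement that the approximation error be smaller than the strict margins in the strict inequalities. These margins are positive in every strict case. For the boundary (nonpositive NM) case in (ii), I would note that the claim holds only to first order.
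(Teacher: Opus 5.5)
Your proof is correct and takes essentially the approach the paper's proof must take, since the statement leaves no real alternative: substitute the sign restrictions and the signs of the belief revisions into the local decomposition \eqref{eq:theory_ar_map} read as an equality, sign each portfolio's terms, and use the stated dominance inequalities wherever two terms conflict. Your remark that the $\mathrm{NN}$ inequality in part (ii) is redundant is also correct, because $\beta^a_{\mathrm{NN}}\Delta\hat a_\tau\ge 0$ in that case.
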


\begin{proof}
See the \href{https://at-noda.com/appendix/zaibatsu_appendix.pdf}{Online Appendix} A.3.3.
\end{proof}

Proposition \ref{prop:sign_segmentation} is where the civilian-substitution state matters most. Without $c_t$, the two civilian portfolios tend to move together. Once $c_t$ is added, a positive war event can hurt the civilian zaibatsu portfolio because tighter controls dominate, while helping the independent civilian portfolio because substitution or scarcity rents dominate.

\vspace{3mm}

\begin{proposition}
\label{prop:staged_diffusion}
Suppose the local event-study decomposition \eqref{eq:theory_ar_map} applies and consider a two-day event experiment over dates $\tau$ and $\tau+1$ in which no new shocks arrive inside the window and the announcement induces revisions in next-period payoff components. Let 
\[
\Theta_g
:=
\beta^a_{g,\tau}\Delta \hat a_\tau
+\beta^w_{g,\tau}\Delta \hat w_\tau
+\beta^c_{g,\tau}\Delta \hat c_\tau
\]
denote the full-information impact for group $g$. Suppose a fraction $\pi\in[0,1]$ of traders are early processors and update on date $\tau$, while the remaining fraction $1-\pi$ updates on date $\tau+1$. Then, in the local approximation,
\[
AR_{g,\tau}=\pi\Theta_g,\qquad
AR_{g,\tau+1}=(1-\pi)\Theta_g,\qquad
CAR_{g,[\tau,\tau+1]}=\Theta_g.
\]
Thus the day-0 abnormal return can be small even when the short-window cumulative abnormal return is large. In particular, if empirical critical values satisfy
\[
|\pi\Theta_g|<c_{AR}
\qquad \text{but} \qquad
|\Theta_g|>c_{CAR},
\]
then the event study records an insignificant day-0 abnormal return together with a significant short-window cumulative abnormal return.
\end{proposition}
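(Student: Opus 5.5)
The plan is to work entirely inside the local (first-order) approximation behind \eqref{eq:theory_ar_map}, treating the price as linear in cohort-average beliefs. Because the pricing recursion \eqref{eq:theory_price_recursion} involves the cohort-average expectation $\bar E_t$, the average posterior means $(\hat a_t,\hat w_t,\hat c_t)$ enter the affine price function linearly. So the first step is to write the average revision on each date as a population-weighted mixture of early and late processors. On date $\tau$, early processors (mass $\pi$) hold the full revisions $(\Delta\hat a_\tau,\Delta\hat w_\tau,\Delta\hat c_\tau)$, while late processors (mass $1-\pi$) still hold their pre-event beliefs. The cohort-average revision on date $\tau$ is therefore $\pi$ times the full revision. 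On date $\tau+1$ everyone has processed the announcement, and no new shocks arrive inside the window, so the cumulative average revision equals the full revision.

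The second step substitutes these mixtures into the affine price function $p_t=A_t+B_t\hat a_t+C_t\hat w_t+D_t\hat c_t-\frac{1}{R\gamma}\Sigma_t s_t$. The pre-event beliefs, the normal-return component, and the noisy supply are all absorbed by the benchmark model, so abnormal returns pick up only the belief-driven price changes, scaled by $1/p_{g,\tau^-}$. Using the exposure coefficients $\beta^x_{g,\tau}$ from \eqref{eq:theory_beta}, the day-$\tau$ abnormal return is $\sum_x \beta^x_{g,\tau}\,\pi\Delta\hat x_\tau=\pi\Theta_g$. The day-$(\tau+1)$ abnormal return is the incremental revision $(1-\pi)\Delta\hat x_\tau$ priced with the same coefficients, giving $(1-\pi)\Theta_g$. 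Summing the two gives $CAR_{g,[\tau,\tau+1]}=\Theta_g$.

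The final claim then follows from the triangle of definitions. If $|\pi\Theta_g|<c_{AR}$, the day-0 test fails to reject. If $|\Theta_g|>c_{CAR}$, the CAR test rejects. This pair is consistent whenever $\pi$ is small enough, for instance $\pi<c_{AR}/|\Theta_g|$.

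The main obstacle is justifying the use of the same $\beta^x_{g,\tau}$ on day $\tau+1$, even though the horizon shrinks from $H$ to $H-1$ and the base price changes from $p_{g,\tau^-}$ to $p_{g,\tau}$. I would handle this in two ways. First, the statement explicitly says ``in the local approximation,'' so changes in $A_H(\rho)$, $B_H(\rho)$ and in the base price contribute only second-order terms (products of the revision with a one-period coefficient change or with $\pi\Theta_g$). Second, I would use the hypothesis that the announcement revises next-period payoff components, so the relevant loading is evaluated once at the event. I would state explicitly that these second-order terms are dropped. A smaller point is that ignoring the AR($p$)/SV dynamics of the benchmark requires noting that the event does not alter the normal-return model within the window, which is exactly the ``no new shocks'' assumption.
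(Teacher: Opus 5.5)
Your overall argument matches the paper's. You write the date-$\tau$ cohort-average expectation as a $\pi$-weighted mix of early processors (full revision) and late processors (pre-event beliefs). Linearity of the affine price in the average posterior then gives $AR_{g,\tau}=\pi\Theta_g$. You let everyone have processed by $\tau+1$ with no new shocks, and add the two days. The critical-value clause is immediate, and your condition $\pi<c_{AR}/|\Theta_g|$ is correct.

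\textbf{The step you flagged is where the gap is.} Your first fix is false: moving from $A_H,B_H$ to $A_{H-1},B_{H-1}$ does not create only second-order terms.
\begin{itemize}
\item $A_H(\rho)-A_{H-1}(\rho)=R^{-H}\rho^{H-1}$ is an $O(1)$ change in the coefficient. Multiplied by a first-order revision, it gives a first-order term, not a second-order one.
\item Pricing the residual revision as a price change with any betas also misdescribes what happens. When $\rho_x$ is small, almost none of the day-$(\tau+1)$ catch-up is a price change; it arrives as the dividend $d_{g,\tau+1}$.
\end{itemize}

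\textbf{How to repair it.} What makes the answer come out right is an exact identity from the recursion \eqref{eq:theory_price_recursion}. Since $A_H(\rho)=R^{-1}[1+\rho A_{H-1}(\rho)]$ and $B_H(\rho)=R^{-1}\rho B_{H-1}(\rho)$,
\[
R\,p_{g,\tau^-}\,\beta^x_{g,\tau}=\ell^x_g+\rho_x\bigl[A_{H-1}(\rho_x)\ell^x_g+B_{H-1}(\rho_x)m^x_g\bigr].
\]
So the full revision of the next-period payoff $p_{g,\tau+1}+d_{g,\tau+1}$ is $\Delta_g=R\,p_{g,\tau^-}\Theta_g$. This is the role of the hypothesis that the announcement revises next-period payoff components.
\begin{itemize}
\item The date-$\tau$ price capitalizes only $\pi\Delta_g/R$.
\item The realized payoff at $\tau+1$ carries all of $\Delta_g$.
\item Hence the day-$(\tau+1)$ return exceeds the no-news return by $(1-\pi)\Delta_g/p_{g,\tau}=(1-\pi)R\,\Theta_g\,p_{g,\tau^-}/p_{g,\tau}$, up to a risk-premium cross term.
\end{itemize}
The only things the local approximation then drops are $R\approx1$ at daily frequency and $p_{g,\tau}/p_{g,\tau^-}=1+\pi\Theta_g$. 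The latter is the genuinely second-order term you had in mind. If you argue the day-$(\tau+1)$ step through the realized next-period payoff rather than through re-evaluated betas, your proof is complete.
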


\begin{proof}
See the \href{https://at-noda.com/appendix/zaibatsu_appendix.pdf}{Online Appendix} A.3.4.
\end{proof}

The lead-signal extension and staged diffusion are complementary mechanisms. A zaibatsu military lead signal can generate pre-event drift. Proposition \ref{prop:staged_diffusion}, by contrast, concerns the gradual processing of public news after the announcement. Leakage explains why $\mathrm{ZM}$ may begin to move before the event, while staged diffusion explains why a public announcement may generate only a weak day-0 abnormal return but a significant short-window CAR.

\vspace{3mm}

\begin{proposition}
\label{prop:regime_insulation}
Suppose the extended event-study decomposition \eqref{eq:theory_ext_ar_map} applies and an event at date $\tau$ is a pure regime-risk shock:
\[
\Delta \hat a^R_\tau\neq 0,\qquad
\Delta \hat a^E_\tau=\Delta \hat w_\tau=\Delta \hat c_\tau=0.
\]
Define
\[
\Theta^R_g:=\beta^R_{g,\tau}\Delta \hat a^R_\tau.
\]
If there exist constants $0\le \bar\varepsilon_Z<\varepsilon_N$ such that
\[
|\Theta^R_{\mathrm{ZM}}|,\ |\Theta^R_{\mathrm{ZN}}|
\le
\bar\varepsilon_Z,
\qquad
|\Theta^R_{\mathrm{NM}}|,\ |\Theta^R_{\mathrm{NN}}|
\ge
\varepsilon_N,
\]
then both zaibatsu portfolios are approximately insulated while both non-zaibatsu portfolios react. If, in addition,
\[
\Theta^R_{\mathrm{NM}}\Theta^R_{\mathrm{NN}}<0,
\]
then the two non-zaibatsu portfolios move in opposite directions. If the announcement is processed on date $\tau$ and no additional payoff-relevant revision arrives before $\tau+1$, the corresponding short-window cumulative abnormal returns have the same sign pattern.
\end{proposition}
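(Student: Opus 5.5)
The plan is to read everything off the extended local decomposition \eqref{eq:theory_ext_ar_map}, in the same way that Proposition~\ref{prop:sign_segmentation} was read off \eqref{eq:theory_ar_map}. Under a pure regime-risk shock the hypotheses set $\Delta \hat a^E_\tau=\Delta \hat w_\tau=\Delta \hat c_\tau=0$. Substituting these into \eqref{eq:theory_ext_ar_map}, the embedded-rent, war-demand and civilian-substitution terms drop out for every $g\in\mathcal G$, which leaves
\[
AR^{\mathrm{ext}}_{g,\tau}\approx \beta^R_{g,\tau}\Delta \hat a^R_\tau=\Theta^R_g .
\]
The first step is to record this identity portfolio by portfolio, with the approximation sign taken in the same local sense as elsewhere in the section.

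The second step applies the magnitude bounds directly. For $g\in\{\mathrm{ZM},\mathrm{ZN}\}$ we get $|AR^{\mathrm{ext}}_{g,\tau}|\le\bar\varepsilon_Z$. For $g\in\{\mathrm{NM},\mathrm{NN}\}$ we get $|AR^{\mathrm{ext}}_{g,\tau}|\ge\varepsilon_N>\bar\varepsilon_Z$. This strict separation between the two bounds is exactly what ``approximately insulated'' versus ``react'' means here. In the same way, the product condition $\Theta^R_{\mathrm{NM}}\Theta^R_{\mathrm{NN}}<0$ transfers immediately to $AR^{\mathrm{ext}}_{\mathrm{NM},\tau}AR^{\mathrm{ext}}_{\mathrm{NN},\tau}<0$, so the two non-zaibatsu portfolios move in opposite directions. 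It is worth remarking, though it is not needed for the proof, that the natural historical ordering $|\tau^R_{\mathrm Z\cdot}|<|\tau^R_{\mathrm N\cdot}|$ together with the coefficient formulas for $\beta^R_{g,\tau}$ is what motivates the existence of such constants. The statement, however, assumes the constants directly.

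The third step handles the cumulative claim. I will use the staged-diffusion logic of Proposition~\ref{prop:staged_diffusion} in the degenerate case $\pi=1$: the announcement is fully processed on date $\tau$ and no further payoff-relevant revision arrives before $\tau+1$. Then $AR^{\mathrm{ext}}_{g,\tau+1}=(1-\pi)\Theta^R_g=0$, and so $CAR_{g,[\tau,\tau+1]}=\Theta^R_g$, which has the same sign and magnitude pattern as the day-0 responses.

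The main obstacle is making the third step rigorous for the extended model. Proposition~\ref{prop:staged_diffusion} was stated for the baseline decomposition \eqref{eq:theory_ar_map}, so I need the day-$(\tau+1)$ abnormal return to vanish when there is no belief revision. My first attempt is to note that the affine price function, with the sufficient statistic enlarged to include $(\hat a^R_t,\hat a^E_t)$, has day-$(\tau+1)$ abnormal movement driven only by revisions in the pricing states. The mechanical AR(1) decay of the states and the noisy-supply term are absorbed into normal returns by the event-study benchmark, so with zero revisions the extended analogue of Proposition~\ref{prop:staged_diffusion} gives a zero second-day term. If a more general staged version is wanted, the same argument with $\pi\in(0,1]$ gives $CAR_{g}=\Theta^R_g$ directly, and the CAR bounds then follow regardless of how the response is split across the two days.
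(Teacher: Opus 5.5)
Your proposal is correct and follows essentially the same route as the paper's argument. Setting $\Delta \hat a^E_\tau=\Delta \hat w_\tau=\Delta \hat c_\tau=0$ in \eqref{eq:theory_ext_ar_map} gives $AR^{\mathrm{ext}}_{g,\tau}\approx\Theta^R_g$, from which the magnitude bounds and the product condition are read off directly; full processing on date $\tau$ with no further revision then makes the day-$(\tau+1)$ term vanish, so $CAR_{g,[\tau,\tau+1]}=\Theta^R_g$.

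One small correction to your motivating aside: $\beta^R_{g,\tau}$ depends only on $(\lambda^R_g,\chi^R_g)$ and $p_{g,\tau^-}$, not on the wedge loading $\tau^R_g$, which moves investment and capitalization rather than per-unit abnormal returns. The insulation constants are therefore motivated by small zaibatsu regime-risk payoff loadings rather than by the ordering of the $|\tau^R_g|$.
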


\begin{proof}
See the \href{https://at-noda.com/appendix/zaibatsu_appendix.pdf}{Online Appendix} A.3.6.
\end{proof}

Proposition \ref{prop:regime_insulation} formalizes the idea that zaibatsu affiliation buffered firms against broad regime shocks. In the model, that buffering appears as a small regime-risk coefficient $\beta^R_{g,\tau}$ for the zaibatsu portfolios. Economically, the small coefficient summarizes diversification across activities, internal capital markets, access to group finance, and preferential treatment in administrative allocation. By contrast, the non-zaibatsu coefficients are large because their continuation values are more exposed to outside finance, sanctions, and reallocation shocks.

This mechanism is most directly relevant to the early-1930s regime shocks and the 1941 sanctions shock. These events are naturally interpreted as shocks to external dependence and reallocation risk: zaibatsu portfolios are partly insulated by internal capital markets and privileged access, whereas non-zaibatsu portfolios absorb most of the measured abnormal-return response.

\vspace{3mm}

\begin{proposition}
\label{prop:embedded_rent}
Suppose the extended event-study decomposition \eqref{eq:theory_ext_ar_map} applies and consider the two-day staged-diffusion experiment of Proposition \ref{prop:staged_diffusion}. Let the event at date $\tau$ be a pure embedded-rent shock:
\[
\Delta \hat a^E_\tau\neq 0,\qquad
\Delta \hat a^R_\tau=\Delta \hat w_\tau=\Delta \hat c_\tau=0.
\]
Define
\[
\Theta^E_g:=\beta^E_{g,\tau}\Delta \hat a^E_\tau.
\]
Suppose there exist constants $0\le \bar\varepsilon_O<\varepsilon_{\mathrm{ZM}}$ such that
\[
|\Theta^E_{\mathrm{ZM}}|\ge \varepsilon_{\mathrm{ZM}},
\qquad
|\Theta^E_{\mathrm{ZN}}|,\ |\Theta^E_{\mathrm{NM}}|,\ |\Theta^E_{\mathrm{NN}}|
\le \bar\varepsilon_O.
\]
If a fraction $\pi\in(0,1)$ of traders processes the public announcement on date $\tau$, then
\[
AR_{g,\tau}=\pi\Theta^E_g,
\qquad
CAR_{g,[\tau,\tau+1]}=\Theta^E_g.
\]
Hence the day-0 abnormal return for $\mathrm{ZM}$ can be weak while the short-window cumulative abnormal return is large, whereas the other three portfolios remain nearly unresponsive. In particular, if empirical critical values satisfy
\[
|\pi\Theta^E_{\mathrm{ZM}}|<c_{AR},
\qquad
|\Theta^E_{\mathrm{ZM}}|>c_{CAR},
\qquad
|\Theta^E_h|<c_{CAR}
\quad \text{for } h\in\{\mathrm{ZN},\mathrm{NM},\mathrm{NN}\},
\]
then the event study records an insignificant day-0 abnormal return and a significant CAR or SCAR only for $\mathrm{ZM}$.
\end{proposition}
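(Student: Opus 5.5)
The proof combines the extended decomposition \eqref{eq:theory_ext_ar_map} with the staged-diffusion argument behind Proposition \ref{prop:staged_diffusion}, using $a^E$ as the only revised state.

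\textbf{Step 1: reduce to a single term.} Under a pure embedded-rent shock, $\Delta\hat a^R_\tau=\Delta\hat w_\tau=\Delta\hat c_\tau=0$. Every term in \eqref{eq:theory_ext_ar_map} except $\beta^E_{g,\tau}\Delta\hat a^E_\tau$ therefore vanishes. The full-information impact for each $g$ is then $\Theta^E_g$. Because the affine-equilibrium argument carries over with $(\hat a^R_t,\hat a^E_t)$ added to the sufficient statistic, $\Theta^E_g$ plays exactly the role that $\Theta_g$ plays in Proposition \ref{prop:staged_diffusion}.

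\textbf{Step 2: staged diffusion.} On date $\tau$, only the early cohort of mass $\pi$ has revised $\hat a^E$, so the cohort-average posterior shift is $\pi\Delta\hat a^E_\tau$. The pricing recursion \eqref{eq:theory_price_recursion} enters prices through $\bar E_t$, which is linear in the cohort mix. Hence, in the local approximation, the date-$\tau$ price revision is $\pi$ times the full-information revision, giving $AR_{g,\tau}=\pi\Theta^E_g$. On date $\tau+1$ the late processors update and no new shocks arrive, so the remaining fraction $(1-\pi)\Theta^E_g$ is realized. Summing the two days gives $CAR_{g,[\tau,\tau+1]}=\Theta^E_g$. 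This step simply applies Proposition \ref{prop:staged_diffusion} with $\Theta_g$ replaced by $\Theta^E_g$. The one thing to verify is that the proof of that proposition used only linearity of the pricing map in the averaged beliefs, not the particular three-state structure.

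\textbf{Step 3: the portfolio pattern and the critical values.} The magnitude bounds give $|CAR_{\mathrm{ZM}}|\ge\varepsilon_{\mathrm{ZM}}>\bar\varepsilon_O\ge|CAR_h|$ for $h\neq\mathrm{ZM}$. Since $|AR_{g,\tau}|=\pi|\Theta^E_g|$ with $\pi<1$, the day-0 responses are scaled-down versions of these, and the other three portfolios stay within $\bar\varepsilon_O$. Under the stated critical values, the day-0 abnormal return for ZM is insignificant and its CAR is significant. For $h\neq\mathrm{ZM}$, $|AR_{h,\tau}|\le|\Theta^E_h|<c_{CAR}$. Reading ``insignificant'' for the other portfolios as failing the CAR test, and using the standard ordering $c_{CAR}\le c_{AR}$ if day-0 insignificance is also wanted, gives the conclusion. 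If the SCAR is the reported statistic, it inherits the same comparison because it is the CAR divided by a positive scale.

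The main obstacle is Step 2 rather than the inequalities. The local approximation needs the normalization by $p_{g,\tau^-}$ to be essentially unchanged across the two days, and it needs the noisy-supply term $\Sigma_t s_t$ to contribute nothing to abnormal returns inside the window. I would handle both by treating them as part of the local approximation already invoked for Proposition \ref{prop:staged_diffusion}, that is, as second-order in the event window.
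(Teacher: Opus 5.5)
Your proof is correct and follows essentially the same route the proposition is built on: collapse the extended decomposition \eqref{eq:theory_ext_ar_map} to the single $a^E$ term, apply the staged-diffusion result of Proposition~\ref{prop:staged_diffusion} with $\Theta_g$ replaced by $\Theta^E_g$, and read the significance pattern off the stated inequalities. One small caution: your aside calling $c_{CAR}\le c_{AR}$ the ``standard ordering'' has the direction backwards in raw-return units, since the CAR variance adds the next day's conditional variance to the day-0 one and so typically gives $c_{CAR}>c_{AR}$; nothing depends on this, because the proposition claims only CAR insignificance for the other three portfolios.
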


\begin{proof}
See the \href{https://at-noda.com/appendix/zaibatsu_appendix.pdf}{Online Appendix} A.3.7.
\end{proof}

Proposition~\ref{prop:embedded_rent} formalizes a mechanism distinct from Proposition~\ref{prop:regime_insulation}. The event is not a broad regime-risk shock. It is public news about the value of being embedded in the state-led strategic order. Such news can revise the continuation value of the zaibatsu military portfolio even when the other portfolios remain muted. When the diffusion of that public information is gradual, the effect can appear more clearly in CAR and SCAR than in day-0 AR.

This mechanism is most directly relevant to events in which cumulative responses are concentrated in the zaibatsu military portfolio, such as diplomatic, military, or strategic events that revised the value of state-embedded rents.

Taken together, Propositions~\ref{prop:cap_concentration}--\ref{prop:embedded_rent} map the model's mechanisms into the empirical evidence. Financing wedges and issuance explain the divergence between per-unit price performance and zaibatsu military capitalization dominance. Civilian substitution explains why $\mathrm{ZN}$ and $\mathrm{NN}$ need not move together. Staged public diffusion explains why CAR can be informative even when day-0 AR is weak. Regime-risk insulation explains why broad political, diplomatic, or sanctions-related shocks can move non-zaibatsu portfolios while leaving zaibatsu portfolios muted. Embedded-rent shocks explain why some events generate cumulative responses concentrated in $\mathrm{ZM}$, while the optional group-continuation extension covers the narrower case of $\mathrm{ZN}$-concentrated cumulative responses.

%% file: zaibatsu_frame.tex
\section{Empirical Framework}\label{zaibatsu_sec4}

This section outlines the empirical framework used to evaluate how wartime controls and related regime shocks affected stock-price formation in Japan between 1930 and 1943. The analysis is based on an event study design grounded in an augmented version of the Capital Asset Pricing Model (CAPM) that incorporates serial correlation and stochastic volatility in the error structure. We refer to this model as the CAPM-AR($p$)-SV. Standard event studies typically rely on linear factor models such as the market model or the static CAPM, which assume homoskedastic and serially uncorrelated errors (\citet{brown1985uds} and \citet{mackinlay1997ese}). These assumptions are unlikely to hold in historical financial markets, which were subject to limited liquidity, institutional frictions, and frequent government intervention. A more flexible specification is therefore required to accommodate time-varying volatility and autocorrelated returns.

The CAPM-AR($p$)-SV can be viewed as a reduced-form representation of the conditional CAPM (\citet{jagannathan1996ccc} and \citet{lettau2001rcapm}), where second moments of returns vary with evolving information sets. In our setting, the market beta ($\beta$) is assumed to be constant, while the conditional variance follows a latent stochastic process, and the residuals follow an AR($p$) dynamic. These features serve as reduced-form proxies for time-varying risk premia and investor uncertainty.

This structure is well suited to capturing salient properties of daily financial data, including volatility clustering and return autocorrelation (\citet{engle1982arch}, \citet{taylor1986mft} and \citet{ghysels1996sv}). The autoregressive component reflects delayed information processing or market microstructure frictions, while the stochastic volatility component captures changes in perceived risk due to macroeconomic shocks or policy announcements. These dynamics are particularly relevant in prewar and wartime Japan, when market functioning was repeatedly disrupted by regulatory and policy interventions.

By using the CAPM-AR($p$)-SV model as the benchmark for expected returns, we address the autocorrelation and heteroskedasticity that can bias standard abnormal-return estimates. The resulting abnormal returns are therefore less likely to reflect misspecified return dynamics and more likely to capture event-related revisions in expected payoffs, risk exposures, and institutional rents. This framework enables more credible inference on how market prices incorporated wartime controls, regime-risk shocks, and embedded-rent revisions.

\subsection{CAPM with AR($p$) Residuals and Stochastic Volatility}

To capture the key empirical properties of historical daily returns, we employ a return-generating process that extends the standard CAPM by incorporating both autoregressive dynamics and stochastic volatility in the error term. The daily return for portfolio $i$ at time $t$, denoted $R_{i,t}$, is modeled as:
\begin{equation}
 R_{i,t}=\alpha_i+\beta_i R_{m,t}+\varepsilon_{i,t},\quad i = 1,\dots,n;\; t = 1,\dots,T, \label{eq1}
\end{equation}
where $R_{m,t}$ is the market return at time $t$, $\alpha_i$ is the portfolio-specific intercept, $\beta_i$ is the market beta for portfolio $i$, and $\varepsilon_{i,t}$ is the model residual for portfolio $i$ at tiem $t$. 

The residuals $\varepsilon_{i,t}$ follow an AR($p$) process with stochastic volatility:
\begin{align}
\varepsilon_{i,t} &= \rho_{i,1} \varepsilon_{i,t-1}+\cdots+\rho_{i,p} \varepsilon_{i,t-p}+\eta_{i,t}, \label{eq2}\\
\eta_{i,t} &= \sigma_{i,t} \zeta_{i,t}, \quad \zeta_{i,t}\sim\mathcal{N}(0,1), \quad \sigma^2_{i,t} = \exp(h_{i,t}), \label{eq3}\\
h_{i,t} &= \mu_i + \phi_i(h_{i,t-1} - \mu_i) + \sigma_{i,\tau} \xi_{i,t}, \quad \xi_{i,t} \sim \mathcal{N}(0,1), \label{eq4}
\end{align}
Here, $\rho_{i,j} \ (j=1,\ldots,p)$ capture the AR($p$) serial correlation coefficients in residuals for portfolio $i$. The conditional variance of the error term, $\sigma_{i,t}^2$, evolves over time according to a latent log-volatility process $h_{i,t}$. The parameters $\mu_i$, $\phi_i$, and $\sigma_{i,\tau}$ govern, respectively, the unconditional mean of the log-volatility process, persistence, and innovation variance of the log-volatility for each portfolio. Notably, $\sigma_{i,\tau}$ serves as a scale parameter, determining the magnitude of random shocks to the volatility process; larger values of $\sigma_{i,\tau}$ yield greater volatility fluctuations.

To implement this model in practice, it is necessary to adopt an estimation strategy that jointly accounts for the AR($p$) serial correlation and the latent stochastic volatility in the error process. Standard estimation methods, such as OLS or traditional GLS, are inadequate because they do not address these two sources of dependence simultaneously. Therefore, we employ a sequential multi-step estimation procedure, which ensures that both autocorrelation and time-varying volatility are properly accommodated in the estimation of model parameters.

\vspace{1em}
\noindent
\textbf{Estimation Procedure}~~We estimate the model in six steps to account for both serial correlation and time-varying volatility:
\begin{enumerate}
    \item \textbf{Estimate the static CAPM via OLS.}~~Using equation (\ref{eq1}), we obtain preliminary estimates $\hat{\alpha}_i$ and $\hat{\beta}_i$ for each portfolio $i$, and compute the residuals:
    \begin{equation}
    \hat{\varepsilon}_{i,t} = R_{i,t} - \hat{\alpha}_i - \hat{\beta}_i R_{m,t}\nonumber
    \end{equation}
    \item \textbf{Estimate AR($p$) coefficients and compute the residuals.}~~Using the OLS residuals $\hat\varepsilon_{i,t}$, we estimate the autocorrelation coefficients $\rho_{i,j}$ and select the optimal lag order based on the Bayesian Information Criterion (BIC) of \citet{schwarz1978edm}. We then compute the residuals in Equation (\ref{eq2}) as follows:
    \begin{eqnarray}
     \tilde{\eta}_{i,t}=
      \begin{cases}
       \hat{\varepsilon}_{i,t} & (t\leq p)\\
       \hat{\varepsilon}_{i,t}-\hat{\rho}_{i,1}\hat{\varepsilon}_{i,t-1}-\cdots-\hat{\rho}_{i,p}\hat{\varepsilon}_{i,t-p} & (t\geq p+1)
      \end{cases}\nonumber
    \end{eqnarray}
    \item \textbf{Estimate the stochastic volatility component.}~~We fit the stochastic volatility model (equations (\ref{eq2})--(\ref{eq4})) to $\tilde\eta_{i,t}$ using Markov Chain Monte Carlo (MCMC) methods, employing forward-filtering backward-sampling (FFBS) for the latent volatility $h_{i,t}$ and Metropolis-Hastings for the model parameters. The posterior mean of $\exp(h_{i,t})$ provides the estimate of $\sigma_{i,t}^2$.
    \item \textbf{Transform to the quasi-difference form (observations $t\geq p+1$).}~~Applying \citetapos{prais1954tes} procedure, we obtain the following quasi-difference series from Equations (\ref{eq1}) and (\ref{eq2}):
          \begin{equation}
	    R_{i,t}-\sum_{j=1}^p\rho_{i,j}R_{i,t-j}=\left(1-\sum_{j=1}^p\rho_{i,j}\right)\alpha_i+\left(R_{m,t}-\sum_{j=1}^p\rho_{i,j}R_{m,t-j}\right)\beta_i+\eta_{i,t}.\label{eq5}
	  \end{equation}
	  We then replace the coefficients ($\rho_{i,j}$, $\eta_{i,t}$) in Equation (\ref{eq5}) with their estimates ($\hat\rho_{i,j}$, $\tilde\eta_{i,t}$) and express the equation in matrix form as follows:
	  \[
	    \widetilde{\ve{R}}_{i}^{QD_2}=\widetilde{\ve{C}}_{i}^{QD_2}\alpha_{i}+\widetilde{\ve{R}}_{m}^{QD_2}\beta_i+\widetilde{\ve{\eta}}_i^{QD_2},
	  \]
	  where
	  \begin{align}
	    \widetilde{\ve{R}}_{i}^{QD_2}=&
	    \begin{bmatrix}
	      R_{i,p+1} - \sum_{j=1}^p\widehat{\rho}_{i,j}R_{i,p+1-j}\\
	      R_{i,p+2} - \sum_{j=1}^p\widehat{\rho}_{i,j}R_{i,p+2-j}\\
	      \vdots\\
	      R_{i,T} - \sum_{j=1}^p\widehat{\rho}_{i,j}R_{i,T-j}\\
	    \end{bmatrix},\quad
	    \widetilde{\ve{C}}_{i}^{QD_2}=\left(1-\sum_{j=1}^p\widehat{\rho}_{i,j}\right)\ve{1}_{T-p},\nonumber\\
	    \widetilde{\ve{R}}_{m}^{QD_2}=&
	    \begin{bmatrix}
	      R_{m,p+1} - \sum_{j=1}^p\widehat{\rho}_{i,j}R_{m,p+1-j}\\
	      R_{m,p+2} - \sum_{j=1}^p\widehat{\rho}_{i,j}R_{m,p+2-j}\\
	      \vdots\\
	      R_{m,T} - \sum_{j=1}^p\widehat{\rho}_{i,j}R_{m,T-j}\\
	    \end{bmatrix},\quad
	    \widetilde{\ve{\eta}}_i^{QD_2}=
	    \begin{bmatrix}
	      \widetilde{\eta}_{i,p+1}\\
	      \widetilde{\eta}_{i,p+2}\\
	      \vdots\\
	      \widetilde{\eta}_{i,T}\\
	    \end{bmatrix}.\nonumber
	  \end{align}
	  Here, $\ve{1}_{T-p}$ denotes a $(T-p)\times 1$ vector of ones. The time-varying variance is then given by:
	  \begin{equation}
	   \var(\widetilde{\ve{\eta}}_i^{QD_2})=\widetilde{\ve{\Sigma}}_i^{QD_2}=\begin{bmatrix}
	    \tilde\sigma_{i,p+1}^2 & 0 & \cdots & 0\\
	    0 & \tilde\sigma_{i,p+2}^2 & \cdots & 0\\
	    \vdots & \vdots & \ddots & \vdots\\
	    0 & 0 & \cdots & \tilde\sigma_{i,T}^2\\
	  \end{bmatrix}.\nonumber
	  \end{equation}
    \item \textbf{Transform to the quasi-difference form (observations $t<p$).}~~Define the $p$-dimensional state vector:
    \[
      \ve{\varepsilon}_{i,t}=
      \begin{bmatrix}
	\varepsilon_{i,t}\\
	\varepsilon_{i,t-1}\\
	\vdots\\
	\varepsilon_{i,t-p+1}\\
      \end{bmatrix}
    \]
    For the elements of $\ve{\varepsilon}_{i,t}$ with indices less than 1 (i.e., unobserved components when $t<1$), missing values are replaced by zeros. Under this definition, the AR($p$) process in Equation (\ref{eq2}) can be rewritten as a VAR(1) representation:
    \[
      \ve{\varepsilon}_{i,t}=\ve{A}_i\ve{\varepsilon}_{i,t-1}+\ve{e}_{i,t}
    \]
    where $\ve{A}_i$ and $\ve{e}_{i,t}$ are defined as:
    \begin{align}
      \ve{A}_i=&
      \begin{bmatrix}
	\rho_{i,1} & \rho_{i,2} & \cdots & \rho_{i,p-1} & \rho_{i,p}\\
	1 & 0 & \cdots & 0 & 0\\
	0 & 1 & \cdots & 0 & 0\\
	\vdots & \vdots & \ddots & \vdots & \vdots \\
	0 & 0 & \cdots & 1 & 0\\
      \end{bmatrix},\quad
      \ve{e}_{i,t}=
      \begin{bmatrix}
	\eta_{i,t}\\
	0\\
	0\\
	\vdots\\
	0\\
      \end{bmatrix},\nonumber\\
      \ve{e}_{i,t}|h_{i,t}\sim& \mathcal{N}(\ve{0},\ve{Q}_{i,t}),\quad 
      \ve{Q}_{i,t}=
      \begin{bmatrix}
	\sigma_{i,t}^2 & 0 & \cdots & 0 & 0\\
	0 & 0 & \cdots & 0 & 0\\
	0 & 0 & \cdots & 0 & 0\\
	\vdots & \vdots & \ddots & \vdots & \vdots \\
	0 & 0 & \cdots & 0 & 0\\
      \end{bmatrix}\nonumber
    \end{align}
    If $\ve{A}_i$ is stable (i.e., its roots lie outside the unit circle), then under time-varying variance, the state covariance satisfies:
    \[
      \ve{\Gamma}_{i,t}=\ve{A}_i\ve{\Gamma}_{i,t-1}\ve{A}_i^\prime + \ve{Q}_{i,t}
    \]
    where $\ve{\Gamma}_{i,t}=\var(\ve{\varepsilon}_{i,t}|h_{i,1},\ldots,h_{i,t})$. Setting the initial value as $\ve{\Gamma}_{i,0}=\ve{0}$, we compute:
    \[
      \widehat{\ve{\Gamma}}_{i,t}=\widehat{\ve{A}}_i\widehat{\ve{\Gamma}}_{i,t-1}\widehat{\ve{A}}_i^\prime + \widehat{\ve{Q}}_{i,t} \ \ (t=1,\ldots,p). 
    \]
    This yields estimates of the conditional variance-covariance matrices for the initial $p$ periods, $\widehat{\ve{\Gamma}}_{i,1},\ldots,\widehat{\ve{\Gamma}}_{i,p}$. We then construct the quasi-differenced series for $t=1,\ldots,p$:
    \begin{align}
      \widetilde{R}_{i,t}^{QD_1} &= \ve{e}_{1}^\prime\,\left(\widehat{\ve{Q}}^{QD_1}_{i,t}\right)^{1/2}\,\widehat{\ve{\Gamma}}_{i,t}^{-1/2}\,\ve{R}_{i,t}, 
      & \widetilde{C}_{i,t}^{QD_1} &= \ve{e}_{1}^\prime\,\left(\widehat{\ve{Q}}^{QD_1}_{i,t}\right)^{1/2}\,\widehat{\ve{\Gamma}}_{i,t}^{-1/2}\,\ve{\iota}_{p}, \nonumber\\
      \widetilde{R}_{m,t}^{QD_1} &= \ve{e}_{1}^\prime\,\left(\widehat{\ve{Q}}^{QD_1}_{i,t}\right)^{1/2}\,\widehat{\ve{\Gamma}}_{i,t}^{-1/2}\,\ve{R}_{m,t}, 
      & \widetilde{\eta}_{i,t}^{QD_1} &= \ve{e}_{1}^\prime\,\left(\widehat{\ve{Q}}^{QD_1}_{i,t}\right)^{1/2}\,\widehat{\ve{\Gamma}}_{i,t}^{-1/2}\,\widehat{\ve{\varepsilon}}_{i,t} \nonumber
    \end{align}
where $\ve{R}_{i,t}=(R_{i,t},\ldots,R_{i,t-p+1})^\prime$, $\ve{\iota}_{p}=(1,\ldots,1)^\prime$, $\ve{R}_{m,t}=(R_{m,t},\ldots,R_{m,t-p+1})^\prime$, $\widehat{\ve{\varepsilon}}_{i,t}=(\widehat{\varepsilon}_{i,t},\ldots,\widehat{\varepsilon}_{i,t-p+1})^\prime$, and $\ve{e}_1=(1,0,\ldots,0)^\prime$. All of these are $p\times 1$ vectors; $\left(\widehat{\ve{Q}}^{QD_1}_{i,t}\right)^{1/2}$ and $\widehat{\ve{\Gamma}}_{i,t}^{-1/2}$ are $p\times p$ matrices. For $t=1$, set:
    \[
      \widetilde{R}_{i,1}^{QD_1}=R_{i,1},\quad \widetilde{C}_{i,1}^{QD_1}=1,\quad \widetilde{R}_{m,1}^{QD_1}=R_{M,1},\quad \widetilde{\eta}_{i,1}^{QD_1}=\widehat\varepsilon_{i,1}.
    \]
    Missing lagged values are filled with zeros. The completed vectors for $t=1,\ldots,p$ are then stacked, producing
    \[
      \widetilde{\ve{R}}_{i}^{QD_1}=
      \begin{bmatrix}
	\widetilde{R}_{i,1}^{QD_1}\\
	\widetilde{R}_{i,2}^{QD_1}\\
	\vdots\\
	\widetilde{R}_{i,p}^{QD_1}\\
      \end{bmatrix},\quad
      \widetilde{\ve{C}}_{i}^{QD_1}=
      \begin{bmatrix}
	\widetilde{C}_{i,1}^{QD_1}\\
	\widetilde{C}_{i,2}^{QD_1}\\
	\vdots\\
	\widetilde{C}_{i,p}^{QD_1}\\
      \end{bmatrix},\quad
      \widetilde{\ve{R}}_{m}^{QD_1}=
      \begin{bmatrix}
	\widetilde{R}_{m,1}^{QD_1}\\
	\widetilde{R}_{m,2}^{QD_1}\\
	\vdots\\
	\widetilde{R}_{m,p}^{QD_1}\\
      \end{bmatrix},\quad
      \widetilde{\ve{\eta}}_{i}^{QD_1}=
      \begin{bmatrix}
	\widetilde{\eta}_{i,1}^{QD_1}\\
	\widetilde{\eta}_{i,2}^{QD_1}\\
	\vdots\\
	\widetilde{\eta}_{i,p}^{QD_1}\\
      \end{bmatrix}.
    \]
    We then obtain the variance of $\widehat{\ve{\eta}}_{i}^{QD_1}$ as follows:
    \[
      \var(\widetilde{\ve{\eta}}_{i}^{QD_1})=\widetilde{\ve{\Sigma}}_{i}^{QD_1}=
      \begin{bmatrix}
	\tilde\sigma_{i,1}^2 & 0 & \cdots & 0 \\
	0 & \tilde\sigma_{i,2}^2 & \cdots & 0 \\
	\vdots & \vdots & \ddots & \vdots \\
	0 & 0 & \cdots & \tilde\sigma_{i,p}^2 \\
      \end{bmatrix}.
    \]
   \item \textbf{Estimate model parameters via GLS.}~~Now, stack the $QD_1$ and $QD_2$ variables:
  \[
    \widetilde{\ve{R}}_{i}^{QD}=
    \begin{bmatrix}
      \widetilde{\ve{R}}_{i}^{QD_1}\\
      \widetilde{\ve{R}}_{i}^{QD_2}\\
    \end{bmatrix},\quad
    \widetilde{\ve{C}}_{i}^{QD}=
    \begin{bmatrix}
      \widetilde{\ve{C}}_{i}^{QD_1}\\
      \widetilde{\ve{C}}_{i}^{QD_2}\\
    \end{bmatrix},\quad
    \widetilde{\ve{R}}_{m}^{QD}=
    \begin{bmatrix}
      \widetilde{\ve{R}}_{m}^{QD_1}\\
      \widetilde{\ve{R}}_{m}^{QD_2}\\
    \end{bmatrix},\quad
    \widetilde{\ve{\eta}}_{i}^{QD}=
    \begin{bmatrix}
      \widetilde{\ve{\eta}}_{i}^{QD_1}\\
      \widetilde{\ve{\eta}}_{i}^{QD_2}\\
    \end{bmatrix}.
  \]
  Equation (\ref{eq1}) can then be expressed as:
  \begin{equation}
    \widetilde{\ve{R}}_{i}^{QD}=\widetilde{\ve{C}}_{i}^{QD}\alpha_i+\widetilde{\ve{R}}_{m}^{QD}\beta_i+\widetilde{\ve{\eta}}_i^{QD}=\underbrace{\begin{bmatrix}
	\widetilde{\ve{C}}_{i}^{QD} & \widetilde{\ve{R}}_{m}^{QD}\\
    \end{bmatrix}}_{=\widetilde{\ve{Z}}_i}
    \underbrace{\begin{bmatrix}
	\alpha_i\\
	\beta_i\\
    \end{bmatrix}}_{=\ve{\kappa}_i}
    + \widetilde{\ve{\eta}}_i^{QD}\label{eq6}
  \end{equation}
  Thus, the GLS estimator of $\ve{\kappa}_i$ is:
  \[
    \widehat{\ve{\kappa}}_i^{GLS}=
    \begin{bmatrix}
      \widehat{\alpha}_i^{GLS}\\
      \widehat{\beta}_i^{GLS}\\
    \end{bmatrix}
    =
    \left(\widetilde{\ve{Z}}_i^\prime \left(\widetilde{\ve{\Sigma}}_{i}^{QD}\right)^{-1}\widetilde{\ve{Z}}_i\right)^{-1}\widetilde{\ve{Z}}_i^\prime \widetilde{\ve{\Sigma}}_{i}^{QD-1}\widetilde{\ve{R}}_{i}^{QD}
  \]
  where
  \[
    \widetilde{\ve{\Sigma}}_{i}^{QD}=
    \begin{bmatrix}
      \widetilde{\ve{\Sigma}}_{i}^{QD_1} & \ve{0}\\
      \ve{0} & \widetilde{\ve{\Sigma}}_{i}^{QD_2}\\
    \end{bmatrix}
  \]
  This multi-step procedure ensures consistent and efficient estimation in the presence of both serial correlation and time-varying volatility. The resulting parameter estimates form the basis for the subsequent analysis of abnormal returns and cumulative abnormal returns (CAR) in the event study framework.
\end{enumerate}

\subsection{Event Study Design}

To assess the impact of wartime economic controls on stock market efficiency, we implement an event study based on the abnormal returns estimated from the CAPM-AR($p$)-SV model. Our event study is structured according to standard practice (see \citet{mackinlay1997ese}), partitioning the time series into an estimation window, an event window, and a post-event window. The event date is denoted by $t=0$. The estimation window ($t=\tau_0$ to $t=\tau_1$) is set to 120 trading days prior to the event, which provides sufficient data for model parameter estimation. The event window ($t=\tau_1+1$ to $t=\tau_2$) covers the period from 10 days before to 10 days after the event ($[-10,10]$), allowing us to detect both immediate and delayed market responses. The post-event window ($t=\tau_2+1$ to $t=\tau_3$) is used to assess any persistent effects.

For each portfolio $i$ and day $t$, the abnormal return (AR) is defined as:
\[
 AR_{i,t} = R_{i,t} - \ex[R_{i,t} | \ve{\Phi}_t],
\]
where $R_{i,t}$ is the observed return, and $\ex[R_{i,t} | \ve{\Phi}_t]$ is the expected normal return conditional on the information set $\ve{\Phi}_t$ computed as:
\[
 \ex[R_{i,t} | \ve{\Phi}_t]=\hat\alpha_i^{GLS}+R_{m,t}\hat\beta_i^{GLS}=\ve{z}^\prime_t\hat{\ve{\kappa}}_i^{GLS}
\]
or equivalently,
\begin{align}
 AR_{i,t}&=R_{i,t}-(\hat{\alpha}_i^{GLS}+R_{m,t}\hat{\beta}_i^{GLS})=R_{i,t}-\ve{z}_t^\prime\hat{\ve{\kappa}}_i^{GLS}\nonumber\\
	 &=\ve{z}_t^\prime\ve{\kappa}_i+\varepsilon_{i,t}-\ve{z}_t^\prime\hat{\ve{\kappa}}_i^{GLS}=\varepsilon_{i,t}-\ve{z}_t^\prime(\hat{\ve{\kappa}}_i^{GLS}-\ve{\kappa}_i),\nonumber
\end{align}
where $\ve{z}_t=(1,R_{m,t})^\prime$.

The variance of the abnormal return accounts for both the time-varying volatility and parameter uncertainty: 
\begin{align}
 \var(\widehat{AR}_{i,t}|\ve{\Phi}_t)&=\var(\varepsilon_{i,t}|\ve{\Phi}_t)+\var(\ve{z}_t^\prime\hat{\ve{\kappa}}_i^{GLS}|\ve{\Phi}_t)\nonumber\\
 &=\tilde{\sigma}_{i,t}^2+\ve{z}_t^\prime\left(\widetilde{\ve{Z}}^\prime \left(\widetilde{\ve{\Sigma}}^{QD}\right)^{-1} \widetilde{\ve{Z}}\right)^{-1}\ve{z}_t,\nonumber
\end{align}
where $\tilde\sigma_{i,t}^2$ is the estimated conditional variance, and $\tilde{\ve{Z}}$ is the transformed design matrix. The ($1-\alpha$)\% confidence interval for $\widehat{AR}_{i,t}$ is then:
\[
 \widehat{AR}_{i,t} \pm z_{1-\frac{\alpha}{2}}\sqrt{\widetilde{\sigma}_{i,t}^2 + \ve{z}_t^\prime\left(\widetilde{\ve{Z}}^\prime \left(\widetilde{\ve{\Sigma}}^{QD}\right)^{-1} \widetilde{\ve{Z}}\right)^{-1}\ve{z}_t},
\]
where where $z_{1-\frac{\alpha}{2}}$ is the corresponding quantile of the standard normal distribution.

The cumulative abnormal return (CAR) for portfolio $i$ from $\tau_1$ to $\tau_2$ is defined as:
\[
 \widehat{CAR}_{i}[\tau_1:\tau_2]=\sum_{u=\tau_1}^{\tau_2} \widehat{AR}_{i,u}.
\]
The variance of the CAR is 
{\footnotesize{\begin{align}
 \var(\widehat{CAR}_{i}[\tau_1:\tau_2]| \ve{\Phi}_t)&=\sum_{u=\tau_1}^{\tau_2}\sum_{v=\tau_1}^{\tau_2}\cov(\widehat{AR}_{i,u}, \widehat{AR}_{i,v}| \ve{\Phi}_t)\nonumber\\
 &=\sum_{t=\tau_1}^{\tau_2}\left[\tilde\sigma_{i,t}^2 + \ve{z}_t^\prime\left(\widetilde{\ve{Z}}^\prime \left(\widetilde{\ve{\Sigma}}^{QD}\right)^{-1} \widetilde{\ve{Z}}\right)^{-1}\ve{z}_t\right] + 2\sum_{\tau_1\leq u < v \leq \tau_2}\ve{z}_u^\prime \left(\widetilde{\ve{Z}}^\prime \left(\widetilde{\ve{\Sigma}}^{QD}\right)^{-1} \widetilde{\ve{Z}}\right)^{-1}\ve{z}_v,\nonumber
\end{align}}}
where
\[
 \cov(\widehat{AR}_{i,u}, \widehat{AR}_{i,v}| \ve{\Phi}_t)=
 \begin{cases}
  \tilde\sigma_{i,u}^2 + \ve{z}_u^\prime \left(\widetilde{\ve{Z}}^\prime \left(\widetilde{\ve{\Sigma}}^{QD}\right)^{-1} \widetilde{\ve{Z}}\right)^{-1}\ve{z}_u & u=v,\\
  \ve{z}_u^\prime \left(\widetilde{\ve{Z}}^\prime \left(\widetilde{\ve{\Sigma}}^{QD}\right)^{-1} \widetilde{\ve{Z}}\right)^{-1}\ve{z}_v & u\neq v.\\
 \end{cases}
\]
The $(1-\alpha)$\% confidence interval for the CAR is:
\[
 \widehat{CAR}_{i}[\tau_1:\tau_2] \pm z_{1-\frac{\alpha}{2}}\sqrt{\var(\widehat{CAR}_{i}[\tau_1:\tau_2]| \ve{\Phi}_t)},
\]
where $z_{1-\alpha/2}$ is the corresponding quantile from the standard normal distribution.

For hypothesis testing, we use the standardized cumulative abnormal return (SCAR) statistic:
\[
 \widehat{SCAR}_{i}[\tau_1:\tau_2]=\frac{\widehat{CAR}_{i}[\tau_1:\tau_2]}{\sqrt{\var(\widehat{CAR}_{i}[\tau_1:\tau_2]| \ve{\Phi}_t)}},
\]
which is approximately standard normally distributed under the null hypothesis of no abnormal performance.

%% file: zaibatsu_data.tex
\section{Data}\label{zaibatsu_sec5}

This study constructs three capitalization-weighted market indices from daily closing prices of short-term clearing futures transactions on the Tokyo Stock Exchange between January 4, 1930 and August 30, 1943. The first is the Price Index (PI), based on observed stock prices. The second is the Adjusted Price Index (API), which adjusts the PI for ex-rights effects associated with capital increases and related corporate actions.\footnote{Such related corporate actions include additional payments and the allocation of shares in other companies associated with capital increases (for details, see \citet{saito2016pps}).} The third is the Total Return Index (TRI), which further incorporates dividends and thus measures total returns rather than price changes alone. All daily price data, specifically morning-session closing prices, are taken from the {\it Chugai Shogyo Shimpo}, the most reliable primary source on Japanese financial markets for this period.\footnote{The construction of the API and TRI requires information on the number of issued shares, additional payments, and dividends, which we obtain from the Japan Stock Yearbook ({\it Kabushiki Kaisha Nenkan}) and the Semi-Annual Sales Report ({\it Koka-jo}).} By contrast, \citet{bassino2015iet} use daily data from the {\it Toyo Keizai Shimpo} to examine weak-form market efficiency in prewar Japanese stock and bond markets. Their dataset, however, has two serious limitations.

First, their daily indices are simple arithmetic averages rather than capitalization-weighted aggregates. As a result, they cannot track changes in aggregate market value accurately, since firms with very different market capitalizations receive equal weight. This departs from standard index-construction practice and is likely to introduce bias, especially when large firms dominate trading and capitalization. Second, their series lack continuity. The industrial and utilities indices combine two distinct series based on inconsistent benchmarks: a simple arithmetic average up to December 1932 and a rebased index, with December 10, 1931 set to 100, from January 1933 onward. This discontinuity creates a structural break and makes the series unsuitable for rigorous time-series analysis or for tracing changes in market efficiency over time.

To address these shortcomings, we reconstruct three consistent capitalization-weighted index series from daily transaction-level data. The sample consists of the listed issues reported in Table \ref{age_war_tab2}, counting old and new shares separately when they were traded as distinct securities. Firms are classified along two dimensions: Zaibatsu affiliation and military classification under the Temporary Funds Adjustment Law. Specifically, firms in Category A industries, which received the highest priority in wartime capital allocation, are classified as military, and the remainder as non-military. On this basis, we construct the market portfolio and four capitalization-weighted sub-portfolios: (1) Zaibatsu (military), (2) Zaibatsu (non-military), (3) non-Zaibatsu (military), and (4) non-Zaibatsu (non-military).

\begin{center}
 (Table \ref{age_war_tab2} around here)
\end{center}

Table \ref{age_war_tab2} reports descriptive statistics for the listed issues in our sample. The variable ``listed issue'' identifies each security and distinguishes, where relevant, between old ({\it Kyu-kabu}) and new ({\it Shin-kabu}) shares listed on the Tokyo Stock Exchange. ``Zaibatsu'' indicates affiliation with one of the major conglomerates, while ``Military'' indicates whether the firm belonged to a Category A industry under the Temporary Funds Adjustment Law. These classifications allow us to capture both institutional heterogeneity, namely Zaibatsu versus non-Zaibatsu, and policy-relevant industrial heterogeneity, namely military versus non-military, within a unified empirical framework.

These distinctions matter for two reasons. First, Japan's wartime economy was institutionally segmented. Zaibatsu groups maintained internal capital markets through affiliated banks and trading companies, and therefore enjoyed privileged access to finance, procurement, and information. Such advantages likely affected both expected cash flows and risk, making their price dynamics not directly comparable to those of independent firms. Second, the military classification is not based on a broad notion of war-related activity, but on the state's formal prioritization of industries under the Temporary Law for Funds Adjustment. This legal and administrative classification captures the uneven allocation of funds, materials, contracts, and other policy support across firms and industries. It thus provides a sharper basis for examining how market efficiency and abnormal returns varied across firms differentially exposed to wartime industrial policy.

\begin{center}
 (Figure \ref{age_war_fig1} around here)
\end{center}

Figure~\ref{age_war_fig1} presents two complementary views of Japan's equity market during 1930--1943. The top panel plots the market-wide PI, API, and TRI, each normalized to 100 on January 4, 1930. The bottom panel shows each portfolio's share of total market capitalization. The top panel indicates that the three market indices move closely together, but the gaps between them also show that ex-rights adjustments and dividend payments matter cumulatively for measuring wartime stock market performance. The bottom panel highlights a marked shift in market composition. After 1937, the zaibatsu military share of total market capitalization rises sharply and exceeds 60\% by 1943, whereas the non-zaibatsu military share falls to a negligible level.

This shift points to an important structural transformation in the wartime equity market. Market capitalization did not simply reflect per-unit stock-price performance. It also reflected firms' ability to expand scale under wartime controls. As state-led capital mobilization, procurement priorities, and preferential access to finance and materials increasingly favored zaibatsu military firms, these firms were able to translate valuations into economic scale more effectively than independent military firms.

This is precisely the capitalization mechanism formalized in Proposition~\ref{prop:cap_concentration}: per-unit valuations and market-capitalization shares can diverge when zaibatsu military firms face lower financing wedges and can convert market valuations into installed capital more effectively.

\begin{center}
 (Figure \ref{age_war_fig2} around here)
\end{center}

Figure \ref{age_war_fig2} decomposes the market indices by portfolio and plots the PI, API, and TRI for each of the four firm groups. Relative performance differed markedly across portfolios over time. In particular, non-Zaibatsu military firms experienced a pronounced relative rise in the late 1930s, whereas Zaibatsu non-military firms tended to lag behind. At the same time, the comparison of PI, API, and TRI within each panel shows that corporate actions and dividend payments are quantitatively important for evaluating wartime stock performance and should therefore not be ignored in the empirical analysis.

Our empirical analysis also requires a daily risk-free rate in order to compute risk premiums for the market portfolio and the four sub-portfolios, since all expected returns in the CAPM-AR($p$)-SV framework are defined relative to this benchmark. We use the Tokyo overnight call rate, obtained from the {\it Chugai Shogyo Shimpo}, as the daily risk-free rate.

\begin{center}
 (Table \ref{age_war_tab3} around here)
\end{center}

Table \ref{age_war_tab3} reports descriptive statistics and unit root test results for the daily excess return series computed from the PI, API, and TRI for the capitalization-weighted market portfolio and the four sub-portfolios. Mean daily excess returns are negative across all portfolios and all three return definitions, reflecting the adverse market environment and heightened uncertainty of the wartime period.

Across the three index definitions, the market portfolio has the lowest standard deviation and is therefore the most stable. Among the four sub-portfolios, the Zaibatsu (military) portfolio consistently has the highest standard deviation, followed by the Zaibatsu (non-military) and non-Zaibatsu (non-military) portfolios, while the non-Zaibatsu (military) portfolio has the lowest volatility among the sub-portfolios. The data therefore do not support the view that independent military firms were the most volatile. Instead, the greatest return variability is observed among Zaibatsu-affiliated military firms.

In the event-study framework, all variables entering the moment conditions must be stationary for inference to be valid. To verify this requirement, we apply the augmented Dickey--Fuller generalized least squares (ADF--GLS) test of \citet{elliott1996eta}, which is relatively robust to size distortions (see \citet{ng2001lls}). The ADF--GLS results reject the null of a unit root at the 5\% level for all variables, confirming that the series used in the event study are stationary. This supports the reliability of the abnormal-return estimates and the subsequent statistical inference.

%% file: zaibatsu_empirical.tex
\section{Empirical Results}\label{zaibatsu_sec6}

\subsection{Preliminary Results}
\begin{center}
 (Tables \ref{age_war_tab4} to \ref{age_war_tab6} around here)
\end{center}

Tables \ref{age_war_tab4} to \ref{age_war_tab6} report GLS estimates of the intercepts and market betas from the CAPM-AR($p$)-SV model for four capitalization-weighted portfolios sorted by zaibatsu affiliation and industrial orientation, using the PI, API, and TRI, respectively.\footnote{As shown in the \href{https://at-noda.com/appendix/zaibatsu_appendix.pdf}{Online Appendix} A.1, OLS estimates with HAC standard errors can yield materially different inference. This finding is consistent with the concern emphasized in the Introduction: wartime daily returns often violate homoskedasticity and serial independence, so inference may be distorted when these features are addressed only imperfectly ex post.} The model relaxes two assumptions that are especially fragile in historical wartime daily data, namely homoskedastic innovations and serially independent residuals. Specifically, stochastic volatility (SV) captures time-varying uncertainty, while the AR($p$) component accommodates residual persistence arising from market frictions and non-synchronous trading. To address possible parameter instability, we estimate the model for the full sample (Jan 4, 1930--Aug 31, 1943) and for two subsamples split at the Marco Polo Bridge Incident (Jul 7, 1937).

A first result concerns residual dependence. Across all three index definitions, the Bayesian Information Criterion uniformly selects $p=1$ for all portfolios in the full sample and in both subsamples. Once time-varying volatility is taken into account, a first-order residual process therefore appears sufficient to capture residual dependence in these data.

Turning to the alpha estimates, the full-sample results reveal a clear but selective pattern. At the 1\% level, the zaibatsu military portfolio exhibits a positive alpha under the PI, API, and TRI, whereas the non-zaibatsu military portfolio exhibits a negative alpha under all three index definitions. By contrast, the alpha for the zaibatsu non-military portfolio is not statistically distinguishable from zero in any specification. The alpha for the non-zaibatsu non-military portfolio is insignificant under the PI, but significantly positive under the API and TRI. Taken together, these results suggest that the most robust full-sample contrast is between the relatively strong performance of the zaibatsu military portfolio and the persistent underperformance of the non-zaibatsu military portfolio.

In the prewar subsample, the evidence for abnormal performance is much weaker. At the 1\% level, most alphas are not statistically different from zero across the three index definitions. The principal exception is the non-zaibatsu military portfolio, whose alpha is significantly negative under the API and TRI, though not under the PI. This pattern suggests that before Jul 7, 1937, cross-sectional differences in abnormal performance were limited, and that only the independent military segment displays a persistently weak pattern once capital adjustments and dividends are taken into account.

The wartime subsample yields a much sharper contrast. Under the PI, all four portfolios have alphas that are statistically different from zero at the 1\% level: both zaibatsu portfolios and the non-zaibatsu non-military portfolio are positive, whereas the non-zaibatsu military portfolio is strongly negative. Under the API and TRI, however, statistical significance is concentrated in the non-zaibatsu portfolios. In both cases, the non-zaibatsu military portfolio remains significantly negative and the non-zaibatsu non-military portfolio significantly positive, while the alphas of both zaibatsu portfolios are not statistically different from zero. The most robust wartime result across the adjusted-price and total-return measures is therefore a pronounced polarization within the non-zaibatsu segment, with independent military firms underperforming sharply and independent non-military firms outperforming strongly.

The beta estimates reinforce this interpretation. In the full sample, the beta of the zaibatsu military portfolio exceeds unity under all three index definitions, whereas the beta of the non-zaibatsu military portfolio is consistently below unity. In the wartime subsample, this contrast becomes even sharper: the beta of the non-zaibatsu military portfolio falls to values ranging from approximately 0.50 to 0.57, whereas the beta of the non-zaibatsu non-military portfolio rises to values between 1.17 and 1.36. The zaibatsu portfolios lie between these two extremes. The beta of the zaibatsu non-military portfolio exceeds unity during the wartime period under all three index definitions, whereas the beta of the zaibatsu military portfolio remains above unity under the PI but falls slightly below unity under the API and TRI. These estimates suggest that wartime cross-sectional differences reflected not only abnormal performance but also marked differences in systematic market exposure.

The adjusted $R^2$ values further clarify how market comovement changed across regimes. In the prewar period, the market factor explains a smaller share of return variation for the zaibatsu portfolios than for the non-zaibatsu portfolios across all three index definitions. During wartime, however, the fit rises sharply for the zaibatsu military portfolio, reaching 0.89 under the PI and approximately 0.72 under both the API and TRI, while it falls markedly for the non-zaibatsu military portfolio to values between 0.43 and 0.51. This contrast suggests that wartime conditions strengthened the linkage between aggregate market movements and returns on zaibatsu military firms, while independent military firms became relatively less synchronized with the market and more exposed to portfolio-specific shocks.

\subsection{Event Study Results}\label{age_war_sec5-2}
\begin{center}
(Tables \ref{age_war_tab7} to \ref{age_war_tab9} about here)
\end{center}

Tables \ref{age_war_tab7} to \ref{age_war_tab9} report day-0 abnormal returns ($AR$ at $t=0$) and cumulative measures ($CAR$ and standardized $CAR$, $SCAR$) for four capitalization-weighted portfolios, namely zaibatsu military, zaibatsu non-military, non-zaibatsu military, and non-zaibatsu non-military, under three alternative index definitions, PI, API, and TRI. The tables report only the sign and statistical significance of each measure, where ``{\color{red}{$\boldsymbol{+}$}}'', ``{\color{blue}{$\boldsymbol{-}$}}'', and ``0'' denote significantly positive, significantly negative, and insignificant reactions, respectively.\footnote{All supplementary figures and robustness checks are presented in the \href{https://at-noda.com/appendix/zaibatsu_appendix.pdf}{Online Appendix} A.3 to A.11.} A significant day-0 $AR$ is consistent with rapid incorporation of public news. By contrast, significant $CAR$ or $SCAR$ at $t=0$ indicates that abnormal returns had accumulated by the event date, but does not by itself distinguish post-announcement drift from pre-event information leakage or other shocks before the event date.

Taken together, the three tables do not support a uniform view of wartime Japan as either semi-strong efficient or semi-strong inefficient. Rather, they suggest a segmented market in which the price impact of public news varied with institutional position. Many major events, including several wartime regulations after 1937, generate no significant reaction in any portfolio under any index definition. Yet when significant responses do appear, they often differ across zaibatsu and non-zaibatsu firms and across military and non-military firms. Cumulative abnormal returns are present in selected episodes, but they are not pervasive.

Some common patterns emerge in the early 1930s. Following the assassination attempt on Prime Minister Hamaguchi on November 14, 1930, the zaibatsu military portfolio shows no significant reaction, whereas the zaibatsu non-military portfolio shows a significantly positive $AR$ only, the non-zaibatsu military portfolio shows significantly negative $AR$, $CAR$, and $SCAR$, and the non-zaibatsu non-military portfolio shows significantly positive reactions on all three measures. The March Incident on March 20, 1931 produces no day-0 effect, but the zaibatsu military portfolio records significantly positive $CAR$ and $SCAR$. The Mukden Incident on September 18, 1931 yields a significantly negative $AR$ only for the zaibatsu non-military portfolio. Even before full wartime mobilization, political and external shocks were thus associated with different responses across firm groups.

The reimposition of the gold-export ban on December 13, 1931 provides one of the clearest common episodes across the three tables. The zaibatsu military portfolio shows significantly positive $AR$, $CAR$, and $SCAR$ under all three index definitions, while the zaibatsu non-military portfolio also shows positive cumulative reactions in all three tables, although its $AR$ is insignificant under TRI. By contrast, the non-zaibatsu military portfolio shows a significantly positive $AR$ without cumulative significance, and the non-zaibatsu non-military portfolio shows a significantly negative $AR$ without cumulative significance. Positive cumulative abnormal returns by the event date were therefore concentrated in the zaibatsu portfolios, especially the military one.

Other early-1930s events display a related form of cross-sectional segmentation. The First Shanghai Incident on January 28, 1932 is associated with significantly positive $AR$, $CAR$, and $SCAR$ for the non-zaibatsu military portfolio and significantly negative reactions on all three measures for the non-zaibatsu non-military portfolio, while the zaibatsu non-military portfolio shows only a negative $AR$. The Bank of Japan's underwriting of government deficit bonds on November 25, 1932 produces significantly positive $AR$, $CAR$, and $SCAR$ for the non-zaibatsu military portfolio and significantly negative reactions on all three measures for the non-zaibatsu non-military portfolio.

These episodes are naturally interpreted through the regime-risk channel in Proposition~\ref{prop:regime_insulation}. Broad political, diplomatic, or monetary shocks need not affect all portfolios symmetrically. Zaibatsu portfolios could be partially insulated by internal capital markets, diversified group claims, and preferential access to administrative allocation, whereas non-zaibatsu portfolios remained more exposed to external finance, policy uncertainty, and reallocation risk. This interpretation explains why several early-1930s regime shocks leave the zaibatsu portfolios close to zero while generating stronger, and sometimes opposite-signed, responses among the non-zaibatsu portfolios.

By contrast, the League of Nations' condemnation on February 24, 1933 and the Tanggu Truce on May 31, 1933 mainly affect the zaibatsu military portfolio, with the former yielding a negative $AR$ only and the latter yielding significantly positive $AR$, $CAR$, and $SCAR$. These events are better interpreted through the embedded-rent channel in Proposition~\ref{prop:embedded_rent}. Rather than operating mainly as broad shocks to external dependence, they appear to have revised the value of being embedded in the state-led strategic order, though the direction and persistence of the response differed across events. Such revisions can generate cumulative responses concentrated in the zaibatsu military portfolio, especially when public information is incorporated gradually.

Domestic political conflict in the mid-1930s again produces heterogeneous reactions, although the common findings are narrower than a superficial reading might suggest. During the Military Academy Incident on November 20, 1934, the zaibatsu military portfolio shows significantly positive $CAR$ and $SCAR$, but not a significant $AR$. The February 26 Incident on February 26, 1936 yields significantly positive $AR$, $CAR$, and $SCAR$ for the zaibatsu military portfolio and significantly negative $AR$, $CAR$, and $SCAR$ for the non-zaibatsu non-military portfolio in all three tables. However, some additional reactions are index-dependent: the zaibatsu non-military portfolio shows a positive $AR$ only under API, and the non-zaibatsu military portfolio shows a negative $AR$ only under API. Likewise, the effects of Japan's withdrawal from the Washington Naval Treaty on December 29, 1934 differ across PI, API, and TRI.

The Military Academy Incident provides another example of the embedded-rent mechanism in Proposition~\ref{prop:embedded_rent}. Although the day-0 $AR$ is insignificant, the significant cumulative response of the zaibatsu military portfolio is consistent with a revision in the continuation value of firms closely tied to the state-led military order. The February 26 Incident points in the same direction, but with a sharper cross-sectional contrast: the zaibatsu military portfolio was positively revalued, while the non-zaibatsu non-military portfolio was negatively revalued. These patterns suggest that domestic political shocks did not simply raise or lower market valuations uniformly; they changed the relative value of institutional proximity to the military and the state.

The transition from mobilization to total war does not produce a uniform strengthening of announcement effects. The Marco Polo Bridge Incident on July 7, 1937, the Temporary Funds Adjustment Law on September 8, 1937, and the National Mobilization Law on March 24, 1938 all show no significant reaction in any portfolio under any index definition. The Company Profit Dividends and Fund Accommodation Ordinance on April 1, 1939 also does not appear to have primarily benefited the zaibatsu portfolios. Instead, it produces a significantly negative $AR$ for the non-zaibatsu military portfolio and a significantly positive $AR$ for the non-zaibatsu non-military portfolio, with no cumulative significance.

The outbreak of World War I\hspace{-1.2pt}I on September 1, 1939 produces one of the sharpest response patterns in the tables. The zaibatsu military portfolio shows a significantly positive $AR$ under all three index definitions. The zaibatsu non-military portfolio also shows a significantly positive $AR$ in all three tables, although positive $CAR$ and $SCAR$ for that portfolio appear only under TRI. By contrast, the non-zaibatsu military portfolio shows significantly negative $AR$, $CAR$, and $SCAR$, whereas the non-zaibatsu non-military portfolio shows significantly positive reactions on all three measures. This episode is consistent with a marked divergence not only between military and non-military firms, but also between affiliated and independent firms.

In the 1940s, some of the clearest cumulative patterns again appear in a limited set of events. The Second Armistice at Compi\`egne on June 22, 1940 produces significantly negative $AR$, $CAR$, and $SCAR$ for the zaibatsu military portfolio and significantly positive $CAR$ and $SCAR$ for the non-zaibatsu non-military portfolio, with no significant $AR$ for the latter. The U.S. freezing of Japanese assets and the subsequent oil embargo on July 26, 1941 yields a significantly positive $AR$ for the non-zaibatsu military portfolio and significantly negative $AR$, $CAR$, and $SCAR$ for the non-zaibatsu non-military portfolio. This is the late-sample counterpart of the regime-risk mechanism in Proposition~\ref{prop:regime_insulation}: sanctions shocks altered exposure to external resource constraints and reallocation risk, generating stronger responses among non-zaibatsu portfolios than among zaibatsu portfolios.

The outbreak of the Pacific War on December 8, 1941 generates significantly positive $AR$, $CAR$, and $SCAR$ only for the non-zaibatsu non-military portfolio. The Ministry of Finance's pivot to stricter capital controls on November 15, 1942 produces a significantly negative $AR$ for the zaibatsu military portfolio and significantly positive $AR$ for the zaibatsu non-military and non-zaibatsu non-military portfolios. Finally, the Fall of Guadalcanal on February 9, 1943 yields significantly positive $CAR$ and $SCAR$ for the zaibatsu non-military portfolio in all three tables. This episode is better interpreted through the optional group-continuation state in the \href{https://at-noda.com/appendix/zaibatsu_appendix.pdf}{Online Appendix} than through the embedded-rent channel for zaibatsu military firms. The response is concentrated in the zaibatsu non-military portfolio, suggesting a revision in the value of group-level financial claims rather than a direct military-demand effect.

Some late-war responses, however, are sensitive to the choice of index definition. The Tripartite Pact on September 27, 1940 produces negative cumulative reactions for the non-zaibatsu military portfolio under all three definitions, but the corresponding positive response appears for the non-zaibatsu non-military portfolio only under PI and for the zaibatsu military portfolio only under TRI. Similarly, the Battle of Midway on June 5, 1942 produces negative cumulative reactions for the zaibatsu military portfolio under PI, no significant effect under API, and negative cumulative reactions for the non-zaibatsu military portfolio under TRI. Some inferences are therefore common across price definitions, whereas others depend materially on whether prices are measured by PI, API, or TRI.

Overall, Tables \ref{age_war_tab7} to \ref{age_war_tab9} do not support a blanket characterization of wartime Japan as semi-strong inefficient. Most events still produce no detectable cumulative abnormal returns by $t=0$, and several important wartime regulations are associated with no significant reaction at all. However, when abnormal returns do emerge, both their sign and their cumulative profile by the event date vary systematically with business-group affiliation and military orientation. The evidence therefore suggests that public news was not incorporated uniformly across firms, but instead was filtered through institutional position.

\subsection{Historical Interpretations}

This section interprets the empirical results in light of Japanese economic and business history. The model in Proposition~\ref{prop:cap_concentration} clarifies why the divergence between price performance and market-capitalization dominance is not contradictory. Price indices measure the value of one unit of installed capital, whereas market capitalization also depends on the scale of capital that firms can finance and install. Once zaibatsu military firms faced lower financing wedges and greater issuance capacity, they could come to dominate market capitalization even when some independent military firms exhibited favorable per-unit price performance. This distinction is crucial for interpreting the post-1937 shift in market structure.

Table \ref{age_war_tab3} provides the basic tone. Mean excess returns are negative across all portfolios under PI, API, and TRI, volatility differs substantially across portfolios, and the return series are stationary. Historically, these baseline results matter because they indicate that any wartime ``advantage'' identified below was a relative advantage within a generally adverse and unstable environment rather than evidence of uniformly favorable conditions for military firms as such.

Once Table \ref{age_war_tab2} is read together with Tables \ref{age_war_tab4} to \ref{age_war_tab6}, the four portfolios appear not as abstract cells in a 2 $\times$ 2 classification but as historically distinct firm groupings exposed to wartime controls from different organizational positions. The zaibatsu military portfolio consisted largely of firms in mining, shipping, heavy machinery, and metals; the non-zaibatsu military portfolio included strategically important firms in oil, electric power, and railway/colonial transport; and the two non-military portfolios combined different mixtures of chemicals, paper, food, fisheries, textiles, sugar, and beer. These differences matter because the wartime economy did not allocate finance, materials, and procurement uniformly across all firms that were formally ``military'' or ``non-military.'' The later return patterns therefore need to be interpreted as differences across historically distinct groups of firms rather than as simple contrasts between two homogeneous sectors.

The most robust contrast in Tables \ref{age_war_tab4} to \ref{age_war_tab6} is between the relative strength of the zaibatsu military portfolio and the persistent weakness of the non-zaibatsu military portfolio. In the full sample, the zaibatsu military portfolio has a positive alpha under PI, API, and TRI, whereas the non-zaibatsu military portfolio has a negative alpha under all three measures. In the prewar subsample, cross-sectional differences are still limited, but the non-zaibatsu military portfolio is already weak under API and TRI. In the wartime subsample, the divergence becomes much sharper. Under PI, both zaibatsu portfolios and the non-zaibatsu non-military portfolio have positive alphas, whereas the non-zaibatsu military portfolio is strongly negative. Under API and TRI, significance shifts away from the zaibatsu portfolios and toward a pronounced polarization within the non-zaibatsu segment: non-zaibatsu military remains significantly negative, while non-zaibatsu non-military becomes significantly positive. Historically, this suggests that formal inclusion in priority sectors did not by itself secure superior returns. What mattered was whether wartime priority could be converted into dependable access to credit, materials, and procurement (see \href{https://at-noda.com/appendix/zaibatsu_appendix.pdf}{Online Appendix} A.2). This interpretation is also consistent with \citetapos{fukumoto2026cms} finding that sanction-exposed sectors moved toward authoritarian alignment under material duress, whereas procurement-linked sectors did not display an equally straightforward pro-regime shift, suggesting that wartime advantage depended less on formal strategic designation alone than on the organizational capacity to convert state policy into durable economic security.

The beta estimates and adjusted $R^2$ values reinforce this interpretation. In the full sample, the beta of the zaibatsu military portfolio exceeds unity under all three index definitions, whereas the beta of the non-zaibatsu military portfolio remains below unity. In wartime, this divergence becomes more pronounced: the non-zaibatsu military beta falls to roughly 0.50--0.57, while the non-zaibatsu non-military beta rises well above unity. At the same time, adjusted $R^2$ rises sharply for the zaibatsu military portfolio during wartime but falls markedly for the non-zaibatsu military portfolio. A cautious historical interpretation is that zaibatsu military firms became more fully incorporated into the aggregate dynamics of wartime mobilization, whereas independent military firms became more peripheral to them. This reading is consistent with the industrial composition of the portfolios. Firms in mining, shipping, machinery, and metals were positioned nearer the center of wartime industrial coordination than firms whose strategic importance was real but whose organizational integration was weaker. The fact that BIC uniformly selects AR(1) across all samples also strengthens the point that these contrasts are not artifacts of neglected residual dependence.

The distinction among PI, API, and TRI is central to the historical interpretation of these results. Under wartime Japan, these are not simply alternative technical measures of the same process. PI captures observed market prices. API adjusts for rights issues and related capital actions. TRI further incorporates dividends and therefore comes closest to realized shareholder return. Under the Temporary Funds Adjustment regime and the 1939 Company Profit Dividends and Fund Accommodation Ordinance, capital increases, rights allocations, retained earnings, and payout restrictions were themselves part of the institutional structure through which wartime gains or losses reached shareholders. This helps explain why the relative strength of zaibatsu military firms is clearest in PI, whereas API and TRI bring out a stronger polarization within the non-zaibatsu segment. The former pattern is consistent with the capitalization of expected priority access, procurement, and future expansion. The latter suggests that price appreciation alone did not exhaust wartime valuation and that some non-military firms could preserve shareholder value through payout and capital-adjustment channels even when they were not dominant in price performance alone.

This point is especially important for understanding the non-zaibatsu non-military portfolio. Its positive wartime alpha under API and TRI, but not under PI, is not easy to reconcile with a simple narrative in which wartime regulation produced only military winners and civilian losers. A more cautious interpretation is that some firms outside the formally prioritized military sectors retained the capacity to preserve shareholder value through the regulated channels of dividends, retained earnings, and capital adjustment. Given the industrial composition of this portfolio---textiles, paper, food, sugar, beer, fisheries, and chemicals---that possibility is historically plausible. These were not industries outside the wartime economy; some remained tied to basic consumption, intermediate inputs, or quasi-essential demand. The result therefore suggests that wartime valuation was shaped not only by military priority but also by the corporate and regulatory channels through which returns were distributed.

Tables \ref{age_war_tab7} to \ref{age_war_tab9} reinforce these conclusions when the event-study evidence is read by pattern rather than chronology. Several events were broadly favorable to the zaibatsu military portfolio, especially in cumulative terms. The reimposition of the gold export ban in December 1931, a decisive break with gold-standard orthodoxy and a major turning point in Japan's reflationary policy regime, generated positive reactions for zaibatsu military across AR, CAR, and SCAR. The Tanggu Truce of 1933, the agreement that ended large-scale fighting after the Manchurian crisis and consolidated Japan's strategic position in North China, and the Military Academy Incident of 1934, an internal army purge that revealed the intensifying political influence of the military, were also associated with positive cumulative responses. Likewise, the February 26 Incident of 1936, an attempted coup by young army officers in Tokyo that exposed both regime instability and the growing weight of the military in politics, produced a broadly positive response for zaibatsu military while non-zaibatsu portfolios reacted negatively or weakly. These were not simply ``war events'' in a narrow sense. Rather, they were episodes in which investors could plausibly revise upward the value of firms already embedded in the core circuits of state-led industrial mobilization.

By contrast, many major wartime regulations after 1937 generated little immediate response across portfolios under any index definition. This includes the Temporary Funds Adjustment Law, which placed major corporate fundraising under administrative approval; the National Mobilization Law, the framework statute that greatly expanded state authority over labor, prices, and production; the Company Accounting Control Ordinance, which strengthened wartime control over accounting practices and internal funds; the Stock Price Control Ordinance, which sought to restrain equity-price movements under wartime conditions; and the Wartime Finance Bank Law, which established a policy-based financial institution to channel funds toward priority sectors. This does not diminish the historical importance of these measures. It suggests instead that the significance of wartime regulation lay not necessarily in the announcement of formal rules at a single date, but in the longer-run restructuring of firms' access to finance, materials, and procurement. The market, in other words, did not mechanically reprice firms every time a major control measure was enacted; it responded more selectively when events altered expectations about which firms would actually benefit from the controlled system (see \href{https://at-noda.com/appendix/zaibatsu_appendix.pdf}{Online Appendix} A.1).

The event-study evidence is equally important for understanding the persistent weakness of the non-zaibatsu military portfolio. Its reactions are not uniformly negative, but they are unstable and often fail to cumulate. It reacts negatively across all three measures to the Hamaguchi assassination attempt of November 1930, the attack on a prime minister closely associated with austerity and the return to gold, and negatively to the 1939 dividend-and-fund-accommodation ordinance, the wartime measure that tightened restrictions on corporate finance and payout. It also reacts negatively on a cumulative basis to the outbreak of World War I\hspace{-1.2pt}I in Europe in September 1939 and to the Tripartite Pact of September 1940, the formal alliance linking Japan with Germany and Italy. Even where it reacts positively---for example, around the reimposition of the gold export ban or the U.S. asset freeze and oil embargo of July 1941, the sanctions shock that sharply tightened Japan's external resource constraint---the response is often confined to AR and does not extend to CAR or SCAR. This pattern suggests that independent military firms were not priced as secure long-run beneficiaries of wartime mobilization. Strategic designation did not eliminate uncertainty over financing, materials, execution, or the terms on which wartime demand would be converted into earnings.

The broader point, then, is more specific than a general claim about wartime ``distortion.'' Wartime Japan did not generate a single premium on military production. It created a hierarchically structured field in which some firms---especially large zaibatsu-affiliated firms in mining, shipping, machinery, metals, and related strategic industries---could convert policy priority into credible claims on future earnings, while others, including many independent military firms, could not. The market was therefore pricing not only military demand, but also organizational position within a controlled economy.

\subsection{Wartime Reallocation, Group Finance, and Institutional Investors}

This section considers the concrete channels through which wartime controls translated into differential shareholder outcomes. The zaibatsu were not merely large firms, but diversified business groups that combined holding companies, affiliated banks, trading companies, and industrial subsidiaries. In such a structure, wartime gains did not accrue only at the factory level. They were transmitted through internal capital markets, intercorporate shareholding, dividend flows, and capital gains within the group. This organizational structure is essential for understanding why the wartime stock market could favor some firms more consistently than others.

The first channel was the reallocation of investment toward sectors most directly tied to military mobilization. As \citet{takeda2020ze} shows, the major zaibatsu had already built substantial positions in heavy and chemical industry before full wartime mobilization, but this orientation intensified sharply between 1937 and 1941. Mitsui's investment share in heavy and chemical industries rose from 22.1\% to 39.9\%, Mitsubishi's from 27.1\% to 36.5\%, and Sumitomo's from 35.2\% to 65.5\%. The Sino-Japanese War, which began in 1937, marked the transition to a far more intensive regime of military procurement and industrial coordination. Under these conditions, sectors such as mining, metals, machinery, shipbuilding, and aircraft production became central because they occupied the core of the wartime production system.

Yet the historical importance of these figures lies less in the fact of industrial expansion alone than in who could undertake it under wartime controls. The government relied heavily on large zaibatsu-affiliated firms to expand military production capacity, and these groups were better able to do so because they combined preexisting productive capacity with access to internal finance and administrative coordination. Mitsubishi Heavy Industries in aircraft and shipbuilding, Sumitomo Metal Industries in strategic materials, and other large group-affiliated firms were therefore not merely beneficiaries of military demand in the abstract. They were organizationally positioned to convert state priority into plant expansion, procurement contracts, and relatively stable production increases. In this respect, the historical evidence in this section complements the findings in Section~\ref{zaibatsu_sec6}: formal designation as a military-related firm did not by itself produce superior market performance; what mattered was whether that designation could be translated into effective access to capital, materials, and coordinated investment.

A second channel operated through the profit structure of the holding companies themselves. Wartime gains were not confined to operating profits at the subsidiary level. They also appeared as financial returns captured by the apex of the group. \citet{hamabuchi1981rmz} shows that, in Mitsubishi's holding company accounts for 1936--1941, dividends accounted for approximately 51\% of total revenue and capital gains from stock sales for roughly 44\%. Securities, primarily shares in affiliated firms, comprised more than 80\% of total assets. Zaibatsu profitability under wartime conditions was therefore not simply a story of factories producing more weapons. It was equally a story of group-level financial claims on subsidiary earnings and asset appreciation. Wartime expansion in heavy industry was monetized not only through industrial output but also through the ownership structure of the groups themselves.

This financial structure also helps explain why wartime controls did not necessarily reduce the importance of equity ownership. On the contrary, under dividend restrictions and capital controls, retained earnings and intercorporate shareholding became even more important mechanisms through which value was preserved and reallocated. \citet{takeda2020ze} notes that zaibatsu firms used retained earnings, strengthened by wartime payout restrictions, both to expand facilities during the war and to reserve resources for the postwar period. From the standpoint of market valuation, this means that wartime gains could reach shareholders through several channels at once: current dividends where permitted, rising expected future earnings, capital gains within group portfolios, and the accumulation of retained profits that strengthened future competitive position. This helps explain why the wartime pricing of zaibatsu firms cannot be reduced to a simple reaction to current military orders alone.

The contrast with non-zaibatsu and regional firms is equally revealing. Many such firms faced limited direct government support, shortages of raw materials, and tighter financing constraints. Under wartime controls, these disadvantages mattered not only because they restricted current output, but because they weakened firms' ability to convert military demand into durable earnings. This historical asymmetry provides one plausible background for the persistent weakness of the non-zaibatsu military portfolio documented above. The broader implication is that wartime Japan did not reward all firms inside military-priority sectors equally. It favored firms that combined formal priority with organizational capacity, financial flexibility, and established administrative standing.

A third channel operated through the changing role of institutional investors, especially life insurance companies. Life insurers were among the few institutions capable of supplying long-term funds on a large scale in a market that had previously been more heavily dominated by individual investors. From the late 1920s onward, life insurance companies expanded their shareholdings and became increasingly important as stable shareholders and absorbers of newly issued securities (\citet{asajima1991hhj,takeda2009ibl}). Firms such as Nippon Life and Dai-ichi Life became major shareholders in prominent corporations and contributed to market depth and stability. Their rise therefore represented a structural shift in the composition of demand in Japanese capital markets.

At the same time, life insurers were not homogeneous. Zaibatsu-affiliated insurers such as Mitsui Life and Sumitomo Life tended to pursue relatively conservative long-term strategies, whereas some smaller insurers adopted more aggressive investment practices in pursuit of higher yields (\citet{takeda2009ibl}). \citet{asajima1991hhj} notes that from 1933 onward, life insurers expanded their holdings of non-zaibatsu firms, suggesting that they partially supplemented a capital allocation system otherwise tilted toward large business groups. During wartime, however, insurers were also drawn more directly into the state's financing apparatus. They purchased government bonds, underwrote corporate bonds, and extended loans to military-related enterprises (\citet{shibata2011mcw}). Life insurers were thus not merely passive portfolio investors, but institutional conduits through which wartime finance was stabilized and directed.

The composition of insurer portfolios changed accordingly. Wartime portfolios became more heavily oriented toward government bonds, corporate bonds, and securities related to military industries. Despite low-interest policies and declining government bond yields, major insurers such as Nippon Life, Teikoku Life, Dai-ichi Life, and Meiji Life were able to maintain relatively stable nominal returns by combining bond income, loan interest, and stock dividends. By contrast, zaibatsu-affiliated insurers, including Mitsui, Sumitomo, and Asahi, tended to hold larger proportions of equities and therefore had greater exposure to unrealized gains. Some smaller insurers also achieved high returns through more aggressive securities investment in the late 1930s (\citet{takeda2009ibl}). At the same time, these nominally stable returns should not be overstated: \citet{hirayama2019pwj} points out that real returns deteriorated in the later wartime years because of inflation. Even so, life insurers played a crucial role as institutional investors and providers of long-term finance in a controlled wartime economy.

Taken together, these observations clarify why wartime shareholder outcomes cannot be understood solely through the binary contrast between military and civilian firms. The historical evidence points instead to a more differentiated structure. Zaibatsu groups benefited because they combined industrial presence in priority sectors with internal financial channels that allowed profits to be transmitted, retained, and redeployed within the group. Life insurers mattered because they provided stable demand for securities and long-term finance, while also becoming increasingly integrated into the machinery of wartime allocation. This perspective helps reconcile the results in the preceding subsections. The relative strength of zaibatsu military firms reflected not only direct exposure to military production, but also the institutional capacity to convert wartime controls into finance, investment, and shareholder value. Conversely, the weakness of many non-zaibatsu military firms suggests that formal strategic importance without comparable organizational backing was often insufficient. More broadly, wartime market valuation was shaped not only by the direction of military demand, but by the institutional channels through which gains could be secured, distributed, and sustained over time.

This historical evidence provides the institutional interpretation of the group-continuation state introduced in the \href{https://at-noda.com/appendix/zaibatsu_appendix.pdf}{Online Appendix}. The state should not be read as current military operating profit. Rather, it captures revisions in the value of group-level financial claims transmitted through holding-company dividends, intercorporate shareholding, retained earnings, and internal claims on subsidiary profits. This channel is especially useful for interpreting events such as the Fall of Guadalcanal, where the cumulative response is concentrated in the zaibatsu non-military portfolio rather than in the zaibatsu military portfolio.

%% file: zaibatsu_conclusion.tex
\section{Concluding Remarks}\label{zaibatsu_sec7}

This paper has examined how the Japanese stock market priced wartime controls in a hybrid regime in which administrative allocation expanded without fully displacing market trading. The main conclusion is not that the market ceased to function, nor that wartime Japan can be characterized in blanket terms as semi-strong inefficient. Rather, price formation remained operative, but the information incorporated into valuations was filtered through firms' institutional position, organizational structure, and access to state-controlled resources.

The portfolio results indicate a differentiated rather than uniform wartime reordering. In the full sample, the clearest contrast is between the positive alpha of the zaibatsu military portfolio and the negative alpha of the non-zaibatsu military portfolio. Cross-sectional differences are more limited before the war, but they widen during the wartime years. Under the PI, both zaibatsu portfolios and the non-zaibatsu non-military portfolio are positive, whereas the non-zaibatsu military portfolio is strongly negative. Under the API and TRI, the sharpest contrast appears within the non-zaibatsu segment, where independent military firms remain negative while independent non-military firms become positive. These patterns are difficult to reconcile with any simple claim that wartime Japan generated a uniform premium for either zaibatsu firms or military firms as such. Relative advantage instead depended on the interaction of sectoral position with financial and organizational capacity.

The model clarifies why this distinction matters. The rise of the zaibatsu military portfolio in market capitalization should not be interpreted solely as a price-index phenomenon. Market capitalization reflects both per-unit valuation and the scale of installed capital. If zaibatsu military firms faced lower financing wedges and could translate valuations into expansion more effectively, they could come to dominate market capitalization even when some independent military firms displayed favorable price performance. In this sense, wartime controls affected not only expected payoffs but also the conversion of valuations into economic scale.

The event-study results point in the same general direction, but they also require caution. Tables \ref{age_war_tab7} to \ref{age_war_tab9} do not support a blanket characterization of wartime Japan as either semi-strong efficient or semi-strong inefficient. Many major events, including several important wartime regulations, generate no significant reaction, and even where significant responses appear, their sign and cumulative profile vary across the four portfolios. Public news, in other words, was not incorporated uniformly across firms. At the same time, significant CARs and SCARs should not be read mechanically as evidence of delayed adjustment, since the data do not permit a clean separation among post-announcement drift, pre-event leakage, and other contemporaneous shocks.

The extended model helps organize this heterogeneity. Some political, diplomatic, monetary, and sanctions-related events are best interpreted as regime-risk shocks. These shocks moved non-zaibatsu firms more strongly because such firms were more exposed to external finance, policy uncertainty, sanctions, and administrative reallocation, whereas zaibatsu groups were partially insulated by internal capital markets, diversified group claims, and privileged access. Other events are better interpreted as embedded-rent shocks: they revised the value of being inserted into the state-led strategic order and generated cumulative responses concentrated in the zaibatsu military portfolio. Still other episodes, such as the Fall of Guadalcanal, are better interpreted through group-continuation values, because the cumulative response is concentrated in the zaibatsu non-military portfolio rather than in firms directly tied to military demand.

Historically, the relative strength of the zaibatsu military portfolio was plausibly associated with more than formal placement in a priority sector. Internal capital markets, group governance, centralized monitoring, and preferential access to credit, materials, and procurement made it easier for major zaibatsu firms to convert wartime priority into more dependable shareholder outcomes. Conversely, the persistent weakness of the non-zaibatsu military portfolio suggests that formal strategic importance without comparable financial and administrative support was often insufficient. The distinction among the PI, API, and TRI reinforces this point. In wartime Japan, these measures were not fully interchangeable: the PI appears to capture more directly the capitalization of expected priority access in market prices, whereas the API and TRI more clearly reflect the channels through which value reached shareholders once capital adjustments, rights issues, retained earnings, and payout restrictions are taken into account.

These findings suggest that the central issue in wartime market efficiency is not whether intervention simply destroyed the market, but how price formation operated when the mapping from fundamentals to returns was mediated by administrative controls and political hierarchy. Stock prices continued to respond to news, but the news being priced included institutional rents, financing advantages, regime-risk exposure, and group-level continuation values. This is the sense in which wartime Japanese stock prices are best understood as exhibiting institutionally contingent efficiency.

More broadly, the paper highlights the value of analyzing hybrid or controlled market regimes not through a simple market-versus-state dichotomy, but as settings in which price formation remains active under institutionally uneven conditions. At the same time, the paper's identification strategy remains constrained by coarse portfolio classifications, survivorship among listed firms, and uncertainty about when information was actually acquired in a censored wartime environment. More modestly, but more securely, this paper shows that state intervention in wartime Japan reshaped rather than eliminated the informational role of prices, and that the effects of this reshaping were distributed unevenly across firms according to their institutional location.

%% file: zaibatsu_ack.tex
\section*{Acknowledgments}
The authors thank Shinichi Hirota, Tatsuki Inoue, Hideaki Miyajima, Kaiji Motegi, Tetsuji Okazaki, Yusuke Osaki, Masato Shizume, Yusuke Tsujimoto, Tomoaki Yamada, Junichi Yamanoi, Tatsuma Wada, as well as seminar participants at Meiji University and Waseda University and conference participants at the Japanese Economic Association 2026 Spring Meeting and the 100th Annual Conference of the Western Economic Association International, for their helpful comments and suggestions. The authors also acknowledge financial support from the Japan Society for the Promotion of Science (JSPS) KAKENHI (Grant Numbers 23H00838 and 23K25535) and the Japan Science and Technology Agency (JST), Moonshot Research and Development Program (Grant Number JPMJMS2215). All data and code used in this study are available from the authors upon request.

%% file: zaibatsu_table.tex
\bigskip

\bigskip

\bigskip

\setcounter{table}{0}
\renewcommand{\thetable}{\arabic{table}}

\clearpage

\begin{table}[p]
\caption{Industry Classification under the Temporary Funds Adjustment Law}
\label{age_war_tab1}
\begin{center}{\scriptsize
\begin{tabular}{p{15cm}}\hline\hline
Category A ({\it{Ko}}-shu)\\\hline
\vspace{-5mm}
\begin{enumerate}
 \setlength{\itemsep}{0pt}       
 \setlength{\parskip}{0pt}       
 \item Major part of metal mining
 \item Coal mining (excluding lignite)
 \item Petroleum mining
 \item Major part of iron manufacturing
 \item Machine tool manufacturing (excluding sawmilling and woodworking machines)
 \item Vehicle manufacturing---locomotives, freight cars and automobiles
 \item Shipbuilding---steel vessels
 \item Aircraft manufacturing industry
 \item Manufacturing of weapons and weapon components
 \item Glass and glass product manufacturing---optical glass, tempered glass, etc.
 \item Transportation---railways and tramways necessary for military and industrial purposes; ocean and coastal shipping; and aviation
\end{enumerate}
\\[-2mm]
\\[0.5mm] 
\hline
Category B ({\it{Otsu}}-shu)\\\hline
\vspace{-5mm}
\begin{enumerate}
 \setlength{\itemsep}{0pt}       
 \setlength{\parskip}{0pt}       
 \item Coal mining---lignite
 \item Raw silk manufacturing
 \item Synthetic fiber manufacturing
 \item Textile industry---artificial fiber textiles and linen textiles
 \item Most of non-ferrous metal materials manufacturing
 \item Crane manufacturing
 \item Sewing machine manufacturing
 \item Vehicle manufacturing---passenger cars, electric cars, and motorcycles
 \item Shipbuilding---wooden ships
 \item Synthetic rubber manufacturing
 \item Pulp manufacturing
 \item Industrial salt manufacturing
 \item Sawmilling
 \item Sugar manufacturing
 \item Agriculture and forestry
 \item Fisheries
 \item Warehousing
 \item Businesses related to education, culture, sports, social services, and medical care
\end{enumerate}
\\[-2mm]
\\[0.5mm] 
\hline
Category C ({\it{Hei}}-shu)\\\hline
\vspace{-5mm}
\begin{enumerate}
 \setlength{\itemsep}{0pt}       
 \setlength{\parskip}{0pt}       
 \item Silk floss and cotton manufacturing
 \item Spinning industry (excluding flax yarn)
 \item Knitting and braiding manufacturing
 \item Pig iron smelting industry
 \item Manufacturing of metal tableware
 \item Manufacturing of household electrical appliances
 \item Elevator manufacturing
 \item Clock manufacturing
 \item Safe manufacturing
 \item Musical instrument manufacturing
 \item Bicycle manufacturing
 \item Waterworks equipment manufacturing
 \item Cement manufacturing
 \item Lime manufacturing
 \item Superphosphate of lime manufacturing
 \item Seasoning manufacturing
 \item Match manufacturing
 \item Department store industry
 \item Financial services
 \item Newspaper publishing, book and magazine publishing
 \item Businesses related to entertainment and amusement
 \setlength{\itemsep}{3pt}       
 \setlength{\parskip}{5pt} 
\end{enumerate}
\\[-2mm]
\\[0.5mm]\hline\hline
\end{tabular}}
{\resizebox{15cm}{!}{\begin{minipage}{500pt}\scriptsize
\vspace*{5pt}
{\underline{Note:}} The electric power industry falls under Category A when it supplies power necessary for Category A businesses; otherwise it falls under Category B. Gas is classified as Category B (see \citet[pp.75--76]{mof1957hfm}).\\
{\underline{Source:}} Compiled by the authors based on Tables 3 and 4 in \citet[p.294]{boj1982hyh}
\end{minipage}}}%
\end{center}
\end{table}

\clearpage

\begin{landscape}
\begin{table}[p]
\caption{Descriptive Statistics for the Nominal Stock Prices in the Tokyo Stock Exchange}
\label{age_war_tab2}
\begin{center}\resizebox{22cm}{!}{
\begin{tabular}{lp{18cm}ccccccccccccc}\hline\hline
\underline{Listed issue} &  & Zaibatsu & Military &  & Mean & SD & Min & Max & & Skewness & Kurtosis & & $\mathcal{N}$ &  \\\cline{3-4}\cline{6-9}\cline{11-12}\cline{14-14}
\quad Tokyo stock exchange, new share &  &  &  &  & $1386.15$  & $271.98$  & $803.00$  & $2279.00$ &  & $0.22$  & $2.89$ &  & $3911$ &  \\
\quad Osaka stock exchange, new share &  &  &  &  & $813.65$  & $204.00$  & $388.00$  & $1376.00$ &  & $0.58$  & $3.04$ &  & $2910$ &  \\
\quad Kanebo (Kanegafuchi Boseki), old share &  &  &  &  & $1953.32$  & $444.70$  & $1225.00$  & $3271.00$ &  & $0.26$  & $2.08$ &  & $3625$ &  \\
\quad Kanebo (Kanegafuchi Boseki), new share\#1 &  &  &  &  & $1283.24$  & $578.86$  & $500.00$  & $3170.00$ &  & $1.21$  & $3.99$ &  & $2341$ &  \\
\quad Kanebo (Kanegafuchi Boseki), new share\#2 &  &  &  &  & $897.69$  & $140.82$  & $668.00$  & $1271.00$ &  & $0.56$  & $2.25$ &  & $1384$ &  \\
\quad Dalian stock and commodity exchange, new share &  &  &  &  & $82.35$  & $13.19$  & $55.00$  & $112.00$ &  & $0.52$  & $2.72$ &  & $75$ &  \\
\quad Teikoku jinzo kenshi (Teijin), new share &  &  &  &  & $920.64$  & $166.72$  & $632.00$  & $1364.00$ &  & $0.42$  & $2.29$ &  & $2713$ &  \\
\quad Toyo rayon (Toray Industries), new share &  & $\bigcirc$ &  &  & $787.80$  & $135.96$  & $513.00$  & $1176.00$ &  & $0.37$  & $2.50$ &  & $2613$ &  \\
\quad Asano sement, new share &  & $\bigcirc$ &  &  & $235.48$  & $90.48$  & $35.00$  & $417.00$ &  & $0.02$  & $2.15$ &  & $1769$ &  \\
\quad Fuji seishi (Fuji paper company) &  &  &  &  & $217.28$  & $80.57$  & $92.00$  & $479.00$ &  & $0.94$  & $3.46$ &  & $605$ &  \\
\quad Oji seishi (Oji paper company) &  & $\bigcirc$ &  &  & $944.27$  & $172.46$  & $597.00$  & $1190.00$ &  & $-0.34$  & $1.60$ &  & $910$ &  \\
\quad Nissin boseki (nissin spinning company) &  &  &  &  & $1082.34$  & $104.21$  & $849.00$  & $1234.00$ &  & $-0.66$  & $2.11$ &  & $716$ &  \\
\quad Nichiro gyogyo &  &  &  &  & $655.10$  & $84.34$  & $386.00$  & $869.00$ &  & $-0.13$  & $3.29$ &  & $2806$ &  \\
\quad Nippon kogyo (Nippon mining corporation), old share &  & $\bigcirc$ & $\bigcirc$ &  & $801.21$  & $235.73$  & $480.00$  & $1332.00$ &  & $0.68$  & $2.12$ &  & $1892$ &  \\
\quad Nippon kogyo (Nippon mining corporation), new share &  & $\bigcirc$ & $\bigcirc$ &  & $728.52$  & $280.38$  & $335.00$  & $1155.00$ &  & $-0.12$  & $1.26$ &  & $993$ &  \\
\quad Nippon yusen, old share &  & $\bigcirc$ & $\bigcirc$ &  & $852.66$  & $152.00$  & $582.00$  & $1180.00$ &  & $0.19$  & $1.91$ &  & $2286$ &  \\
\quad Nippon yusen, new share &  & $\bigcirc$ & $\bigcirc$ &  & $436.61$  & $295.55$  & $50.00$  & $1000.00$ &  & $0.36$  & $1.65$  & & $3660$ &  \\
\quad Mitsubishi jukogyo (Mitubishi heavy industries) &  & $\bigcirc$ & $\bigcirc$ &  & $855.82$  & $95.93$  & $688.00$  & $1184.00$ &  & $0.78$  & $3.05$ &  & $2025$ &  \\
\quad Nippon oil &  &  & $\bigcirc$ &  & $616.49$  & $179.15$  & $204.00$  & $1063.00$ &  & $-0.02$  & $2.05$ &  & $3920$ &  \\
\quad Hitachi seisakusho (Hitachi) &  & $\bigcirc$ & $\bigcirc$ &  & $870.47$  & $155.28$  & $599.00$  & $1473.00$ &  & $0.74$  & $4.03$ &  & $1976$ &  \\
\quad Nihon Sangyo (Nissan corporation) &  & $\bigcirc$ &  &  & $688.99$  & $362.21$  & $124.00$  & $1444.00$ &  & $0.07$  & $2.00$ &  & $2332$ &  \\
\quad Manshu jukogyo kaihatsu (Manchukuo heavy industries development corporation) &  & $\bigcirc$ & $\bigcirc$ &  & $678.98$  & $96.81$  & $500.00$  & $891.00$ &  & $0.33$  & $1.89$ &  & $1649$ &  \\
\quad Mitsubishi kogyo (Mitubishi mining corporation) &  & $\bigcirc$ & $\bigcirc$ &  & $1189.03$  & $86.80$  & $988.00$  & $1429.00$ &  & $0.75$  & $3.18$ &  & $827$ &  \\
\quad Nippon denki kogyo (Resonac holdings corporation) &  & $\bigcirc$ & $\bigcirc$ &  & $754.29$  & $109.09$  & $565.00$  & $1037.00$ &  & $0.63$  & $2.48$ &  & $725$ &  \\
\quad Nippon kokan (JFE engineering) &  & $\bigcirc$ & $\bigcirc$ &  & $929.70$  & $234.85$  & $535.00$  & $1609.00$ &  & $0.70$  & $3.06$ &  & $2673$ &  \\
\quad Kokura seiko (Kokura iron \& steel corporation) &  & $\bigcirc$ & $\bigcirc$ &  & $618.02$  & $127.20$  & $430.00$  & $980.00$ &  & $1.04$  & $2.77$ &  & $1159$ &  \\
\quad Tokyo dento (Tokyo electric light) &  &  & $\bigcirc$ &  & $428.83$  & $169.34$  & $127.00$  & $690.00$ &  & $-0.45$  & $1.67$ &  & $3432$ &  \\
\quad Nippon denryoku (Nippon electric power company) &  &  &  &  & $521.87$  & $43.68$  & $414.00$  & $625.00$ &  & $-0.28$  & $2.74$ &  & $931$ &  \\
\quad Minami manshu tetsudo (South manchuria railway), old share &  &  & $\bigcirc$ &  & $593.03$  & $150.25$  & $230.00$  & $823.00$ &  & $-1.10$  & $3.17$ &  & $3157$ &  \\
\quad Minami manshu tetsudo (South manchuria railway), new share &  &  & $\bigcirc$ &  & $389.55$  & $122.05$  & $164.00$  & $601.00$ &  & $-0.12$  & $1.85$ &  & $777$ &  \\
\quad Hokkaido tanko kisen (Hokkaido colliery \& steamship) &  & $\bigcirc$ & $\bigcirc$ &  & $775.00$  & $93.17$  & $571.00$  & $1050.00$ &  & $0.79$  & $3.26$ &  & $2192$ &  \\
\quad Dai-nippon jinzo hiryo (Nissan chemical corporation) &  &  &  &  & $525.05$  & $57.07$  & $394.00$  & $637.00$ &  & $-0.22$  & $2.24$ &  & $813$ &  \\
\quad Showa hiryo (Resonac holdings corporation) &  & $\bigcirc$ &  &  & $768.23$  & $90.30$  & $576.00$  & $972.00$ &  & $-0.30$  & $2.40$ &  & $630$ &  \\
\quad Dai-nippon seito (Dai-nippon sugar corporation), new shares\#2 &  & $\bigcirc$ &  &  & $154.44$  & $66.29$  & $40.00$  & $236.00$ &  & $-0.39$  & $1.71$ &  & $141$ &  \\
\quad Dai-nippon seito (Dai-nippon sugar corporation), new shares integrated &  & $\bigcirc$ &  &  & $639.76$  & $219.13$  & $120.00$  & $1046.00$ &  & $-0.89$  & $2.91$ &  & $3200$ &  \\
\quad Meiji seito (Meiji sugar corporation) &  &  &  &  & $920.63$  & $214.31$  & $528.00$  & $1212.00$ &  & $-0.44$  & $1.43$ &  & $1342$ &  \\
\quad Ensuiko sugar refining (Taiwan) &  &  &  &  & $671.77$  & $91.46$  & $489.00$  & $873.00$ &  & $0.71$  & $2.41$ &  & $1151$ &  \\
\quad Nippon soda &  & $\bigcirc$ &  &  & $664.71$  & $216.68$  & $212.00$  & $1266.00$ &  & $-0.01$  & $3.31$ &  & $1436$ &  \\
\quad Dai-nippon bear (Dai-nippon brewery company), new share &  &  &  &  & $386.11$  & $83.80$  & $178.00$  & $629.00$ &  & $-0.11$  & $2.52$ &  & $3425$ &  \\
\quad Nippon suisan (Nissui corporation), new share &  & $\bigcirc$ &  &  & $481.61$  & $78.33$  & $356.00$  & $686.00$ &  & $0.71$  & $2.47$ &  & $1286$ &  \\
\quad Rasa kogyo (Rasa industries) &  &  &  &  & $638.21$  & $163.32$  & $270.00$  & $862.00$ &  & $-0.51$  & $1.81$ &  & $664$ &  \\\hline\hline
\end{tabular}}
{\resizebox{21.5cm}{!}{\begin{minipage}{800pt}\scriptsize
\vspace*{5pt}
{\underline{Sources:}} All data are obtained from the {\it Chugai Shogyo Shimpo}.
\end{minipage}}}%
\end{center}
\end{table}
\end{landscape}

\clearpage

\begin{figure}[tbp]
 \caption{Time-Series Plots of Market Indices and Market-Capitalization Shares}
 \label{age_war_fig1}
 \begin{center}
  \includegraphics[scale=0.6]{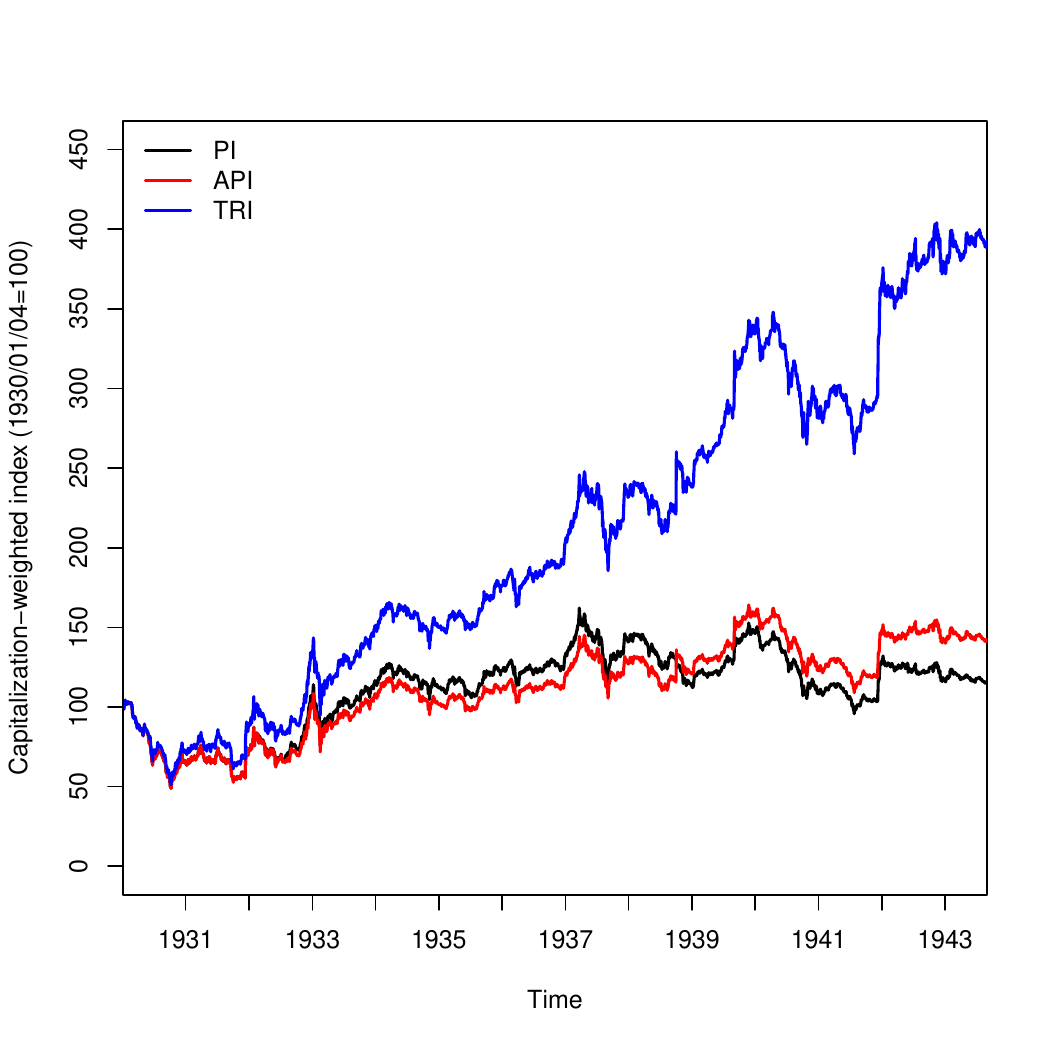}
  \includegraphics[scale=0.6]{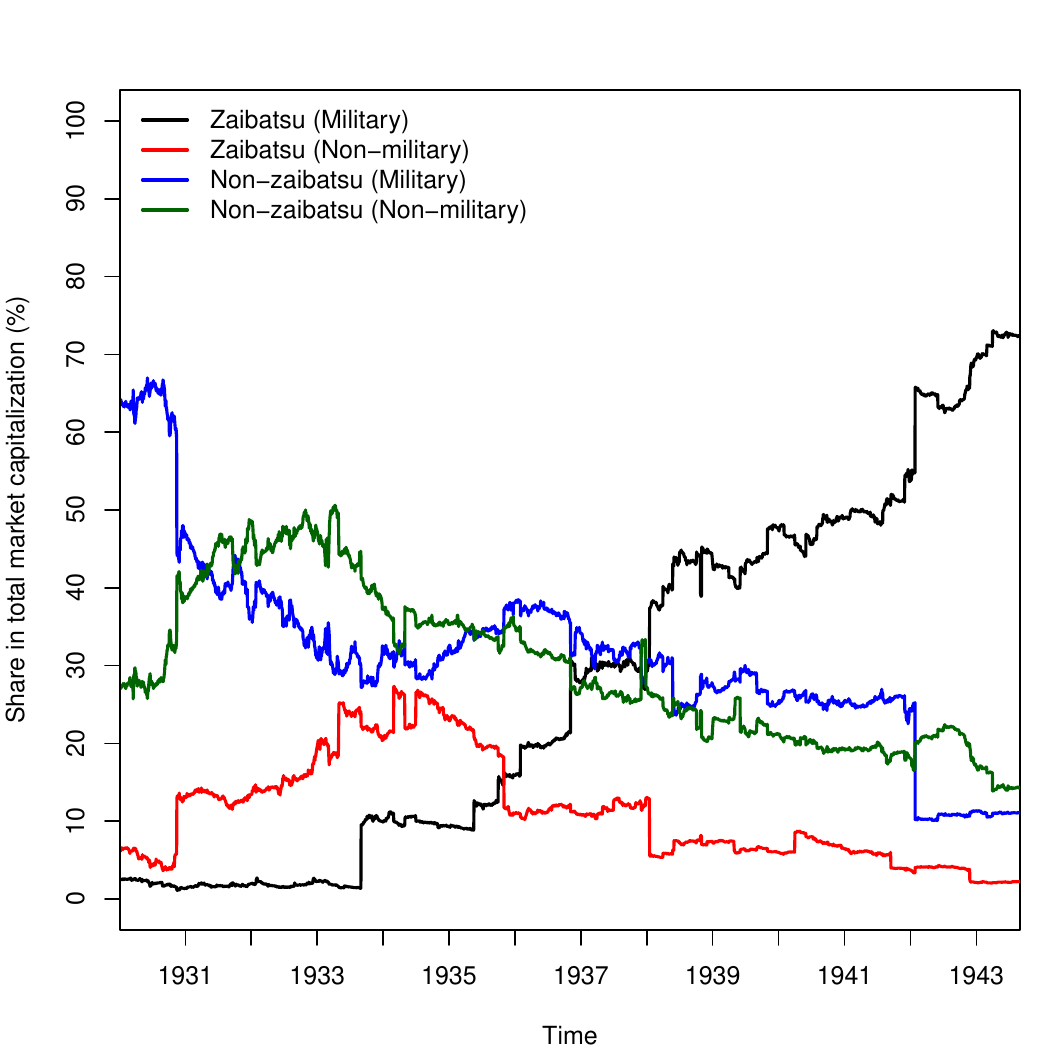}
 \end{center}
\end{figure}

\clearpage

\begin{figure}[tbp]
 \caption{Time-Series Plots of the Market Indices for Each Portfolio}
 \label{age_war_fig2}
 \begin{center}
  \includegraphics[scale=0.7]{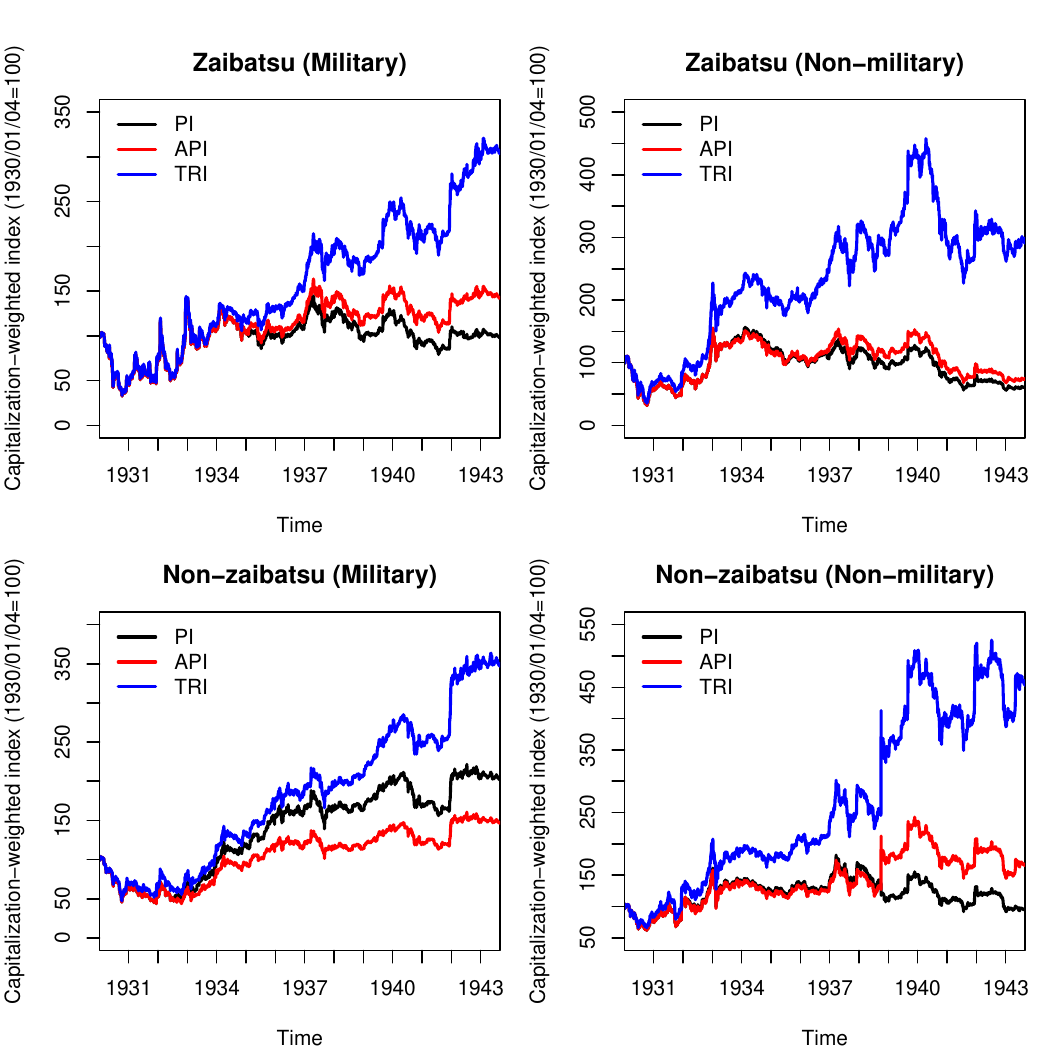}
 \end{center}
\end{figure}

\clearpage

\begin{table}[tbp]
\caption{Descriptive Statistics and Unit Root Tests for the Return Series}
\label{age_war_tab3}
\begin{center}\resizebox{15cm}{!}{
\begin{tabular}{ccccccccccc}\hline\hline
\multicolumn{3}{l}{(1) Price Index (PI)} &  &  & \multicolumn{2}{c}{$Zaibatsu$} &  & \multicolumn{2}{c}{$Non\text{--}zaibatsu$} &  \\\cline{6-7}\cline{9-10}
 &  &  & $R_{mkt}$ &  & $R_{zm}$ & $R_{zn}$ &  & $R_{nm}$ & $R_{nn}$ &  \\\hline
 & Mean &  & $-0.0078$  &  & $-0.0078$  & $-0.0080$  &  & $-0.0077$  & $-0.0078$  &  \\
 & SD &  & 0.0111  &  & 0.0195  & 0.0147  &  & 0.0123  & 0.0129  &  \\
 & Min &  & $-0.1225$  &  & $-0.1608$  & $-0.1342$  &  & $-0.1278$  & $-0.1650$  &  \\
 & Max &  & 0.1181  &  & 0.2289  & 0.1397  &  & 0.1155  & 0.1212  &  \\\hdashline
 & ADF-GLS &  & $-8.7292$  &  & $-39.8607$  & $-11.6358$  &  & $-61.8306$  & $-5.1845$  &  \\
 & $\hat\phi$ &  & 0.1424  &  & 0.0379  & 0.1103  &  & 0.0247  & 0.3673  &  \\
 & Lag &  & 16  &  & 1  & 11  &  & 0  & 16  &  \\\hline
\multicolumn{3}{l}{(2) Adjusted Price Index (API)} &  &  & \multicolumn{2}{c}{$Zaibatsu$} &  & \multicolumn{2}{c}{$Non\text{--}zaibatsu$} &  \\\cline{6-7}\cline{9-10}
 &  &  & $R_{mkt}$ &  & $R_{zm}$ & $R_{zn}$ &  & $R_{nm}$ & $R_{nn}$ &  \\\hline
 & Mean &  & $-0.0077$  &  & $-0.0077$  & $-0.0079$  &  & $-0.0077$  & $-0.0077$  &  \\
 & SD &  & 0.0113  &  & 0.0194  & 0.0147  &  & 0.0120  & 0.0152  &  \\
 & Min &  & $-0.1225$  &  & $-0.1608$  & $-0.1342$  &  & $-0.1278$  & $-0.1650$  &  \\
 & Max &  & 0.1435  &  & 0.2289  & 0.1397  &  & 0.1155  & 0.5053  &  \\\hdashline
 & ADF-GLS &  & $-8.9022$  &  & $-39.9692$  & $-11.5630$  &  & $-61.6870$  & $-5.6848$  &  \\
 & $\hat\phi$ &  & 0.1291  &  & 0.0360  & 0.1087  &  & 0.0271  & 0.2259  &  \\
 & Lag &  & 16  &  & 1  & 11  &  & 0  & 16  &  \\\hline
\multicolumn{3}{l}{(3) Total Return Index (TRI)} &  &  & \multicolumn{2}{c}{$Zaibatsu$} &  & \multicolumn{2}{c}{$Non\text{--}zaibatsu$} &  \\\cline{6-7}\cline{9-10}
 &  &  & $R_{mkt}$ &  & $R_{zm}$ & $R_{zn}$ &  & $R_{nm}$ & $R_{nn}$ &  \\\hline
 & Mean &  & $-0.0075$  &  & $-0.0076$  & $-0.0076$  &  & $-0.0075$  & $-0.0075$  &  \\
 & SD &  & 0.0112  &  & 0.0194  & 0.0148  &  & 0.0119  & 0.0152  &  \\
 & Min &  & $-0.1225$  &  & $-0.1608$  & $-0.1342$  &  & $-0.1278$  & $-0.1650$  &  \\
 & Max &  & 0.1435  &  & 0.2289  & 0.1397  &  & 0.1155  & 0.5053  &  \\\hdashline
 & ADF-GLS &  & $-8.6291$  &  & $-39.9690$  & $-11.2776$  &  & $-61.6271$  & $-5.5787$  &  \\
 & $\hat\phi$ &  & 0.1357  &  & 0.0367  & 0.1117  &  & 0.0280  & 0.2352  &  \\
 & Lag &  & 16  &  & 1  & 11  &  & 0  & 16  &  \\\hline
 & $\mathcal{N}$ &  & 4019  &  & 4019  & 4019  &  & 4019  & 4019  &  \\\hline\hline
\end{tabular}}
{\begin{minipage}{420pt}\footnotesize
\vspace*{5pt}
{\underline{Notes:}}
\begin{itemize}
\item[(1)] ``$R_{mkt}$,'' ``$R_{zm}$,'' ``$R_{zn}$,'' ``$R_{nm}$,'' and ``$R_{nn}$'' denote the returns on the market, Zaibatsu military industry, Zaibatsu non-military industry, Non-zaibatsu military industry, Non-zaibatsu non-military industry portfolios, respectively.
\item[(2)] ``ADF-GLS'' denotes the ADF-GLS test statistics, ``Lags'' denotes the lag order selected by the MBIC, and ``$\hat\phi$'' denotes the coefficients vector in the GLS detrended series (see Equation (6) in \citet{ng2001lls}).
\item[(3)] In computing the ADF-GLS test, a model with a time trend and a constant is assumed. The critical value at the 1\% significance level for the ADF-GLS test is ``$-3.42$''.
\item[(4)] ``$\mathcal{N}$'' denotes the number of observations.
\item[(5)] R version 4.5.3 was used to compute the statistics.
\end{itemize}
\end{minipage}}%
\end{center}
\end{table}

\clearpage

\begin{table}[tbp]
\caption{Preliminary Estimation Results for the CAPM-AR($p$)-SV Model (PI)}
\label{age_war_tab4}
\begin{center}\resizebox{15cm}{!}{
\begin{tabular}{ccccccccc}\hline\hline
 \multicolumn{2}{c}{(1) Jan 4, 1930--Aug 31, 1943} &  & \multicolumn{2}{c}{$Zaibatsu$} &  & \multicolumn{2}{c}{$Non\text{--}zaibatsu$} &  \\\cline{4-5}\cline{7-8}
 &  &  & $Military$ & $Non\text{--}military$ &  & $Military$ & $Non\text{--}military$ &  \\\hline
 & \multirow{2}*{$\hat\alpha_i^{GLS}$} &  & $0.0005$ & $0.0001$ &  & $-0.0010$ & $0.0001$ &  \\
 &  &  & $[0.0001]$ & $[0.0001]$ &  & $[0.0001]$ & $[0.0001]$ &  \\
 & \multirow{2}*{$\hat\beta_i^{GLS}$} &  & $1.0731$ & $1.0311$ &  & $0.8566$ & $1.0186$ &  \\
 &  &  & $[0.0065]$ & $[0.0098]$ &  & $[0.0084]$ & $[0.0076]$ &  \\\hline
 & AR($p$) &  & $1$ & $1$ &  & $1$ & $1$ &  \\
 & $\bar{R}^2$ &  & $0.3890$ & $0.6172$ &  & $0.6937$ & $0.7731$ &  \\\hline\hline
 \multicolumn{2}{c}{(2) Jan 4, 1930--Jul 6, 1937} &  & \multicolumn{2}{c}{$Zaibatsu$} &  & \multicolumn{2}{c}{$Non\text{--}zaibatsu$} &  \\\cline{4-5}\cline{7-8}
 &  &  & $Military$ & $Non\text{--}military$ &  & $Military$ & $Non\text{--}military$ &  \\\hline
 & \multirow{2}*{$\hat\alpha_i^{GLS}$} &  & $0.0001$ & $-0.0002$ &  & $-0.0002$ & $-0.0002$ &  \\
 &  &  & $[0.0002]$ & $[0.0001]$ &  & $[0.0001]$ & $[0.0001]$ &  \\
 & \multirow{2}*{$\hat\beta_i^{GLS}$} &  & $1.0178$ & $0.9820$ &  & $0.9733$ & $0.9694$ &  \\
 &  &  & $[0.0183]$ & $[0.0122]$ &  & $[0.0105]$ & $[0.0089]$ &  \\\hline
 & AR($p$) &  & $1$ & $1$ &  & $1$ & $1$ &  \\
 & $\bar{R}^2$ &  & $0.3392$ & $0.6089$ &  & $0.7434$ & $0.7994$ &  \\\hline\hline
 \multicolumn{2}{c}{(3) Jul 7, 1937--Aug 31, 1943} &  & \multicolumn{2}{c}{$Zaibatsu$} &  & \multicolumn{2}{c}{$Non\text{--}zaibatsu$} &  \\\cline{4-5}\cline{7-8}
 &  &  & $Military$ & $Non\text{--}military$ &  & $Military$ & $Non\text{--}military$ &  \\\hline
 & \multirow{2}*{$\hat\alpha_i^{GLS}$} &  & $0.0005$ & $0.0007$ &  & $-0.0028$ & $0.0011$ &  \\
 &  &  & $[0.0001]$ & $[0.0001]$ &  & $[0.0001]$ & $[0.0001]$ &  \\
 & \multirow{2}*{$\hat\beta_i^{GLS}$} &  & $1.0787$ & $1.1386$ &  & $0.5671$ & $1.1736$ &  \\
 &  &  & $[0.0067]$ & $[0.0158]$ &  & $[0.0107]$ & $[0.0138]$ &  \\\hline
 & AR($p$) &  & $1$ & $1$ &  & $1$ & $1$ &  \\
 & $\bar{R}^2$ &  & $0.8910$ & $0.6464$ &  & $0.5101$ & $0.7232$ &  \\\hline\hline
\end{tabular}}
{\begin{minipage}{420pt}\footnotesize
{\underline{Notes:}}
\vspace*{5pt}
 \begin{itemize}
  \item[(1)] ``($\hat\alpha_i^{GLS},\hat\beta_i^{GLS})$,'' ``AR$(p)$,'' and ``$\bar{R}^2$ denote the GLS estimates, the optimal lag order for the CAPM-AR($p$)-SV model,  and the adjusted $R^2$, respectively.
  \item[(2)] Standard errors for the GLS estimates are between brackets.
  \item[(3)] R version 4.5.3 was used to compute the estimates.
 \end{itemize}
\end{minipage}}%
\end{center}
\end{table}

\clearpage

\begin{table}[tbp]
\caption{Preliminary Estimation Results for the CAPM-AR($p$)-SV Model (API)}
\label{age_war_tab5}
\begin{center}\resizebox{15cm}{!}{
\begin{tabular}{ccccccccc}\hline\hline
 \multicolumn{2}{c}{(1) Jan 4, 1930--Aug 31, 1943} &  & \multicolumn{2}{c}{$Zaibatsu$} &  & \multicolumn{2}{c}{$Non\text{--}zaibatsu$} &  \\\cline{4-5}\cline{7-8}
 &  &  & $Military$ & $Non\text{--}military$ &  & $Military$ & $Non\text{--}military$ &  \\\hline
 & \multirow{2}*{$\hat\alpha_i^{GLS}$} &  & $0.0004$ & $-0.0001$ &  & $-0.0013$ & $0.0006$ &  \\
 &  &  & $[0.0001]$ & $[0.0001]$ &  & $[0.0001]$ & $[0.0001]$ &  \\
 & \multirow{2}*{$\hat\beta_i^{GLS}$} &  & $1.0540$ & $1.0173$ &  & $0.8241$ & $1.0798$ &  \\
 &  &  & $[0.0064]$ & $[0.0096]$ &  & $[0.0084]$ & $[0.0078]$ &  \\\hline
 & AR($p$) &  & $1$ & $1$ &  & $1$ & $1$ &  \\
 & $\bar{R}^2$ &  & $0.3699$ & $0.5964$ &  & $0.6721$ & $0.7093$ &  \\\hline\hline
 \multicolumn{2}{c}{(2) Jan 4, 1930--Jul 6, 1937} &  & \multicolumn{2}{c}{$Zaibatsu$} &  & \multicolumn{2}{c}{$Non\text{--}zaibatsu$} &  \\\cline{4-5}\cline{7-8}
 &  &  & $Military$ & $Non\text{--}military$ &  & $Military$ & $Non\text{--}military$ &  \\\hline
 & \multirow{2}*{$\hat\alpha_i^{GLS}$} &  & $0.0002$ & $-0.0001$ &  & $-0.0004$ & $-0.0002$ &  \\
 &  &  & $[0.0002]$ & $[0.0001]$ &  & $[0.0001]$ & $[0.0001]$ &  \\
 & \multirow{2}*{$\hat\beta_i^{GLS}$} &  & $1.0266$ & $0.9887$ &  & $0.9629$ & $0.9758$ &  \\
 &  &  & $[0.0183]$ & $[0.0121]$ &  & $[0.0106]$ & $[0.0090]$ &  \\\hline
 & AR($p$) &  & $1$ & $1$ &  & $1$ & $1$ &  \\
 & $\bar{R}^2$ &  & $0.3389$ & $0.6115$ &  & $0.7491$ & $0.8063$ &  \\\hline\hline
 \multicolumn{2}{c}{(3) Jul 7, 1937--Aug 31, 1943} &  & \multicolumn{2}{c}{$Zaibatsu$} &  & \multicolumn{2}{c}{$Non\text{--}zaibatsu$} &  \\\cline{4-5}\cline{7-8}
 &  &  & $Military$ & $Non\text{--}military$ &  & $Military$ & $Non\text{--}military$ &  \\\hline
 & \multirow{2}*{$\hat\alpha_i^{GLS}$} &  & $-0.0002$ & $0.0000$ &  & $-0.0033$ & $0.0024$ &  \\
 &  &  & $[0.0001]$ & $[0.0001]$ &  & $[0.0001]$ & $[0.0001]$ &  \\
 & \multirow{2}*{$\hat\beta_i^{GLS}$} &  & $0.9794$ & $1.0504$ &  & $0.5025$ & $1.3451$ &  \\
 &  &  & $[0.0079]$ & $[0.0160]$ &  & $[0.0099]$ & $[0.0170]$ &  \\\hline
 & AR($p$) &  & $1$ & $1$ &  & $1$ & $1$ &  \\
 & $\bar{R}^2$ &  & $0.7177$ & $0.5362$ &  & $0.4284$ & $0.6745$ &  \\\hline\hline
\end{tabular}}
{\begin{minipage}{420pt}\footnotesize
{\underline{Notes:}} As for Table \ref{age_war_tab4}.
\end{minipage}}%
\end{center}
\end{table}

\clearpage

\begin{table}[tbp]
\caption{Preliminary Estimation Results for the CAPM-AR($p$)-SV Model (TRI)}
\label{age_war_tab6}
\begin{center}\resizebox{15cm}{!}{
\begin{tabular}{ccccccccc}\hline\hline
 \multicolumn{2}{c}{(1) Jan 4, 1930--Aug 31, 1943} &  & \multicolumn{2}{c}{$Zaibatsu$} &  & \multicolumn{2}{c}{$Non\text{--}zaibatsu$} &  \\\cline{4-5}\cline{7-8}
 &  &  & $Military$ & $Non\text{--}military$ &  & $Military$ & $Non\text{--}military$ &  \\\hline
 & \multirow{2}*{$\hat\alpha_i^{GLS}$} &  & $0.0003$ & $0.0000$ &  & $-0.0014$ & $0.0006$ &  \\
 &  &  & $[0.0001]$ & $[0.0001]$ &  & $[0.0001]$ & $[0.0001]$ &  \\
 & \multirow{2}*{$\hat\beta_i^{GLS}$} &  & $1.0532$ & $1.0204$ &  & $0.8151$ & $1.0831$ &  \\
 &  &  & $[0.0065]$ & $[0.0097]$ &  & $[0.0085]$ & $[0.0079]$ &  \\\hline
 & AR($p$) &  & $1$ & $1$ &  & $1$ & $1$ &  \\
 & $\bar{R}^2$ &  & $0.3694$ & $0.5837$ &  & $0.6672$ & $0.7075$ &  \\\hline\hline
 \multicolumn{2}{c}{(2) Jan 4, 1930--Jul 6, 1937} &  & \multicolumn{2}{c}{$Zaibatsu$} &  & \multicolumn{2}{c}{$Non\text{--}zaibatsu$} &  \\\cline{4-5}\cline{7-8}
 &  &  & $Military$ & $Non\text{--}military$ &  & $Military$ & $Non\text{--}military$ &  \\\hline
 & \multirow{2}*{$\hat\alpha_i^{GLS}$} &  & $0.0002$ & $0.0000$ &  & $-0.0004$ & $-0.0001$ &  \\
 &  &  & $[0.0002]$ & $[0.0001]$ &  & $[0.0001]$ & $[0.0001]$ &  \\
 & \multirow{2}*{$\hat\beta_i^{GLS}$} &  & $1.0240$ & $0.9922$ &  & $0.9611$ & $0.9770$ &  \\
 &  &  & $[0.0185]$ & $[0.0123]$ &  & $[0.0108]$ & $[0.0091]$ &  \\\hline
 & AR($p$) &  & $1$ & $1$ &  & $1$ & $1$ &  \\
 & $\bar{R}^2$ &  & $0.3382$ & $0.6018$ &  & $0.7437$ & $0.8023$ &  \\\hline\hline
 \multicolumn{2}{c}{(3) Jul 7, 1937--Aug 31, 1943} &  & \multicolumn{2}{c}{$Zaibatsu$} &  & \multicolumn{2}{c}{$Non\text{--}zaibatsu$} &  \\\cline{4-5}\cline{7-8}
 &  &  & $Military$ & $Non\text{--}military$ &  & $Military$ & $Non\text{--}military$ &  \\\hline
 & \multirow{2}*{$\hat\alpha_i^{GLS}$} &  & $-0.0002$ & $0.0000$ &  & $-0.0034$ & $0.0024$ &  \\
 &  &  & $[0.0001]$ & $[0.0001]$ &  & $[0.0001]$ & $[0.0001]$ &  \\
 & \multirow{2}*{$\hat\beta_i^{GLS}$} &  & $0.9780$ & $1.0495$ &  & $0.4972$ & $1.3604$ &  \\
 &  &  & $[0.0080]$ & $[0.0159]$ &  & $[0.0101]$ & $[0.0173]$ &  \\\hline
 & AR($p$) &  & $1$ & $1$ &  & $1$ & $1$ &  \\
 & $\bar{R}^2$ &  & $0.7179$ & $0.5171$ &  & $0.4289$ & $0.6783$ &  \\\hline\hline
\end{tabular}}
{\begin{minipage}{420pt}\footnotesize
{\underline{Notes:}} As for Table \ref{age_war_tab4}.
\end{minipage}}%
\end{center}
\end{table}

\clearpage

\begin{landscape}
\begin{table}[p]
\caption{Event Study Results  at $t=0$ (PI)}
\label{age_war_tab7}
\begin{center}\resizebox{23cm}{!}{
\begin{tabular}{lp{18cm}cccccccccccccccccc}\hline\hline
 &  &  & \multicolumn{7}{c}{$Zaibatsu$} &  & \multicolumn{7}{c}{$Non\text{--}zaibatsu$} \\\cline{4-19}
 &  &  & \multicolumn{3}{c}{$Military$} &  & \multicolumn{3}{c}{$Non\text{--}military$} &  & \multicolumn{3}{c}{$Military$} &  & \multicolumn{3}{c}{$Non\text{--}military$} \\\cline{4-6}\cline{8-10}\cline{12-14}\cline{16-18}
Event Day & Event Name & Event Type & $AR$ & $CAR$ & $SCAR$ &  & $AR$ & $CAR$ & $SCAR$ &  & $AR$ & $CAR$ & $SCAR$ &  & $AR$ & $CAR$ & $SCAR$\\\hline
1930-10-07 & Life Insurance Securities Corporation was established. & Market & 0 & 0 & 0 &  & 0 & 0 & 0 &  & 0 & {\color{blue}{\large $\boldsymbol{-}$}} & {\color{blue}{\large $\boldsymbol{-}$}} &  & 0 & {\color{red}{\large $\boldsymbol{+}$}} & {\color{red}{\large $\boldsymbol{+}$}} \\
1930-11-14 & An assassination attempt was made on Prime Minister Hamaguchi. & Politics & 0 & 0 & 0 &  & {\color{red}{\large $\boldsymbol{+}$}} & 0 & 0 &  & {\color{blue}{\large $\boldsymbol{-}$}} & {\color{blue}{\large $\boldsymbol{-}$}} & {\color{blue}{\large $\boldsymbol{-}$}} &  & {\color{red}{\large $\boldsymbol{+}$}} & {\color{red}{\large $\boldsymbol{+}$}} & {\color{red}{\large $\boldsymbol{+}$}} \\
1931-03-20 & The March Incident (a coup d'\'etat attempt in March) took place. & Politics & 0 & {\color{red}{\large $\boldsymbol{+}$}} & {\color{red}{\large $\boldsymbol{+}$}} &  & 0 & 0 & 0 &  & 0 & 0 & 0 &  & 0 & 0 & 0 \\
1931-09-18 & Mukden Incident occurred. & War & 0 & 0 & 0 &  & {\color{blue}{\large $\boldsymbol{-}$}} & 0 & 0 &  & 0 & 0 & 0 &  & 0 & 0 & 0 \\
1931-10-21 & The October Incident (a coup d'\'etat attempt in October) took place. & Politics & 0 & 0 & 0 &  & 0 & 0 & 0 &  & 0 & 0 & 0 &  & 0 & 0 & 0 \\
1931-12-13 & The Japanese government reimposed the ban on gold exports. & Market & {\color{red}{\large $\boldsymbol{+}$}} & {\color{red}{\large $\boldsymbol{+}$}} & {\color{red}{\large $\boldsymbol{+}$}} &  & {\color{red}{\large $\boldsymbol{+}$}} & {\color{red}{\large $\boldsymbol{+}$}} & {\color{red}{\large $\boldsymbol{+}$}} &  & {\color{red}{\large $\boldsymbol{+}$}} & 0 & 0 &  & {\color{blue}{\large $\boldsymbol{-}$}} & 0 & 0 \\
1932-01-28 & First Shanghai Incident broke out. & War & 0 & 0 & 0 &  & {\color{blue}{\large $\boldsymbol{-}$}} & 0 & 0 &  & {\color{red}{\large $\boldsymbol{+}$}} & {\color{red}{\large $\boldsymbol{+}$}} & {\color{red}{\large $\boldsymbol{+}$}} &  & {\color{blue}{\large $\boldsymbol{-}$}} & {\color{blue}{\large $\boldsymbol{-}$}} & {\color{blue}{\large $\boldsymbol{-}$}} \\
1932-02-09 & The League of Blood Incident occurred. & Politics & 0 & 0 & 0 &  & 0 & 0 & 0 &  & 0 & {\color{red}{\large $\boldsymbol{+}$}} & {\color{red}{\large $\boldsymbol{+}$}} &  & 0 & 0 & 0 \\
1932-03-01 & Japan proclaimed the establishment of Manchukuo. & Politics & 0 & 0 & 0 &  & 0 & 0 & 0 &  & 0 & 0 & 0 &  & 0 & 0 & 0 \\
1932-05-15 & The May 15 Incident occurred. & Politics & 0 & 0 & 0 &  & 0 & 0 & 0 &  & 0 & 0 & 0 &  & 0 & 0 & 0 \\
1932-11-25 & The Bank of Japan began underwriting government deficit bonds. & Market & 0 & 0 & 0 &  & 0 & 0 & 0 &  & {\color{red}{\large $\boldsymbol{+}$}} & {\color{red}{\large $\boldsymbol{+}$}} & {\color{red}{\large $\boldsymbol{+}$}} &  & {\color{blue}{\large $\boldsymbol{-}$}} & {\color{blue}{\large $\boldsymbol{-}$}} & {\color{blue}{\large $\boldsymbol{-}$}} \\
1933-02-24 & The League of Nations passed a resolution condemning Japan's actions in Manchuria. & War & {\color{blue}{\large $\boldsymbol{-}$}} & 0 & 0 &  & 0 & 0 & 0 &  & 0 & 0 & 0 &  & 0 & 0 & 0 \\
1933-03-29 & The Foreign Exchange Control Law was enacted. & Regulations & 0 & 0 & 0 &  & 0 & 0 & 0 &  & 0 & 0 & 0 &  & 0 & 0 & 0 \\
1933-05-31 & Japan and China signed the Tanggu Truce Agreement. & War & {\color{red}{\large $\boldsymbol{+}$}} & {\color{red}{\large $\boldsymbol{+}$}} & {\color{red}{\large $\boldsymbol{+}$}} &  & 0 & 0 & 0 &  & 0 & 0 & 0 &  & 0 & 0 & 0 \\
1934-11-20 & The Military Academy Incident occurred. & Politics & 0 & {\color{red}{\large $\boldsymbol{+}$}} & {\color{red}{\large $\boldsymbol{+}$}} &  & 0 & 0 & 0 &  & 0 & 0 & 0 &  & 0 & 0 & 0 \\
1934-12-29 & Japan gave notice of its withdrawal from the Washington Naval Treaty. & War & 0 & 0 & 0 &  & 0 & 0 & 0 &  & 0 & {\color{red}{\large $\boldsymbol{+}$}} & {\color{red}{\large $\boldsymbol{+}$}} &  & 0 & {\color{blue}{\large $\boldsymbol{-}$}} & {\color{blue}{\large $\boldsymbol{-}$}} \\
1935-08-12 & The Aizawa Incident occurred. & Politics & 0 & 0 & 0 &  & 0 & 0 & 0 &  & 0 & 0 & 0 &  & 0 & 0 & 0 \\
1936-01-15 & Japan withdrew from the Second London Naval Disarmament Conference. & War & 0 & 0 & 0 &  & 0 & 0 & 0 &  & 0 & 0 & 0 &  & 0 & 0 & 0 \\
1936-02-26 & The February 26 Incident occurred. & Politics & {\color{red}{\large $\boldsymbol{+}$}} & {\color{red}{\large $\boldsymbol{+}$}} & {\color{red}{\large $\boldsymbol{+}$}} &  & 0 & 0 & 0 &  & 0 & 0 & 0 &  & {\color{blue}{\large $\boldsymbol{-}$}} & {\color{blue}{\large $\boldsymbol{-}$}} & {\color{blue}{\large $\boldsymbol{-}$}} \\
1936-05-18 & The Active-Duty Military Ministers System was restored. & Politics & 0 & 0 & 0 &  & 0 & 0 & 0 &  & 0 & 0 & 0 &  & 0 & 0 & 0 \\
1936-08-07 & The government adopted the Fundamentals of National Policy guidelines. & Regulations & 0 & 0 & 0 &  & 0 & 0 & 0 &  & 0 & 0 & 0 &  & 0 & 0 & 0 \\
1936-11-25 & Japan concluded the Anti-Comintern Pact with Germany. & War & 0 & {\color{blue}{\large $\boldsymbol{-}$}} & {\color{blue}{\large $\boldsymbol{-}$}} &  & 0 & 0 & 0 &  & 0 & 0 & 0 &  & 0 & 0 & 0 \\\hdashline
1937-07-07 & The Marco Polo Bridge Incident occurred. & War & 0 & 0 & 0 &  & 0 & 0 & 0 &  & 0 & 0 & 0 &  & 0 & 0 & 0 \\
1937-09-08 & The Temporary Funds Adjustment Law was approved. & Regulations & 0 & 0 & 0 &  & 0 & 0 & 0 &  & 0 & 0 & 0 &  & 0 & 0 & 0 \\
1938-03-24 & National Mobilization Law approved. & Regulations & 0 & 0 & 0 &  & 0 & 0 & 0 &  & 0 & 0 & 0 &  & 0 & 0 & 0 \\
1939-04-01 & Company Profit Dividends and Fund Accommodation Ordinance was promulgated and enforced. & Regulations & 0 & 0 & 0 &  & 0 & 0 & 0 &  & {\color{blue}{\large $\boldsymbol{-}$}} & 0 & 0 &  & {\color{red}{\large $\boldsymbol{+}$}} & 0 & 0 \\
1939-09-01 & World War I\hspace{-1.2pt}I broke out. & War & {\color{red}{\large $\boldsymbol{+}$}} & 0 & 0 &  & {\color{red}{\large $\boldsymbol{+}$}} & 0 & 0 &  & {\color{blue}{\large $\boldsymbol{-}$}} & {\color{blue}{\large $\boldsymbol{-}$}} & {\color{blue}{\large $\boldsymbol{-}$}} &  & {\color{red}{\large $\boldsymbol{+}$}} & {\color{red}{\large $\boldsymbol{+}$}} & {\color{red}{\large $\boldsymbol{+}$}} \\
1940-06-22 & The Second Armistice at Compi\`egne was signed. & War & {\color{blue}{\large $\boldsymbol{-}$}} & {\color{blue}{\large $\boldsymbol{-}$}} & {\color{blue}{\large $\boldsymbol{-}$}} &  & 0 & 0 & 0 &  & 0 & 0 & 0 &  & 0 & {\color{red}{\large $\boldsymbol{+}$}} & {\color{red}{\large $\boldsymbol{+}$}} \\
1940-09-27 & The Tripartite Pact was signed. & War & 0 & 0 & 0 &  & 0 & 0 & 0 &  & 0 & {\color{blue}{\large $\boldsymbol{-}$}} & {\color{blue}{\large $\boldsymbol{-}$}} &  & 0 & {\color{red}{\large $\boldsymbol{+}$}} & {\color{red}{\large $\boldsymbol{+}$}} \\
1940-10-19 & Company Accounting Control Ordinance was promulgated and enforced. & Regulations & 0 & 0 & 0 &  & 0 & 0 & 0 &  & 0 & 0 & 0 &  & 0 & 0 & 0 \\
1941-04-13 & Soviet--Japanese Neutrality Pact was signed. & War & 0 & 0 & 0 &  & 0 & 0 & 0 &  & 0 & 0 & 0 &  & 0 & 0 & 0 \\
1941-07-26 & The U.S. freezing of Japanese assets and subsequent oil embargo. & War & 0 & 0 & 0 &  & 0 & 0 & 0 &  & {\color{red}{\large $\boldsymbol{+}$}} & 0 & 0 &  & {\color{blue}{\large $\boldsymbol{-}$}} & {\color{blue}{\large $\boldsymbol{-}$}} & {\color{blue}{\large $\boldsymbol{-}$}} \\
1941-08-30 & Stock Price Control Ordinance was promulgated and enforced & Regulations & 0 & 0 & 0 &  & 0 & 0 & 0 &  & 0 & 0 & 0 &  & 0 & 0 & 0 \\
1941-12-08 & The Pacific War broke out. & War & 0 & 0 & 0 &  & 0 & 0 & 0 &  & 0 & 0 & 0 &  & {\color{red}{\large $\boldsymbol{+}$}} & {\color{red}{\large $\boldsymbol{+}$}} & {\color{red}{\large $\boldsymbol{+}$}} \\
1942-02-12 & The Wartime Finance Bank Law was approved. & Regulations & 0 & 0 & 0 &  & 0 & 0 & 0 &  & 0 & 0 & 0 &  & 0 & 0 & 0 \\
1942-04-18 & The first air raid on the Japanese mainland. & War & 0 & 0 & 0 &  & 0 & 0 & 0 &  & 0 & 0 & 0 &  & 0 & 0 & 0 \\
1942-06-05 & Battle of Midway broke out. & War & 0 & {\color{blue}{\large $\boldsymbol{-}$}} & {\color{blue}{\large $\boldsymbol{-}$}} &  & 0 & 0 & 0 &  & 0 & 0 & 0 &  & 0 & {\color{red}{\large $\boldsymbol{+}$}} & {\color{red}{\large $\boldsymbol{+}$}} \\
1942-11-15 & The Ministry of Finance pivoted to stricter capital controls. & Market & {\color{blue}{\large $\boldsymbol{-}$}} & 0 & 0 &  & {\color{red}{\large $\boldsymbol{+}$}} & 0 & 0 &  & 0 & 0 & 0 &  & {\color{red}{\large $\boldsymbol{+}$}} & 0 & 0 \\
1943-02-09 & The Fall of Guadalcanal. & War & 0 & 0 & 0 &  & 0 & {\color{red}{\large $\boldsymbol{+}$}} & {\color{red}{\large $\boldsymbol{+}$}} &  & 0 & 0 & 0 &  & 0 & 0 & 0 \\
1943-03-31 & The Tokyo Stock Exchange delisted short-term clearing futures transactions. & Market & 0 & 0 & 0 &  & 0 & 0 & 0 &  & 0 & 0 & 0 &  & 0 & 0 & 0 \\
1943-06-30 & The Japanese Stock Exchange was founded. & Market & 0 & 0 & 0 &  & 0 & 0 & 0 &  & 0 & 0 & 0 &  & 0 & 0 & 0 \\\hline\hline
\end{tabular}}
{\resizebox{23cm}{!}{\begin{minipage}{800pt}\scriptsize
\vspace*{5pt}
{\underline{Notes:}}
 \begin{itemize}
  \item[(1)] The symbols ``{\color{red}{$\boldsymbol{+}$}}'', ``{\color{blue}{$\boldsymbol{-}$}}'', and ``$0$'' denote statistically significant positive, significant negative, and insignificant AR, CAR, or SCAR at time $t=0$, respectively.
  \item[(2)] R version 4.5.3 was used to compute the estimates.
 \end{itemize}
\end{minipage}}}%
\end{center}
\end{table}
\end{landscape}

\clearpage

\begin{landscape}
\begin{table}[p]
\caption{Event Study Results at $t=0$ (API)}
\label{age_war_tab8}
\begin{center}\resizebox{23cm}{!}{
\begin{tabular}{lp{18cm}cccccccccccccccccc}\hline\hline
 &  &  & \multicolumn{7}{c}{$Zaibatsu$} &  & \multicolumn{7}{c}{$Non\text{--}zaibatsu$} \\\cline{4-19}
 &  &  & \multicolumn{3}{c}{$Military$} &  & \multicolumn{3}{c}{$Non\text{--}military$} &  & \multicolumn{3}{c}{$Military$} &  & \multicolumn{3}{c}{$Non\text{--}military$} \\\cline{4-6}\cline{8-10}\cline{12-14}\cline{16-18}
Event Day & Event Name & Event Type & $AR$ & $CAR$ & $SCAR$ &  & $AR$ & $CAR$ & $SCAR$ &  & $AR$ & $CAR$ & $SCAR$ &  & $AR$ & $CAR$ & $SCAR$\\\hline
1930-10-07 & Life Insurance Securities Corporation was established. & Market & 0 & 0 & 0 &  & 0 & 0 & 0 &  & 0 & {\color{blue}{\large $\boldsymbol{-}$}} & {\color{blue}{\large $\boldsymbol{-}$}} &  & 0 & {\color{red}{\large $\boldsymbol{+}$}} & {\color{red}{\large $\boldsymbol{+}$}} \\
1930-11-14 & An assassination attempt was made on Prime Minister Hamaguchi. & Politics & 0 & 0 & 0 &  & {\color{red}{\large $\boldsymbol{+}$}} & 0 & 0 &  & {\color{blue}{\large $\boldsymbol{-}$}} & {\color{blue}{\large $\boldsymbol{-}$}} & {\color{blue}{\large $\boldsymbol{-}$}} &  & {\color{red}{\large $\boldsymbol{+}$}} & {\color{red}{\large $\boldsymbol{+}$}} & {\color{red}{\large $\boldsymbol{+}$}} \\
1931-03-20 & The March Incident (a coup d'\'etat attempt in March) took place. & Politics & 0 & {\color{red}{\large $\boldsymbol{+}$}} & {\color{red}{\large $\boldsymbol{+}$}} &  & 0 & 0 & 0 &  & 0 & 0 & 0 &  & 0 & 0 & 0 \\
1931-09-18 & Mukden Incident occurred. & War & 0 & 0 & 0 &  & {\color{blue}{\large $\boldsymbol{-}$}} & 0 & 0 &  & 0 & 0 & 0 &  & 0 & 0 & 0 \\
1931-10-21 & The October Incident (a coup d'\'etat attempt in October) took place. & Politics & 0 & 0 & 0 &  & 0 & 0 & 0 &  & 0 & 0 & 0 &  & 0 & 0 & 0 \\
1931-12-13 & The Japanese government reimposed the ban on gold exports. & Market & {\color{red}{\large $\boldsymbol{+}$}} & {\color{red}{\large $\boldsymbol{+}$}} & {\color{red}{\large $\boldsymbol{+}$}} &  & {\color{red}{\large $\boldsymbol{+}$}} & {\color{red}{\large $\boldsymbol{+}$}} & {\color{red}{\large $\boldsymbol{+}$}} &  & {\color{red}{\large $\boldsymbol{+}$}} & 0 & 0 &  & {\color{blue}{\large $\boldsymbol{-}$}} & 0 & 0 \\
1932-01-28 & First Shanghai Incident broke out. & War & 0 & 0 & 0 &  & {\color{blue}{\large $\boldsymbol{-}$}} & 0 & 0 &  & {\color{red}{\large $\boldsymbol{+}$}} & {\color{red}{\large $\boldsymbol{+}$}} & {\color{red}{\large $\boldsymbol{+}$}} &  & {\color{blue}{\large $\boldsymbol{-}$}} & {\color{blue}{\large $\boldsymbol{-}$}} & {\color{blue}{\large $\boldsymbol{-}$}} \\
1932-02-09 & The League of Blood Incident occurred. & Politics & 0 & 0 & 0 &  & 0 & 0 & 0 &  & 0 & {\color{red}{\large $\boldsymbol{+}$}} & {\color{red}{\large $\boldsymbol{+}$}} &  & 0 & 0 & 0 \\
1932-03-01 & Japan proclaimed the establishment of Manchukuo. & Politics & 0 & 0 & 0 &  & 0 & 0 & 0 &  & 0 & 0 & 0 &  & 0 & 0 & 0 \\
1932-05-15 & The May 15 Incident occurred. & Politics & 0 & 0 & 0 &  & 0 & 0 & 0 &  & 0 & 0 & 0 &  & 0 & 0 & 0 \\
1932-11-25 & The Bank of Japan began underwriting government deficit bonds. & Market & 0 & 0 & 0 &  & 0 & 0 & 0 &  & {\color{red}{\large $\boldsymbol{+}$}} & {\color{red}{\large $\boldsymbol{+}$}} & {\color{red}{\large $\boldsymbol{+}$}} &  & {\color{blue}{\large $\boldsymbol{-}$}} & {\color{blue}{\large $\boldsymbol{-}$}} & {\color{blue}{\large $\boldsymbol{-}$}} \\
1933-02-24 & The League of Nations passed a resolution condemning Japan's actions in Manchuria. & War & {\color{blue}{\large $\boldsymbol{-}$}} & 0 & 0 &  & 0 & 0 & 0 &  & 0 & 0 & 0 &  & 0 & 0 & 0 \\
1933-03-29 & The Foreign Exchange Control Law was enacted. & Regulations & 0 & 0 & 0 &  & 0 & 0 & 0 &  & 0 & 0 & 0 &  & 0 & 0 & 0 \\
1933-05-31 & Japan and China signed the Tanggu Truce Agreement. & War & {\color{red}{\large $\boldsymbol{+}$}} & {\color{red}{\large $\boldsymbol{+}$}} & {\color{red}{\large $\boldsymbol{+}$}} &  & 0 & 0 & 0 &  & 0 & 0 & 0 &  & 0 & 0 & 0 \\
1934-11-20 & The Military Academy Incident occurred. & Politics & 0 & {\color{red}{\large $\boldsymbol{+}$}} & {\color{red}{\large $\boldsymbol{+}$}} &  & 0 & 0 & 0 &  & 0 & 0 & 0 &  & 0 & 0 & 0 \\
1934-12-29 & Japan gave notice of its withdrawal from the Washington Naval Treaty. & War & 0 & 0 & 0 &  & {\color{blue}{\large $\boldsymbol{-}$}} & 0 & 0 &  & {\color{red}{\large $\boldsymbol{+}$}} & {\color{red}{\large $\boldsymbol{+}$}} & {\color{red}{\large $\boldsymbol{+}$}} &  & 0 & {\color{blue}{\large $\boldsymbol{-}$}} & {\color{blue}{\large $\boldsymbol{-}$}} \\
1935-08-12 & The Aizawa Incident occurred. & Politics & 0 & 0 & 0 &  & 0 & 0 & 0 &  & 0 & 0 & 0 &  & 0 & 0 & 0 \\
1936-01-15 & Japan withdrew from the Second London Naval Disarmament Conference. & War & 0 & 0 & 0 &  & 0 & 0 & 0 &  & 0 & 0 & 0 &  & 0 & 0 & 0 \\
1936-02-26 & The February 26 Incident occurred. & Politics & {\color{red}{\large $\boldsymbol{+}$}} & {\color{red}{\large $\boldsymbol{+}$}} & {\color{red}{\large $\boldsymbol{+}$}} &  & {\color{red}{\large $\boldsymbol{+}$}} & 0 & 0 &  & {\color{blue}{\large $\boldsymbol{-}$}} & 0 & 0 &  & {\color{blue}{\large $\boldsymbol{-}$}} & {\color{blue}{\large $\boldsymbol{-}$}} & {\color{blue}{\large $\boldsymbol{-}$}} \\
1936-05-18 & The Active-Duty Military Ministers System was restored. & Politics & 0 & 0 & 0 &  & 0 & 0 & 0 &  & 0 & 0 & 0 &  & 0 & 0 & 0 \\
1936-08-07 & The government adopted the Fundamentals of National Policy guidelines. & Regulations & 0 & 0 & 0 &  & 0 & 0 & 0 &  & 0 & 0 & 0 &  & 0 & 0 & 0 \\
1936-11-25 & Japan concluded the Anti-Comintern Pact with Germany. & War & 0 & {\color{blue}{\large $\boldsymbol{-}$}} & {\color{blue}{\large $\boldsymbol{-}$}} &  & 0 & 0 & 0 &  & 0 & 0 & 0 &  & 0 & 0 & 0 \\\hdashline
1937-07-07 & Marco Polo Bridge Incident occured. & War & 0 & 0 & 0 &  & 0 & 0 & 0 &  & 0 & 0 & 0 &  & 0 & 0 & 0 \\
1937-09-08 & The Temporary Fund Adjustment Law approved. & Regulations & 0 & 0 & 0 &  & 0 & 0 & 0 &  & 0 & 0 & 0 &  & 0 & 0 & 0 \\
1938-03-24 & National Mobilization Law approved. & Regulations & 0 & 0 & 0 &  & 0 & 0 & 0 &  & 0 & 0 & 0 &  & 0 & 0 & 0 \\
1939-04-01 & Company Profit Dividends and Fund Accommodation Ordinance was promulgated and enforced. & Regulations & 0 & 0 & 0 &  & 0 & 0 & 0 &  & {\color{blue}{\large $\boldsymbol{-}$}} & 0 & 0 &  & {\color{red}{\large $\boldsymbol{+}$}} & 0 & 0 \\
1939-09-01 & World War I\hspace{-1.2pt}I broke out. & War & {\color{red}{\large $\boldsymbol{+}$}} & 0 & 0 &  & {\color{red}{\large $\boldsymbol{+}$}} & 0 & 0 &  & {\color{blue}{\large $\boldsymbol{-}$}} & {\color{blue}{\large $\boldsymbol{-}$}} & {\color{blue}{\large $\boldsymbol{-}$}} &  & {\color{red}{\large $\boldsymbol{+}$}} & {\color{red}{\large $\boldsymbol{+}$}} & {\color{red}{\large $\boldsymbol{+}$}} \\
1940-06-22 & The Second Armistice at Compiegne was signed. & War & {\color{blue}{\large $\boldsymbol{-}$}} & {\color{blue}{\large $\boldsymbol{-}$}} & {\color{blue}{\large $\boldsymbol{-}$}} &  & 0 & 0 & 0 &  & 0 & 0 & 0 &  & 0 & {\color{red}{\large $\boldsymbol{+}$}} & {\color{red}{\large $\boldsymbol{+}$}} \\
1940-09-27 & Tripartite Pact was signed & War & 0 & 0 & 0 &  & 0 & 0 & 0 &  & 0 & {\color{blue}{\large $\boldsymbol{-}$}} & {\color{blue}{\large $\boldsymbol{-}$}} &  & 0 & 0 & 0 \\
1940-10-19 & Company Accounting Control Ordinance was promulgated and enforced. & Regulations & 0 & 0 & 0 &  & 0 & 0 & 0 &  & 0 & 0 & 0 &  & 0 & 0 & 0 \\
1941-04-13 & Soviet--Japanese Neutrality Pact was signed. & War & 0 & 0 & 0 &  & 0 & 0 & 0 &  & 0 & 0 & 0 &  & 0 & 0 & 0 \\
1941-07-26 & The U.S. freezing of Japanese assets and subsequent oil embargo. & War & 0 & 0 & 0 &  & 0 & 0 & 0 &  & {\color{red}{\large $\boldsymbol{+}$}} & 0 & 0 &  & {\color{blue}{\large $\boldsymbol{-}$}} & {\color{blue}{\large $\boldsymbol{-}$}} & {\color{blue}{\large $\boldsymbol{-}$}} \\
1941-08-30 & Stock Price Control Ordinance was promulgated and enforced & Regulations & 0 & 0 & 0 &  & 0 & 0 & 0 &  & 0 & 0 & 0 &  & 0 & 0 & 0 \\
1941-12-08 & The Pacific War broke out. & War & 0 & 0 & 0 &  & 0 & 0 & 0 &  & 0 & 0 & 0 &  & {\color{red}{\large $\boldsymbol{+}$}} & {\color{red}{\large $\boldsymbol{+}$}} & {\color{red}{\large $\boldsymbol{+}$}} \\
1942-02-12 & Wartime Finance Bank Law aprroved. & Regulations & 0 & 0 & 0 &  & 0 & 0 & 0 &  & 0 & 0 & 0 &  & 0 & 0 & 0 \\
1942-04-18 & The first air raid on the Japanese mainland. & War & 0 & 0 & 0 &  & 0 & 0 & 0 &  & 0 & 0 & 0 &  & 0 & 0 & 0 \\
1942-06-05 & Battle of Midway broke out. & War & 0 & 0 & 0 &  & 0 & 0 & 0 &  & 0 & 0 & 0 &  & 0 & 0 & 0 \\
1942-11-15 & The Ministry of Finance pivoted to stricter capital controls. & Market & {\color{blue}{\large $\boldsymbol{-}$}} & 0 & 0 &  & {\color{red}{\large $\boldsymbol{+}$}} & 0 & 0 &  & 0 & 0 & 0 &  & {\color{red}{\large $\boldsymbol{+}$}} & 0 & 0 \\
1943-02-09 & The Fall of Guadalcanal. & War & 0 & 0 & 0 &  & 0 & {\color{red}{\large $\boldsymbol{+}$}} & {\color{red}{\large $\boldsymbol{+}$}} &  & 0 & 0 & 0 &  & 0 & 0 & 0 \\
1943-03-31 & The Tokyo Stock Exchange delisted short-term clearing futures transactions. & Market & 0 & 0 & 0 &  & 0 & 0 & 0 &  & 0 & 0 & 0 &  & 0 & 0 & 0 \\
1943-06-30 & The Japanese Stock Exchange was founded. & Market & 0 & 0 & 0 &  & 0 & 0 & 0 &  & 0 & 0 & 0 &  & 0 & 0 & 0 \\\hline\hline
\end{tabular}}
{\resizebox{23cm}{!}{\begin{minipage}{800pt}\scriptsize
\vspace*{5pt}
{\underline{Notes:}} As for Table \ref{age_war_tab7}.
\end{minipage}}}%
\end{center}
\end{table}
\end{landscape}

\clearpage

\begin{landscape}
\begin{table}[p]
\caption{Event Study Results at $t=0$ (TRI)}
\label{age_war_tab9}
\begin{center}\resizebox{23cm}{!}{
\begin{tabular}{lp{18cm}cccccccccccccccccc}\hline\hline
 &  &  & \multicolumn{7}{c}{$Zaibatsu$} &  & \multicolumn{7}{c}{$Non\text{--}zaibatsu$} \\\cline{4-19}
 &  &  & \multicolumn{3}{c}{$Military$} &  & \multicolumn{3}{c}{$Non\text{--}military$} &  & \multicolumn{3}{c}{$Military$} &  & \multicolumn{3}{c}{$Non\text{--}military$} \\\cline{4-6}\cline{8-10}\cline{12-14}\cline{16-18}
Event Day & Event Name & Event Type & $AR$ & $CAR$ & $SCAR$ &  & $AR$ & $CAR$ & $SCAR$ &  & $AR$ & $CAR$ & $SCAR$ &  & $AR$ & $CAR$ & $SCAR$\\\hline
1930-10-07 & Life Insurance Securities Corporation was established. & Market & 0 & 0 & 0 &  & 0 & 0 & 0 &  & 0 & {\color{blue}{\large $\boldsymbol{-}$}} & {\color{blue}{\large $\boldsymbol{-}$}} &  & 0 & 0 & 0 \\
1930-11-14 & An assassination attempt was made on Prime Minister Hamaguchi. & Politics & 0 & 0 & 0 &  & {\color{red}{\large $\boldsymbol{+}$}} & 0 & 0 &  & {\color{blue}{\large $\boldsymbol{-}$}} & {\color{blue}{\large $\boldsymbol{-}$}} & {\color{blue}{\large $\boldsymbol{-}$}} &  & {\color{red}{\large $\boldsymbol{+}$}} & {\color{red}{\large $\boldsymbol{+}$}} & {\color{red}{\large $\boldsymbol{+}$}} \\
1931-03-20 & The March Incident (a coup d'\'etat attempt in March) took place. & Politics & 0 & {\color{red}{\large $\boldsymbol{+}$}} & {\color{red}{\large $\boldsymbol{+}$}} &  & 0 & 0 & 0 &  & 0 & 0 & 0 &  & 0 & 0 & 0 \\
1931-09-18 & Mukden Incident occurred. & War & 0 & 0 & 0 &  & {\color{blue}{\large $\boldsymbol{-}$}} & 0 & 0 &  & 0 & 0 & 0 &  & 0 & 0 & 0 \\
1931-10-21 & The October Incident (a coup d'\'etat attempt in October) took place. & Politics & 0 & 0 & 0 &  & 0 & 0 & 0 &  & 0 & 0 & 0 &  & 0 & 0 & 0 \\
1931-12-13 & The Japanese government reimposed the ban on gold exports. & Market & {\color{red}{\large $\boldsymbol{+}$}} & {\color{red}{\large $\boldsymbol{+}$}} & {\color{red}{\large $\boldsymbol{+}$}} &  & 0 & {\color{red}{\large $\boldsymbol{+}$}} & {\color{red}{\large $\boldsymbol{+}$}} &  & {\color{red}{\large $\boldsymbol{+}$}} & 0 & 0 &  & {\color{blue}{\large $\boldsymbol{-}$}} & 0 & 0 \\
1932-01-28 & First Shanghai Incident broke out. & War & 0 & 0 & 0 &  & {\color{blue}{\large $\boldsymbol{-}$}} & 0 & 0 &  & {\color{red}{\large $\boldsymbol{+}$}} & {\color{red}{\large $\boldsymbol{+}$}} & {\color{red}{\large $\boldsymbol{+}$}} &  & {\color{blue}{\large $\boldsymbol{-}$}} & {\color{blue}{\large $\boldsymbol{-}$}} & {\color{blue}{\large $\boldsymbol{-}$}} \\
1932-02-09 & The League of Blood Incident occurred. & Politics & 0 & 0 & 0 &  & {\color{blue}{\large $\boldsymbol{-}$}} & 0 & 0 &  & 0 & 0 & 0 &  & 0 & 0 & 0 \\
1932-03-01 & Japan proclaimed the establishment of Manchukuo. & Politics & 0 & 0 & 0 &  & {\color{blue}{\large $\boldsymbol{-}$}} & 0 & 0 &  & 0 & 0 & 0 &  & 0 & 0 & 0 \\
1932-05-15 & The May 15 Incident occurred. & Politics & 0 & 0 & 0 &  & {\color{blue}{\large $\boldsymbol{-}$}} & 0 & 0 &  & 0 & 0 & 0 &  & 0 & 0 & 0 \\
1932-11-25 & The Bank of Japan began underwriting government deficit bonds. & Market & 0 & 0 & 0 &  & 0 & 0 & 0 &  & {\color{red}{\large $\boldsymbol{+}$}} & {\color{red}{\large $\boldsymbol{+}$}} & {\color{red}{\large $\boldsymbol{+}$}} &  & {\color{blue}{\large $\boldsymbol{-}$}} & {\color{blue}{\large $\boldsymbol{-}$}} & {\color{blue}{\large $\boldsymbol{-}$}} \\
1933-02-24 & The League of Nations passed a resolution condemning Japan's actions in Manchuria. & War & {\color{blue}{\large $\boldsymbol{-}$}} & 0 & 0 &  & 0 & 0 & 0 &  & 0 & 0 & 0 &  & 0 & 0 & 0 \\
1933-03-29 & The Foreign Exchange Control Law was enacted. & Regulations & 0 & 0 & 0 &  & {\color{blue}{\large $\boldsymbol{-}$}} & 0 & 0 &  & 0 & 0 & 0 &  & 0 & 0 & 0 \\
1933-05-31 & Japan and China signed the Tanggu Truce Agreement. & War & {\color{red}{\large $\boldsymbol{+}$}} & {\color{red}{\large $\boldsymbol{+}$}} & {\color{red}{\large $\boldsymbol{+}$}} &  & 0 & 0 & 0 &  & 0 & 0 & 0 &  & 0 & 0 & 0 \\
1934-11-20 & The Military Academy Incident occurred. & Politics & 0 & {\color{red}{\large $\boldsymbol{+}$}} & {\color{red}{\large $\boldsymbol{+}$}} &  & 0 & 0 & 0 &  & 0 & 0 & 0 &  & 0 & 0 & 0 \\
1934-12-29 & Japan gave notice of its withdrawal from the Washington Naval Treaty. & War & 0 & 0 & 0 &  & {\color{blue}{\large $\boldsymbol{-}$}} & 0 & 0 &  & 0 & 0 & 0 &  & 0 & 0 & 0 \\
1935-08-12 & The Aizawa Incident occurred. & Politics & 0 & 0 & 0 &  & {\color{blue}{\large $\boldsymbol{-}$}} & 0 & 0 &  & 0 & 0 & 0 &  & 0 & 0 & 0 \\
1936-01-15 & Japan withdrew from the Second London Naval Disarmament Conference. & War & 0 & 0 & 0 &  & {\color{blue}{\large $\boldsymbol{-}$}} & 0 & 0 &  & 0 & 0 & 0 &  & 0 & 0 & 0 \\
1936-02-26 & The February 26 Incident occurred. & Politics & {\color{red}{\large $\boldsymbol{+}$}} & {\color{red}{\large $\boldsymbol{+}$}} & {\color{red}{\large $\boldsymbol{+}$}} &  & 0 & 0 & 0 &  & 0 & 0 & 0 &  & {\color{blue}{\large $\boldsymbol{-}$}} & {\color{blue}{\large $\boldsymbol{-}$}} & {\color{blue}{\large $\boldsymbol{-}$}} \\
1936-05-18 & The Active-Duty Military Ministers System was restored. & Politics & 0 & 0 & 0 &  & {\color{blue}{\large $\boldsymbol{-}$}} & 0 & 0 &  & 0 & 0 & 0 &  & 0 & 0 & 0 \\
1936-08-07 & The government adopted the Fundamentals of National Policy guidelines. & Regulations & 0 & 0 & 0 &  & {\color{blue}{\large $\boldsymbol{-}$}} & 0 & 0 &  & 0 & 0 & 0 &  & 0 & 0 & 0 \\
1936-11-25 & Japan concluded the Anti-Comintern Pact with Germany. & War & 0 & {\color{blue}{\large $\boldsymbol{-}$}} & {\color{blue}{\large $\boldsymbol{-}$}} &  & 0 & 0 & 0 &  & 0 & 0 & 0 &  & 0 & 0 & 0 \\\hdashline
1937-07-07 & Marco Polo Bridge Incident occured. & War & 0 & 0 & 0 &  & 0 & 0 & 0 &  & 0 & 0 & 0 &  & 0 & 0 & 0 \\
1937-09-08 & The Temporary Fund Adjustment Law approved. & Regulations & 0 & 0 & 0 &  & 0 & 0 & 0 &  & 0 & 0 & 0 &  & 0 & 0 & 0 \\
1938-03-24 & National Mobilization Law approved. & Regulations & 0 & 0 & 0 &  & 0 & 0 & 0 &  & 0 & 0 & 0 &  & 0 & 0 & 0 \\
1939-04-01 & Company Profit Dividends and Fund Accommodation Ordinance was promulgated and enforced. & Regulations & 0 & 0 & 0 &  & 0 & 0 & 0 &  & {\color{blue}{\large $\boldsymbol{-}$}} & 0 & 0 &  & {\color{red}{\large $\boldsymbol{+}$}} & 0 & 0 \\
1939-09-01 & World War I\hspace{-1.2pt}I broke out. & War & {\color{red}{\large $\boldsymbol{+}$}} & 0 & 0 &  & {\color{red}{\large $\boldsymbol{+}$}} & {\color{red}{\large $\boldsymbol{+}$}} & {\color{red}{\large $\boldsymbol{+}$}} &  & {\color{blue}{\large $\boldsymbol{-}$}} & {\color{blue}{\large $\boldsymbol{-}$}} & {\color{blue}{\large $\boldsymbol{-}$}} &  & {\color{red}{\large $\boldsymbol{+}$}} & {\color{red}{\large $\boldsymbol{+}$}} & {\color{red}{\large $\boldsymbol{+}$}} \\
1940-06-22 & The Second Armistice at Compiegne was signed. & War & {\color{blue}{\large $\boldsymbol{-}$}} & {\color{blue}{\large $\boldsymbol{-}$}} & {\color{blue}{\large $\boldsymbol{-}$}} &  & 0 & 0 & 0 &  & 0 & 0 & 0 &  & 0 & {\color{red}{\large $\boldsymbol{+}$}} & {\color{red}{\large $\boldsymbol{+}$}} \\
1940-09-27 & Tripartite Pact was signed & War & 0 & {\color{red}{\large $\boldsymbol{+}$}} & {\color{red}{\large $\boldsymbol{+}$}} &  & 0 & 0 & 0 &  & 0 & {\color{blue}{\large $\boldsymbol{-}$}} & {\color{blue}{\large $\boldsymbol{-}$}} &  & 0 & 0 & 0 \\
1940-10-19 & Company Accounting Control Ordinance was promulgated and enforced. & Regulations & 0 & 0 & 0 &  & 0 & 0 & 0 &  & 0 & 0 & 0 &  & 0 & 0 & 0 \\
1941-04-13 & Soviet--Japanese Neutrality Pact was signed. & War & 0 & 0 & 0 &  & 0 & 0 & 0 &  & 0 & 0 & 0 &  & 0 & 0 & 0 \\
1941-07-26 & The U.S. freezing of Japanese assets and subsequent oil embargo. & War & 0 & 0 & 0 &  & 0 & 0 & 0 &  & {\color{red}{\large $\boldsymbol{+}$}} & 0 & 0 &  & {\color{blue}{\large $\boldsymbol{-}$}} & {\color{blue}{\large $\boldsymbol{-}$}} & {\color{blue}{\large $\boldsymbol{-}$}} \\
1941-08-30 & Stock Price Control Ordinance was promulgated and enforced & Regulations & 0 & 0 & 0 &  & 0 & 0 & 0 &  & 0 & 0 & 0 &  & 0 & 0 & 0 \\
1941-12-08 & The Pacific War broke out. & War & 0 & 0 & 0 &  & 0 & 0 & 0 &  & 0 & 0 & 0 &  & {\color{red}{\large $\boldsymbol{+}$}} & {\color{red}{\large $\boldsymbol{+}$}} & {\color{red}{\large $\boldsymbol{+}$}} \\
1942-02-12 & Wartime Finance Bank Law aprroved. & Regulations & 0 & 0 & 0 &  & 0 & 0 & 0 &  & 0 & 0 & 0 &  & 0 & 0 & 0 \\
1942-04-18 & The first air raid on the Japanese mainland. & War & 0 & 0 & 0 &  & 0 & 0 & 0 &  & 0 & 0 & 0 &  & 0 & 0 & 0 \\
1942-06-05 & Battle of Midway broke out. & War & 0 & 0 & 0 &  & 0 & 0 & 0 &  & 0 & {\color{blue}{\large $\boldsymbol{-}$}} & {\color{blue}{\large $\boldsymbol{-}$}} &  & 0 & 0 & 0 \\
1942-11-15 & The Ministry of Finance pivoted to stricter capital controls. & Market & {\color{blue}{\large $\boldsymbol{-}$}} & 0 & 0 &  & {\color{red}{\large $\boldsymbol{+}$}} & 0 & 0 &  & 0 & 0 & 0 &  & {\color{red}{\large $\boldsymbol{+}$}} & 0 & 0 \\
1943-02-09 & The Fall of Guadalcanal. & War & 0 & 0 & 0 &  & 0 & {\color{red}{\large $\boldsymbol{+}$}} & {\color{red}{\large $\boldsymbol{+}$}} &  & 0 & 0 & 0 &  & 0 & 0 & 0 \\
1943-03-31 & The Tokyo Stock Exchange delisted short-term clearing futures transactions. & Market & 0 & 0 & 0 &  & 0 & 0 & 0 &  & 0 & 0 & 0 &  & 0 & 0 & 0 \\
1943-06-30 & The Japanese Stock Exchange was founded. & Market & 0 & 0 & 0 &  & 0 & 0 & 0 &  & 0 & 0 & 0 &  & 0 & 0 & 0 \\\hline\hline
\end{tabular}}
{\resizebox{23cm}{!}{\begin{minipage}{800pt}\scriptsize
\vspace*{5pt}
{\underline{Notes:}} As for Table \ref{age_war_tab7}.
\end{minipage}}}%
\end{center}
\end{table}
\end{landscape}